\documentclass[10pt,journal]{IEEEtran}

\ifCLASSOPTIONcompsoc

\usepackage[nocompress]{cite}
\else

\usepackage{cite}
\fi

\ifCLASSINFOpdf

\else

\fi

\usepackage{subfigure}
\usepackage{graphicx}
\usepackage{balance}
\usepackage{amsmath,amsthm, amssymb}
\usepackage{amssymb,wasysym}
\usepackage{amsfonts,balance}
\usepackage{mathrsfs}
\usepackage{cite}
\usepackage{color}
\usepackage{url}
\usepackage{bm}
\usepackage{wrapfig}
\usepackage{verbatim}
\usepackage{dsfont}
\usepackage{amsmath}
\usepackage{algorithm}
\usepackage{algpseudocode}
\usepackage{xurl}
 \usepackage[dvipsnames]{xcolor}
\usepackage{tabularx}
\usepackage{enumitem}
\usepackage{float}

\usepackage[colorlinks=true, linkcolor=black, citecolor=black, urlcolor=black, bookmarks=false]{hyperref}

\theoremstyle{definition}
\newtheorem{theorem}{\textbf{Theorem}}
\newtheorem{proposition}{\textbf{Proposition}}

\newtheorem{corollary}{\textbf{Corollary}}

\newtheorem{assumption}{\textbf{Assumption}}
\newtheorem{remark}{\textbf{Remark}}

\allowdisplaybreaks[3]

\ifodd 1
\else

\fi

\graphicspath{{figures/}}

\begin{document}

	\title{SiftMoE: Similarity-Aware Energy-Efficient Expert Selection for Wireless Distributed MoE Inference}
	
	\author{Qian Chen,~\IEEEmembership{Member,~IEEE}, Xianhao Chen,~\IEEEmembership{Member,~IEEE}, and Kaibin Huang,~\IEEEmembership{Fellow,~IEEE}
		\thanks{Q. Chen, X. Chen, and K. Huang are with the Department of Electrical and
Computer Engineering, The University of Hong Kong, Hong Kong (Email: qchen@eee.hku.hk, xcheneee@hku.hk, and huangkb@hku.hk). Corresponding authors: X. Chen and K. Huang.}
	
	}

	\markboth{Journal of \LaTeX\ Class Files}%
	{Shell \MakeLowercase{\textit{et al.}}: Bare Advanced Demo of IEEEtran.cls for IEEE Computer Society Journals}

	
		\IEEEtitleabstractindextext{
				\begin{abstract}
	Mixture-of-Experts (MoE) architectures leverage sparse activation to enhance the scalability of large language models (LLMs), making them suitable for deployment in resource-constrained edge networks. However, the sheer number of experts often exceeds the memory capacity of individual edge nodes, necessitating wireless distributed MoE (WIDE) inference where experts are spread across multiple edge nodes. In this context, expert selection directly affects communication costs. 
    Motivated by the similarity of experts, we propose SiftMoE, which judiciously selects or skips experts to strike a tradeoff between communication costs and inference accuracy. Specifically, we first establish theoretical bounds on the accuracy degradation resulting from expert replacement or skipping. Based on the bounds, we formulate an energy minimization problem for expert selection in WIDE inference subject to latency and accuracy constraints. In particular, for slow-fading channels, we derive optimal expert selection policies for both single-token decoding and multi-token prefilling. For fast-fading channels, we further extend our scheme to cope with rapidly varying channel conditions. Simulation results demonstrate that SiftMoE significantly reduces energy consumption while maintaining inference accuracy compared with conventional Top-K routing in WIDE systems.

					\end{abstract}

				\begin{IEEEkeywords}
						Large language models, mixture-of-experts, expert selection, edge inference.
				\end{IEEEkeywords}}
	
	\maketitle

	\IEEEdisplaynontitleabstractindextext
	\IEEEpeerreviewmaketitle

\section{Introduction}
Large language models (LLMs) have demonstrated remarkable performance across a range of intelligent applications, including natural language processing, autonomous driving, healthcare, and embodied AI~\cite{zhao2023survey}. Motivated by stringent latency requirements and growing privacy concerns, pushing LLMs from cloud to the network edge is becoming a trend~\cite{10835069}. However, deploying such large models at the edge remains fundamentally challenging due to severe constraints on computation, memory, and energy resources of edge devices~\cite{zhengacmsurvey}.
To alleviate these limitations, Mixture-of-Experts (MoE) architectures have recently emerged as a promising paradigm~\cite{jiang2024mixtral}. In MoE models, each transformer layer replaces the conventional single feed-forward network (FFN) with multiple expert networks, activating only a small subset for each input token.
Due to their sparse activation patterns, MoE models enable substantial performance improvements while maintaining a computational budget comparable to that of conventional dense LLMs~\cite{gao2025mola}. This property makes MoE models particularly appealing for scalable and efficient LLM deployment in resource-constrained edge environments~\cite{11022729}.

Despite their advantages, a primary difficulty in MoE arises from the massive memory footprint incurred by the large number of experts, which often exceeds the memory capacity of a single edge node~\cite{10906629,10494556}. This limitation has motivated wireless distributed MoE (WIDE) frameworks, where experts are placed across multiple edge nodes to enable collaborative MoE inference~\cite{11395617}.
Unlike conventional MoE serving in data centers or GPU clusters, where experts are typically hosted by GPUs or servers connected through high-speed interconnects such as NVLink, InfiniBand, or RDMA, WIDE framework targets resource-constrained edge deployments where experts are distributed across the user device and nearby wireless helpers. Such a setting can exploit distributed computation and memory resources at the edge to support MoE models that cannot be hosted by a single edge node alone, while providing latency and privacy benefits by keeping inference close to the user.
However, once experts are distributed across the edge network, expert activation decisions become tightly coupled with wireless communications, as the selected experts determine the links traversed and hence the associated transmission costs~\cite{11162260,11143883}. Conventional Top-$K$ routing policies, which select $K$ highest-scoring experts per layer based solely on model-internal gating scores~\cite{fang2026hfedmoe}, neglect channel states and may route hidden states to experts over poor-quality links, leading to excessive communication energy consumption and latency. Furthermore, recent empirical studies on MoE inference reveal that Top-$K$ can be suboptimal, as lower-ranked experts can be replaced without much performance degradation~\cite{gupta2024lynx,lu2024not}, and experts within the same layer exhibit substantial functional overlap~\cite{li2024merge,li2025sub}. 
Taken together, these observations indicate that expert activation need not strictly follow conventional Top-$K$ routing and can instead be made more flexible, especially when expert activations incur significantly varied communication costs in distributed systems.

Existing studies have explored expert selection beyond fixed Top-$K$ activation, which can be broadly divided into two categories according to whether wireless communication is explicitly considered. The first category focuses on general MoE inference optimization, where expert selection is mainly used to reduce computation, memory-access, or expert-loading overhead while preserving inference accuracy~\cite{huang-etal-2024-harder,adapmoe_iccad,zhu2025enabling,11022729}. One line of work dynamically skips low-importance experts according to gating scores or sensitivity metrics, thereby avoiding unnecessary computation~\cite{huang-etal-2024-harder,adapmoe_iccad}. Another line targets the substantial CPU-GPU expert-loading overhead in practical MoE serving systems by replacing active experts with functionally similar resident experts or strategically skipping them~\cite{11022729,zhu2025enabling}.
Other related studies investigate quality-of-service (QoS)-aware LLM serving~\cite{imani2025qos}, temporal-aware routing mechanisms~\cite{11364210}, and precision-adaptive resource control schemes for MoE models~\cite{10937027}. Although these methods demonstrate the feasibility of flexible expert skipping and replacement beyond fixed Top-$K$ routing, they do not explicitly consider wireless communication costs or channel conditions. The second category, which is more directly related to our problem, considers expert selection in WIDE inference. WDMoE compares the router-weight vector with the latency vector after Top-$K$ routing and removes a low-weight expert when the similarity falls below a predefined threshold, thereby reducing communication costs~\cite{11083676}. Another work jointly accounts for task relevance and wireless channel states when selecting experts distributed across edge nodes, aiming to minimize transmission energy under layer-wise QoS constraints~\cite{qin2025optimal}. 
Nevertheless, as summarized in Table~\ref{tab:related_work_comparison}, existing expert-selection methods either overlook wireless channel effects or rely on heuristic metrics, such as cosine similarity, Hessian-based sensitivity, and routing scores, without providing theoretical selection-error analysis. Therefore, they remain insufficient for balancing the accuracy-communication tradeoff in WIDE expert selection.

\begin{table*}[t]
\centering
\caption{Comparison with representative expert-selection methods.}
\label{tab:related_work_comparison}
\footnotesize
\setlength{\tabcolsep}{3pt}
\renewcommand{\arraystretch}{1.1}
\resizebox{0.92\textwidth}{!}{
\begin{tabular}{c|cccccc}
\hline
\textbf{Work}
&
\begin{tabular}[c]{@{}c@{}}\textbf{Expert}\\\textbf{replacement}\end{tabular}
&
\begin{tabular}[c]{@{}c@{}}\textbf{Expert}\\\textbf{skipping}\end{tabular}
&
\begin{tabular}[c]{@{}c@{}}\textbf{Functional}\\\textbf{similarity}\end{tabular}
&
\begin{tabular}[c]{@{}c@{}}\textbf{Expert}\\\textbf{importance}\end{tabular}
&
\begin{tabular}[c]{@{}c@{}}\textbf{Wireless}\\\textbf{channel awareness}\end{tabular}
&
\begin{tabular}[c]{@{}c@{}}\textbf{Theoretical}\\\textbf{selection-error analysis}\end{tabular}
\\
\hline
\cite{huang-etal-2024-harder} & -- & $\checkmark$ & -- & $\checkmark$ & -- & -- \\
\cite{adapmoe_iccad}          & -- & $\checkmark$ & -- & $\checkmark$ & -- & -- \\
\cite{zhu2025enabling}        & $\checkmark$ & -- & $\checkmark$ & $\checkmark$ & -- & -- \\
\cite{11022729}               & $\checkmark$ & $\checkmark$ & $\triangle$ & $\checkmark$ & -- & -- \\
\cite{11083676}               & -- & $\checkmark$ & -- & $\checkmark$ & Slow fading & -- \\
\cite{qin2025optimal}         & -- & $\triangle$ & -- & $\checkmark$ & Slow fading & -- \\
\textbf{Proposed SiftMoE}     & $\checkmark$ & $\checkmark$ & $\checkmark$ & $\checkmark$ & Slow/fast fading & $\checkmark$ \\
\hline
\end{tabular}}
\vspace{2pt}

\begin{minipage}{\textwidth}
\small
$\checkmark$: explicitly considered;
$\triangle$: partially or indirectly considered;
--: not considered.
``Slow fading'' means that the small-scale fading variable is assumed to remain unchanged within each MoE layer, while ``fast fading'' means that the small-scale fading variable may vary within an MoE layer.
\end{minipage}
\end{table*}

To overcome the limitations above, this paper aims to answer a fundamental question: \textit{can expert selection be redesigned to strike the optimal balance between wireless communication costs and inference accuracy?} 
To this end, we propose a \underline{si}milarity-aware energy-e\underline{f}ficient expert selec\underline{t}ion framework for wireless distributed \underline{MoE} (SiftMoE) inference, which integrates theoretical selection-error analysis with channel-aware optimization.
Specifically, since selection-induced errors may accumulate and propagate across MoE layers with heterogeneous layer sensitivity~\cite{gupta2024lynx,lo2025closer,huang2025modes}, we first develop a theoretical framework to quantify the effect of expert replacement or skipping on inference accuracy.
Building on this analysis, SiftMoE exploits functional similarity among experts within each MoE layer to determine when experts should be replaced or skipped, thereby enabling energy-efficient inference under wireless fading while preserving accuracy.
The main contributions of this work are summarized as follows.

\begin{itemize}
    \item \textbf{Theoretical error analysis of expert selection}: We establish the \textit{first} theoretical analysis of how expert selection affects output deviations in WIDE inference.
 Our analysis reveals two fundamental insights. First, experts with small routing weights contribute marginally to the output, implying that removing or replacing them causes only limited distortion. Second, expert replacement is beneficial only when a functionally similar alternative exists; otherwise, directly skipping the expert may yield a tighter error bound. These results provide theoretical guidance for expert selection across layers under resource constraints.

    \item \textbf{Optimal expert selection under slow fading:}
Based on the analysis above, we formulate an energy minimization problem over expert selection that accounts for wireless fading, latency constraints, and accuracy guarantees induced by similarity-aware replacement or skipping. For the slow-fading scenario, where channel conditions remain unchanged upon propagating each MoE layer, we design efficient algorithms to obtain optimal expert selection schemes for single-token decoding and multi-token prefilling phases, respectively.

\item \textbf{Expert selection and adaptive transmission under fast fading:} For the fast-fading scenario, where channel states vary within each MoE layer, we reformulate the problem as minimizing the expected energy consumption subject to latency and accuracy constraints. Since jointly optimizing expert selection and transmission is difficult due to stochastic channel variations, we introduce a deterministic surrogate of the expected transmission energy to determine expert selection, whose structure is identical to the slow-fading problem. Then, we employ a dynamic programming (DP) approach to determine the optimal number of transmitted bits in each time slot.

\item \textbf{Experimental results:} We conduct extensive experiments using state-of-the-art MoE models and real-world datasets to evaluate the proposed framework under wireless edge deployment scenarios. Compared with conventional Top-$K$ routing, SiftMoE achieves higher inference accuracy while significantly reducing communication energy consumption under both slow- and fast-fading scenarios.

\end{itemize}

The remainder of this paper is organized as follows.
Section \ref{sec:system_model} presents the WIDE inference architecture together with the communication, computation, and energy models.
Section \ref{sec:error_analysis} establishes theoretical error bounds for expert selection.
Building on these results, Section \ref{sec:slow_fading} studies the energy-efficient expert selection scheme under slow fading.
Then, Section \ref{sec:fast_fading} extends the framework to fast-fading environments and develops a joint expert selection and adaptive transmission scheme.
Section \ref{sec:experiment} presents experimental results and Section \ref{sec:conclusion} concludes the paper.

\section{System Model}\label{sec:system_model}
As shown in Fig. \ref{fig:system_model}, we consider a WIDE inference system including one user and a set of $N-1$ helpers, denoted by $\mathcal{N}=\{1,\ldots,N-1\}$. 
Here, the user denotes the device generating the inference request, such as a mobile or vehicle-mounted device, while the helpers denote nearby edge servers or edge computing nodes assisting WIDE inference.
Let $\mathcal{V} \triangleq \{\mathrm{UE}\}\cup\mathcal{N}$ denote the set of all edge nodes, consisting of the user and all helpers, with $|\mathcal{V}|=N$.
The user generates an inference task composed of $M$ input tokens, indexed by the set $\mathcal{M} = \{1, \ldots, M\}$.
The inference task should be processed by an MoE model to generate the output results. The MoE model has $L$ MoE layers, indexed by $\mathcal{L} = \{1, \ldots, L\}$. Each MoE layer $\ell \in \mathcal{L}$ contains $N$ experts, and the expert set of layer $\ell$ is denoted by $\mathcal{E}^{(\ell)} = \{1^{(\ell)}, \ldots, N^{(\ell)}\}$. 
The experts are distributed across the edge nodes.
Due to GPU memory constraints and parallel processing requirements, we consider edge nodes with a single expert-execution unit, e.g., a single GPU, and assume that each edge node can load and execute at most one expert from each MoE layer at any given time~\cite{11224727,11083676,11162260}. This assumption is practical for both on-demand loading and preloaded serving in resource-constrained scenarios. Under on-demand loading, transferring one Mixtral-8$\times$7B expert of about 336 MB from CPU to GPU memory can take about 27.35 ms on an NVIDIA A100 GPU with PCIe Gen4~\cite{10937027}, whereas transmitting one BF16 hidden state of Mixtral-8$\times$7B with dimension 4096 and returning an expert output of the same dimension take about 6.55 ms and 1.31 ms under representative uplink and downlink rates of 10 Mbps and 50 Mbps, respectively. Thus, loading a single expert may be much slower than the corresponding uplink-downlink transmission of hidden states. Under preloaded serving, the assumption also avoids scheduling and computation contention among the same-layer experts on the same edge node\footnote{Practical deployments may allow one edge node to host multiple experts from the same MoE layer, especially when the node is equipped with multiple GPUs. The proposed SiftMoE framework can be extended to this case by viewing each GPU as a logical helper that hosts one expert. Logical helpers on the same physical edge node share the same wireless link to the user, while still executing their assigned experts separately. Under this interpretation, the current one-to-one expert-node mapping provides a conservative communication model and yields an upper-bound estimate of total energy consumption.}. In addition, different edge nodes host distinct experts within the same layer, so selecting an expert in layer $\ell$ uniquely determines the serving edge node. This one-to-one mapping allows us to focus on the key coupling among expert selection, wireless channel states, and communication costs.

As shown in Fig. \ref{fig:pipeline}, at each MoE layer, the attention and gating networks are first executed at the user to determine which experts are activated. 
This is feasible for resource-constrained users because the gating network is lightweight and the attention module is much smaller than the activated expert networks in each MoE layer. It also helps preserve privacy by keeping the raw input, layer-wise aggregated representations, and final inference results at the user side.
Since attention and gating operations incur a fixed computational workload per layer, their processing latency and energy consumption are treated as constants and excluded from the subsequent analysis.
If a selected expert resides on a helper rather than the user, the hidden states are exchanged between the user and the helper. We use $b$ to denote the data size of the transmitted hidden state per token.

\begin{figure}[t]
	\centering
    \includegraphics[width = 0.5\textwidth]{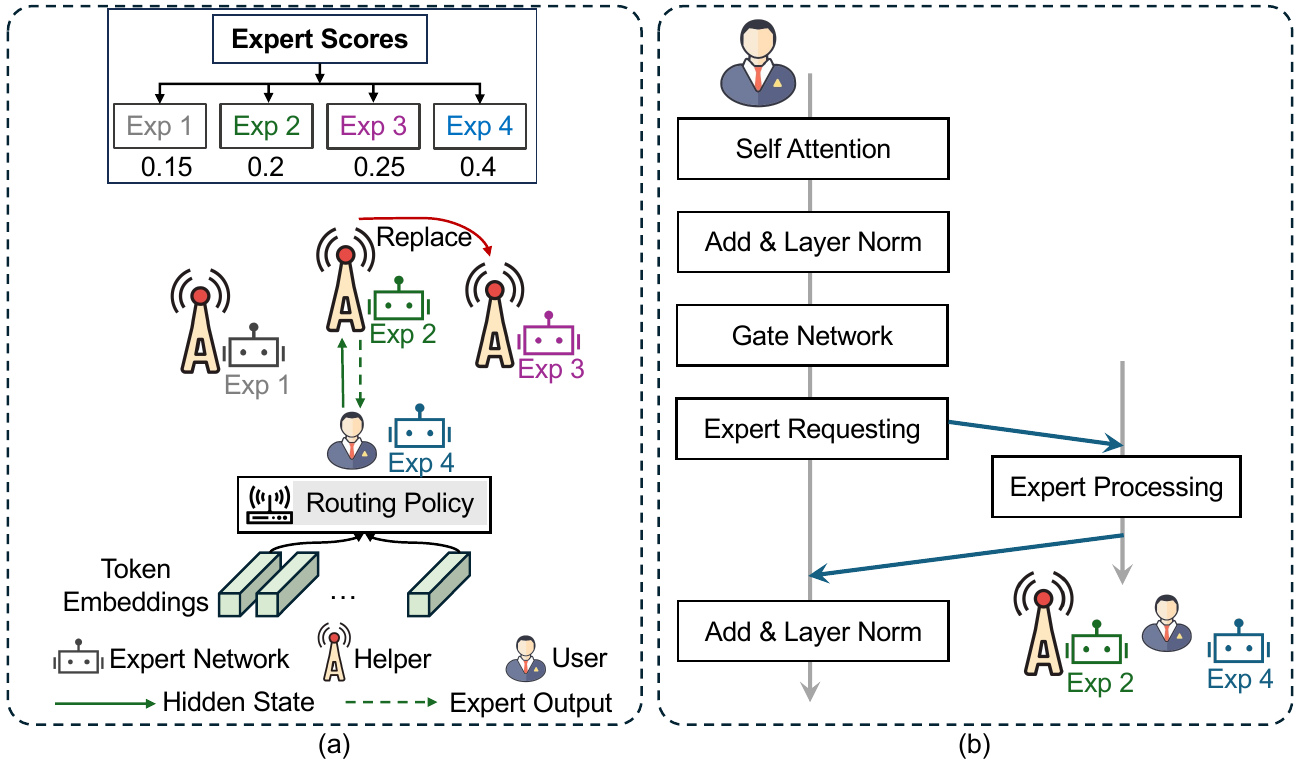}
\caption{An illustration of the WIDE inference system, where a user sends its tokens to the helpers with the required experts for processing. (a) The proposed SiftMoE framework, where ``Exp" denotes expert. Here, expert 2 has a similar function but better channel condition compared with expert 3. (b) Operations of inference within an MoE layer.
\label{fig:system_model}}
\end{figure}

\begin{figure*}
    \centering
    \includegraphics[width=0.9\textwidth]{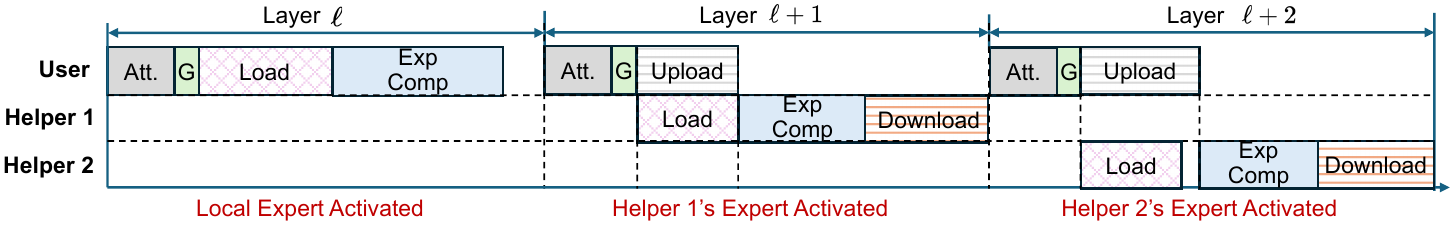}
    \caption{Timeline of SiftMoE inference.}
    \label{fig:pipeline}
\end{figure*}

\subsection{Local Computing Model}
We first consider the case where a token activates local experts at the user. In this case, the inference consists of two stages: i) loading the expert into the user's GPU memory with latency $T_{\mathrm{UE,load}}$, and ii) performing local expert computing.

Let $x_{m,\mathrm{UE}}^{(\ell)}$ be a binary variable indicating whether token $m$ is processed by the local expert of MoE layer $\ell$.
Specifically, $x_{m,\mathrm{UE}}^{(\ell)}=1$ if token $m $ activates the local expert at layer $\ell$, and $x_{m,\mathrm{UE}}^{(\ell)}=0$ otherwise.
Then, the number of tokens processed locally at layer $\ell$ is given by $D_{\mathrm{UE}}^{\left( \ell \right)} = \sum_{m \in \mathcal{M}}x_{m,\mathrm{UE}}^{\left( \ell \right)}$, and the corresponding computing latency is 
\begin{equation}
T_{\mathrm{UE,cp}}^{(\ell)}=\frac{D_{\rm{UE}}^{(\ell)}\phi}{C_{\rm{UE}}},
\end{equation}
where $\phi$ is the expert computing workload (in FLOPs) for processing a hidden state 
and $C_{\rm{UE}}$ is the computation capability (in FLOP/s) of the user.

Accordingly, the total latency at MoE layer $\ell$ when activating local expert is given by
\begin{equation}
    T_{\mathrm{UE}}^{(\ell)}  = I_{\rm{UE}}^{(\ell)}T_{\mathrm{UE,load}} + T_{\mathrm{UE,cp}}^{(\ell)}.
\end{equation}
Here, $I_{\rm{UE}}^{(\ell)}  = \mathbb{I}\left (D_{\rm{UE}}^{(\ell)}  \right ) $ is an indicator function, where $I_{\rm{UE}}^{(\ell)}=1$ only if $D_{\rm{UE}}^{(\ell)}>0$. Otherwise, $I_{\rm{UE}}^{(\ell)}=0$.

Let $P_{\rm{UE,load}}$ denote the power of loading an expert at the user and $P_{\rm{UE,cp}}$ denote the local computing power. Then, the energy consumption incurred by activating the local expert at layer $\ell$ is given by
\begin{equation}\label{equ:energy_local}
    E_{\rm{UE}}^{(\ell)} = I_{\rm{UE}}^{(\ell)}P_{\rm{UE,load}}T_{\rm{UE,load}}+ P_{\rm{UE,cp}}T_{\rm{UE,cp}}^{(\ell)}.
\end{equation}

\subsection{Computation Offloading Model}
We next consider the case where a token activates an expert at a helper $n \in \mathcal{N}$.
In this case, the inference of each MoE layer consists of three stages: i) uplink transmission of the hidden state from the user to the helper, ii) expert loading and computation at the helper, and iii) downlink transmission of the output back to the user. Loading an expert into the helper $n$'s GPU memory incurs an additional latency, denoted by $T_{n,\mathrm{load}}$, which can be performed in parallel with the uplink transmission.

We assume the total bandwidth $B$ assigned to the user is partitioned among the helpers, where $B_n$ denotes the bandwidth allocated to helper $n$. 
After determining the activated expert(s) of each token, the user simultaneously uploads its hidden state to the helpers hosting those experts.
Let $p_{{\rm{UE}},n}^{(\ell)}$ denote the transmit power from the user to helper $n$. Then, the corresponding uplink achievable rate $R_{{\mathrm{UE}},n}^{(\ell)}$ is given by
	\begin{equation}\label{equ:uplink_rate}
		R_{{\mathrm{UE}},n}^{(\ell)} = B_n \log_2\left ( 1+ \frac{p_{{\rm{UE}},n}^{(\ell)}   d_n^{-\alpha}G_{\mathrm{UE},n} h_{n}^{(\ell)} }{N_0B_n}  \right ),
	\end{equation}
    where $d_n$ is the distance between the user and helper $n$\footnote{We assume that the user position remains unchanged during one inference request, corresponding to scenarios where the inference latency is shorter than the user mobility timescale. If noticeable movement occurs within one request, the framework can be extended by replacing $d_n$ with a layer-dependent distance $d_n^{(\ell)}$.}, $\alpha$ is the path-loss factor, $G_{\mathrm{UE},n}$ is the uplink antenna-related factor,
    $N_0$ is the noise power spectral density, and $h_{n}^{(\ell)}$ is the small-scale channel fading variable.

    Similarly, after processing, the expert output should be downloaded from the helpers to the user. The downlink achievable rate $R_{n, {\mathrm{UE}}}^{(\ell)}$ can be expressed as 
    \begin{equation}
        R_{n, {\mathrm{UE}}}^{(\ell)}  = B_n  \log_2\left ( 1+ \frac{P_{n,\mathrm{UE}} d_n^{-\alpha} G_{n,{\mathrm{UE}}} h_{n}^{(\ell)}  }{N_0 B_n }  \right ),
    \end{equation}
where $P_{n,\mathrm{UE}}$ denotes the transmit power from helper $n$ to the user and $G_{n,{\mathrm{UE}}}$ is the downlink antenna-related factor.

Let $x_{m,n}^{(\ell)}$ denote a binary expert selection variable at the helpers, where $x_{m,n}^{(\ell)}=1$ indicates that token $m $ activates the expert at helper $n$ in MoE layer $\ell$, and $x_{m,n}^{(\ell)}=0$ otherwise.
Accordingly, the number of tokens processed by helper $n$ at layer $\ell$, denoted by $D_n^{\left( \ell \right)} $, is obtained by  $D_n^{\left( \ell \right)} = \sum_{m \in \mathcal{M}}x_{m,n}^{\left( \ell \right)}$.
Therefore, the uplink and downlink transmission latency between the user and helper $n$ for layer $\ell$ is calculated as 
\begin{equation}\label{equ:uplink_latency}
T_{{\mathrm{UE}},n}^{(\ell)}=\frac{D_n^{(\ell)}b}{R_{{\mathrm{UE}},n}^{(\ell)}},
\end{equation}
and
\begin{equation}\label{equ:downlink_latency}
T_{n,{\mathrm{UE}}}^{(\ell)}=\frac{D_n^{(\ell)}b}{R_{n,{\mathrm{UE}}}^{(\ell)}}.
\end{equation}
Moreover, expert processing latency at helper $n$ during layer $\ell$ is given by
\begin{equation}
T_{n,\mathrm{cp}}^{(\ell)}=\frac{D_n^{(\ell)}\phi}{C_n},
\end{equation}
where $C_n$ is the computation capability (in FLOP/s) of helper $n$.
Therefore, the total latency\footnote{Additional helper-side processing overheads, such as kernel-launch overhead, memory-management overhead, and synchronization, can be incorporated as a fixed helper- and layer-dependent constant latency term. In practice, this constant term can be subtracted from the tolerable latency of the corresponding layer when checking the latency feasibility of that helper. Since it does not change the structure of the proposed expert-selection problem, it is omitted for notational simplicity.} incurred by activating the expert at helper $n$ during MoE layer $\ell$ is given by
\begin{equation}
    T_{n}^{(\ell)} =  \max \left\{  T_{{\mathrm{UE}},n}^{(\ell)}, T_{n, \rm{load}}\right\}
     + T_{n,\mathrm{cp}}^{(\ell)}+T_{n,{\mathrm{UE}}}^{(\ell)}.
\end{equation}

By combining (\ref{equ:uplink_rate}) and (\ref{equ:uplink_latency}), the transmission energy consumption from the user to helper $n$ for MoE layer $\ell$, denoted by $E_n^{(\ell)}$, is given by
\begin{equation}\label{equ:energy-consumption-UE-n}
E_n^{(\ell)}=p_{{\rm{UE}},n}^{(\ell)}T_{{\mathrm{UE}},n}^{(\ell)}
	=
	\frac{\left ( 2^{\frac{D_n^{(\ell)}b}{B_nT_{{\mathrm{UE}},n}^{(\ell)}}} -1 \right )N_0 B_nT_{{\mathrm{UE}},n}^{(\ell)}}{d_n^{-\alpha}G_{\mathrm{UE},n}h_{n}^{(\ell)}}.
\end{equation}

We ignore the receiving power consumption at the user because it is often much smaller than the transmit power~\cite{6884852,9210812,9461628}.

\section{Error Analysis of Expert Selection}\label{sec:error_analysis}
In this section, we first provide a theoretical analysis of the per-layer error caused by expert selection. We then derive the allowable error at each MoE layer under a constraint on the final output deviation. These results lay the analytical foundation for the energy-efficient expert selection design developed in Sections \ref{sec:slow_fading} and \ref{sec:fast_fading}.

\subsection{Preliminaries and Assumptions}
Consider an input token representation $z \in \mathbb{R}^{d}$, where $d$ denotes the hidden-state dimension.
At each MoE layer, the gating network computes a vector of routing logits given by ${\mathcal{P}}\left ( z \right )   =  \mathbf{W}_r ^{\mathrm{T} }z $,
where $\mathbf{W}_r \in \mathbb{R}^{d \times N}$ is the router matrix. 
Based on the routing logits, an expert activation set $\mathcal{A}$ is selected. Thus, the output of the MoE layer can be expressed as $y\left ( z  \right )   = \sum_{i \in {\mathcal{A}}} {g_{i} \left ( z  \right ) {\mathrm{FFN}}_{i}\left ( z  \right ) }$,
where ${\mathrm{FFN}}_{i}\left ( z  \right )$ is the output result of expert $i$ and $g_{i} \left ( z  \right )  = \frac{\exp\left ( {\mathcal{P}}_i\left ( z \right ) \right ) }{\sum_{i' \in \mathcal{A} }\exp\left ( {\mathcal{P}}_{i'}\left ( z \right ) \right )}$ denotes the softmax-normalized gating score associated with expert $i$.

Each expert is implemented as an FFN, which is a standard neural network architecture. Therefore, we adopt the following assumptions commonly used in the theoretical analysis of deep neural networks.

\begin{assumption}(Bounded expert outputs)\cite{JMLR_nonasym}\label{assumption:bounded_output}
For the $\ell$-th MoE layer, the outputs of expert networks are assumed to be uniformly bounded over $ \mathcal Z$, i.e., $\left\|{\rm FFN}_i^{(\ell)}(z)\right\|_2 \le B_{\max}^{(\ell)}$ for $z\in\mathcal Z$ and $i\in \mathcal{E}^{(\ell)}$.
\end{assumption}

\begin{assumption}(Lipschitz continuity of experts)~\cite{lipschitz_nips,mohri2007stability}\label{assumption:Lip_expert}
For the $\ell$-th MoE layer, each expert network ${\rm FFN}_i^{(\ell)}(\cdot)$ is assumed to be $\beta_{{\rm E},i}^{(\ell)}$-Lipschitz continuous over $\mathcal{Z}$, i.e., $\left\| {\rm FFN}_i^{(\ell)}(\tilde z)-{\rm FFN}_i^{(\ell)}(z)\right\|_2
\le
\beta_{{\rm E},i}^{(\ell)}\left\|\tilde z-z\right\|_2$ for $z, \tilde{z} \in \mathcal{Z}$ and $i\in \mathcal{E}^{(\ell)}$.
\end{assumption}

\subsection{Effect of Expert Selection on Output Results}

Consider MoE layer $\ell$. Let $\mathcal{A}_{\rm{Top}}^{(\ell)}\left ( z \right ) $ denote the set of activated expert(s) at layer $\ell$ with an input hidden state $z$ under traditional Top-$K$ selection strategy. The resulting output representation at this layer under the Top-$K$ scheme is given by
\begin{equation}
    y_{\rm{Top}}^{(\ell)} \left ( z  \right )
		=\sum_{i \in \mathcal{A}_{\rm{Top}}^{(\ell)}\left ( z \right ) } g_i^{(\ell)}\left ( z  \right ) {\mathrm{FFN}}_i^{(\ell)}\left ( z  \right ),
\end{equation}
    where $g_i^{(\ell)}(z)$ denotes the gating score associated with expert $i$ at layer $\ell$.

   When resource-aware expert selection is applied, the resulting activated expert set is denoted by $\tilde{\mathcal{A}}^{(\ell)}\left ( z \right ) $. In this case, the layer output becomes
    \begin{equation}
        \tilde{y}^{(\ell)} \left ( z  \right )
		=
		\sum_{i \in {\mathcal{A}}_{\rm{Top}}^{(\ell)}\left ( z \right ) } g_i^{(\ell)}\left ( z  \right ) {\mathrm{FFN}}_{\varphi(i)}^{(\ell)}\left ( z   \right ).
    \end{equation}
Here, for any expert $i \in \mathcal{A}_{\rm{Top}}^{(\ell)}\left ( z \right )$, $\varphi(i) $ denotes the expert actually used under the resource-aware expert selection scheme. 
If $\varphi(i)= i$, expert $i$ is retained. If $\varphi(i) \neq i$ and $\varphi(i) \neq \emptyset$, then $\varphi(i) $ is a substitute expert for $i$. Otherwise, $\varphi(i) = \emptyset$, indicating that expert $i$ is skipped.

Define the deviation induced by expert selection at layer $\ell$ with input $z$ as $\delta^{(\ell)}\left ( z  \right ) \triangleq \left \| \tilde{y}^{(\ell)}\left ( z  \right ) - y_{\rm{Top}}^{(\ell)}\left ( z  \right ) \right \|_2$. The following proposition establishes an upper bound on the deviation at each MoE layer.

\begin{proposition}[Layer-wise deviation]\label{prop:layer-expect-perturbation}
    With expert selection, for any input hidden state $z$ at MoE layer $\ell$, the deviation satisfies
\begin{equation}\label{equ:prop1_layerwise}
 \delta^{(\ell)}\left ( z  \right ) \leq \sum_{i \in {\mathcal{A}}_{\rm{Top}}^{(\ell)}\left ( z   \right )} g_i^{(\ell)}\left ( z  \right ) F_{i,\varphi(i)}^{(\ell)}\left ( z   \right ) ,
\end{equation}
where $F_{i,\varphi(i)}^{(\ell)}\left ( z   \right )\triangleq \left \| {\rm{FFN}}_{i}^{(\ell)}\left ( z   \right ) \right \|_2  \cdot \sqrt{1+\rho_{i,\varphi(i)}^{(\ell)}\left ( z   \right )^2-2\rho_{i,\varphi(i)}^{(\ell)}\left ( z   \right ) \cos \theta_{i,\varphi(i)}^{(\ell)}\left ( z   \right ) }$. Here, $\rho_{i,\varphi(i)}^{(\ell)}\left ( z   \right ) = \frac{\left\|\mathrm{FFN}_{\varphi(i)}^{(\ell)}\left ( z   \right )\right\|_2}{\left\|\mathrm{FFN}_{i}^{(\ell)}\left ( z   \right )\right\|_2} $ is the ratio between the output norms of experts $i$ and $\varphi(i)$, and $\cos \theta_{i,\varphi(i)}^{(\ell)}\left ( z   \right )=\frac{\mathrm{FFN}_{i}^{(\ell)}\left ( z   \right ) \cdot \mathrm{FFN}_{\varphi(i)}^{(\ell)}\left ( z   \right )}{\left\|\mathrm{FFN}_{i}^{(\ell)}\left ( z   \right )\right\|_2\left\|\mathrm{FFN}_{\varphi(i)}^{(\ell)}\left ( z   \right )\right\|_2}$ denotes the cosine similarity between their output vectors.  If $\varphi(i)=\emptyset$, expert $i$ is skipped, which is equivalent to replacing it with a null expert whose output is ${\mathrm{FFN}}_{\emptyset}^{(\ell)}(z)=\mathbf{0}$.
\end{proposition}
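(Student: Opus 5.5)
The plan is to write the deviation directly as a weighted sum of per-expert differences and then bound each difference with elementary vector algebra. Since the gating scores $g_i^{(\ell)}(z)$ in $\tilde{y}^{(\ell)}(z)$ and $y_{\rm{Top}}^{(\ell)}(z)$ are the same (they are computed on the Top-$K$ set $\mathcal{A}_{\rm{Top}}^{(\ell)}(z)$ and are not renormalized after replacement or skipping), we have
\begin{equation*}
\tilde{y}^{(\ell)}(z)-y_{\rm{Top}}^{(\ell)}(z)=\sum_{i\in\mathcal{A}_{\rm{Top}}^{(\ell)}(z)} g_i^{(\ell)}(z)\left(\mathrm{FFN}_{\varphi(i)}^{(\ell)}(z)-\mathrm{FFN}_{i}^{(\ell)}(z)\right).
\end{equation*}
Apply the triangle inequality for $\|\cdot\|_2$, using $g_i^{(\ell)}(z)\ge 0$ from the softmax. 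This gives $\delta^{(\ell)}(z)\le\sum_i g_i^{(\ell)}(z)\,\|\mathrm{FFN}_{\varphi(i)}^{(\ell)}(z)-\mathrm{FFN}_{i}^{(\ell)}(z)\|_2$. It then suffices to show that each summand norm equals $F_{i,\varphi(i)}^{(\ell)}(z)$.

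For a single term, write $u=\mathrm{FFN}_i^{(\ell)}(z)$ and $v=\mathrm{FFN}_{\varphi(i)}^{(\ell)}(z)$. Expanding the squared norm gives $\|v-u\|_2^2=\|u\|_2^2+\|v\|_2^2-2\,u\cdot v$. Substitute $\|v\|_2=\rho\|u\|_2$ and $u\cdot v=\|u\|_2\|v\|_2\cos\theta=\rho\|u\|_2^2\cos\theta$. This yields $\|v-u\|_2^2=\|u\|_2^2(1+\rho^2-2\rho\cos\theta)$. Taking square roots, and noting that the radicand is nonnegative because it is a squared norm divided by $\|u\|_2^2$, we obtain exactly $F_{i,\varphi(i)}^{(\ell)}(z)$. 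This is the law of cosines.

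The main obstacle is the degenerate cases where $\rho$ or $\cos\theta$ is undefined. The first is skipping, $\varphi(i)=\emptyset$, where $v=\mathbf{0}$. With the null-expert convention, $\rho=0$, the $\cos\theta$ term is multiplied by $\rho=0$, and $F=\|u\|_2=\|v-u\|_2$ directly. So I would interpret $\rho\cos\theta$ as $0$ whenever $v=\mathbf{0}$. The second is $u=\mathbf{0}$, where $\rho$ is undefined. I would handle it by reading $F$ through the equivalent form $\sqrt{\|u\|_2^2+\|v\|_2^2-2u\cdot v}$, which is continuous and agrees with the stated expression whenever that expression is defined. The retained case $\varphi(i)=i$ gives $\rho=1$ and $\cos\theta=1$, so $F=0$, consistent with zero contribution.

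Assumptions~\ref{assumption:bounded_output} and~\ref{assumption:Lip_expert} are not needed for this pointwise bound. I expect them to be used only later to make $F$ uniform over $\mathcal{Z}$ and to propagate the error across layers.
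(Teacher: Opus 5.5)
Your proposal is correct and follows essentially the same route as the paper: write $\tilde{y}^{(\ell)}(z)-y_{\rm{Top}}^{(\ell)}(z)$ as a gate-weighted sum of expert-output differences, apply the triangle inequality using $g_i^{(\ell)}(z)\ge 0$, and identify each summand norm with $F_{i,\varphi(i)}^{(\ell)}(z)$. The paper does not spell out the law-of-cosines step or the degenerate cases $\varphi(i)=\emptyset$ and $\mathrm{FFN}_i^{(\ell)}(z)=\mathbf{0}$, so your explicit treatment of them adds detail without changing the argument.
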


\begin{proof}
    Please see Appendix \ref{proof:prop-layer-expect-perturbation}.
\end{proof}

Proposition~\ref{prop:layer-expect-perturbation} indicates that experts with small gating scores contribute marginally to the aggregated output. Consequently, replacing or skipping such experts only induces a marginal change in the deviation. In contrast, experts with large gating scores play a dominant role in the layer output. Therefore, substituting a high-score expert requires selecting a functionally similar alternative. Otherwise, a large discrepancy between expert outputs may occur, thereby increasing the resulting layer-wise deviation.

\begin{remark}[Skipping versus replacement]
Based on Proposition~\ref{prop:layer-expect-perturbation}, when expert $i$ at layer $\ell$ is skipped,
the corresponding deviation term is $g_{i}^{(\ell)}\left ( z   \right ) \left \| {\rm{FFN}}_{i}^{(\ell)}\left ( z   \right ) \right \|_2 $. 
On the other hand, when expert $i$ is replaced by an alternative $\tilde i$ at layer $\ell$, the corresponding deviation term is $g_{i}^{(\ell)}\left ( z   \right )\left \| {\rm{FFN}}_{i}^{(\ell)}\left ( z   \right ) \right \|_2\sqrt{1+\rho_{i,\tilde i}^{(\ell)}(z)^2
-2\rho_{i,\tilde i}^{(\ell)}(z)\cos\theta_{i,\tilde i}^{(\ell)}(z)}$. When $\sqrt{1+\rho_{i,\tilde i}^{(\ell)}(z)^2
-2\rho_{i,\tilde i}^{(\ell)}(z)\cos\theta_{i,\tilde i}^{(\ell)}(z)} >1$, the deviation induced by replacement is larger than that caused by skipping. Such a case naturally arises when the substitute expert is functionally dissimilar to the original one. Consequently, skipping a low-impact expert may perform better than replacing it with a poorly matched alternative.
\end{remark}

\begin{remark}[Offline-online combined estimation of layer-wise deviation]\label{remark:offline_online}
Proposition~\ref{prop:layer-expect-perturbation} shows that the layer-wise deviation depends on two terms: the input-dependent gating score $g_i^{(\ell)}\left ( z   \right )$ and the input-dependent expert-pair mismatch $F_{i,\varphi(i)}^{(\ell)}\left ( z   \right )$. During inference, ${\mathcal A}_{\rm Top}^{(\ell)}\left ( z   \right )$ and $g_i^{(\ell)}\left ( z   \right ) $ are naturally obtained online when the current hidden state reaches the $\ell$-th MoE layer. However, the exact value of $F_{i,\varphi(i)}^{(\ell)}\left(z\right)$ cannot be obtained before expert execution, because it requires feeding the current hidden state $z$ into both the original expert $i$ and the substitute expert $\varphi(i)$ and then comparing their output vectors. This is inconsistent with the purpose of expert selection, which aims to determine whether an expert should be retained, replaced, or skipped before executing the corresponding experts. 

Therefore, we estimate the expert-pair mismatch offline using a calibration dataset and use it to construct an offline-calibrated deviation surrogate for online expert selection. Specifically, during online inference, we use
\begin{equation}\label{equ:offline_online_layerwise}
    \bar{\delta}^{(\ell)}(z)
=
\sum_{i\in{\mathcal A}_{\rm Top}^{(\ell)}(z)}
g_i^{(\ell)}(z)
\bar{F}_{i,\varphi(i)}^{(\ell)}.
\end{equation}
Here, ${\mathcal A}_{\rm Top}^{(\ell)}(z)$ and $g_i^{(\ell)}(z)$ are obtained online when the current hidden state $z$ reaches the $\ell$-th MoE layer, while $\bar{F}_{i,\varphi(i)}^{(\ell)}$ is an offline-calibrated expert-pair mismatch. 
Specifically, 
let $Z^{(\ell)}$ denote the random hidden state entering the $\ell$-th MoE layer and $\mathcal{D}_{\rm cal}^{(\ell)}$ denote the corresponding calibration distribution of hidden states at layer $\ell$.
For each layer $\ell$ and expert pair $\left(i,\varphi(i)\right)$, we define $\bar{F}_{i,\varphi(i)}^{(\ell)} \triangleq
\mathbb{E}_{Z^{(\ell)}\sim\mathcal{D}_{\rm cal}^{(\ell)}}
\left[
F_{i,\varphi(i)}^{(\ell)}\left(Z^{(\ell)}\right)
\right]$, which represents the average functional mismatch between expert $i$ and its substitute $\varphi(i)$ at layer $\ell$ over the calibration hidden-state distribution.
During online inference, computing the estimated layer-wise deviation only requires scalar lookup, multiplication, summation, and comparison with the layer-wise deviation budget. The storage overhead is also limited, since storing all ordered expert-pair mismatches requires only $\mathcal{O}(LN^{2})$ scalar values.
 This design preserves token-dependent expert importance while keeping the online expert-selection process lightweight\footnote{Since the expert parameters are fixed during inference, the offline-calibrated expert-pair mismatches can be reused across inference requests as long as the hidden-state distribution during inference remains close to that of the calibration dataset, and recalibration is only needed when the MoE model is updated or this distribution changes significantly.}.
\end{remark}

We next characterize how per-layer deviations induced by expert selection accumulate across layers and affect the final model output.
Let $e^{(\ell)} =	 \left \| \tilde{y}^{(\ell)} -y_{\rm{Top}}^{(\ell)}  \right \|_2$ denote the output deviation at layer $\ell$ with the initialization $e^{(0)}=0$, which captures the \textit{accumulated deviation} between the expert-selection output and the corresponding Top-$K$ output up to layer $\ell$.

\begin{theorem}[Final output deviation bound]\label{prop:final_deviation_perlayer}
    Given an expert selection scheme, there is an upper bound on the accumulated deviation of the model output space:
    \begin{equation}\label{equ:e-ell-upper}
        e^{(\ell)} \le \sum_{r=1}^{\ell}\left [ \left(2 B_{\max}^{(r)}+\delta^{(r)}\right)\prod_{t=r+1}^{\ell} \beta_{{\rm E},\max}^{(t)} \right ],
    \end{equation}
    where $ \beta_{{\rm E},\max}^{(\ell)} = \max_{i \in \mathcal{E}^{(\ell)} } \beta_{{\rm{E}},i}^{(\ell)}$.
\end{theorem}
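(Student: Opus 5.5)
The plan is to derive a one-step recursion for $e^{(\ell)}$ in terms of $e^{(\ell-1)}$ and then unroll it. I model the network as a chain of MoE layers. The selected-expert pipeline feeds layer $\ell$ with a perturbed input $\tilde z^{(\ell)}$, and the Top-$K$ pipeline feeds it with $z^{(\ell)}$. I will take the input discrepancy $\|\tilde z^{(\ell)}-z^{(\ell)}\|_2$ to be controlled by $e^{(\ell-1)}$. This identification treats the non-expert parts (attention, residual, normalization) as absorbed into the per-layer map, as the statement implicitly does.

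The key step is a triangle-inequality split at layer $\ell$:
\begin{equation*}
e^{(\ell)} = \left\|\tilde y^{(\ell)}(\tilde z^{(\ell)}) - y_{\rm Top}^{(\ell)}(z^{(\ell)})\right\|_2 \le \left\|\tilde y^{(\ell)}(\tilde z^{(\ell)}) - \tilde y^{(\ell)}(z^{(\ell)})\right\|_2 + \left\|\tilde y^{(\ell)}(z^{(\ell)}) - y_{\rm Top}^{(\ell)}(z^{(\ell)})\right\|_2 .
\end{equation*}
The second term is exactly $\delta^{(\ell)}$, which Proposition~\ref{prop:layer-expect-perturbation} bounds.

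For the first term, I expand both outputs as gated sums of expert outputs. The difficulty is that the gating scores and the activated set $\mathcal{A}_{\rm Top}^{(\ell)}$ themselves depend on the input, and no Lipschitz assumption on the router is available. I therefore decompose this term into two parts.

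The first part is the difference coming from changed gates or sets with the expert outputs held fixed. Since the gates lie on a simplex and every expert output is bounded by $B_{\max}^{(\ell)}$ (Assumption~\ref{assumption:bounded_output}), two convex combinations of vectors of norm at most $B_{\max}^{(\ell)}$ differ by at most $2B_{\max}^{(\ell)}$.

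The second part is the difference from evaluating each expert at $\tilde z^{(\ell)}$ versus $z^{(\ell)}$ with the gates held fixed. Assumption~\ref{assumption:Lip_expert} and the fact that the gates sum to at most one give a bound of $\beta_{{\rm E},\max}^{(\ell)} e^{(\ell-1)}$.

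Combining these pieces gives the recursion
\begin{equation*}
e^{(\ell)} \le \beta_{{\rm E},\max}^{(\ell)} e^{(\ell-1)} + 2B_{\max}^{(\ell)} + \delta^{(\ell)}, \qquad e^{(0)}=0.
\end{equation*}
Unrolling it by induction on $\ell$ yields \eqref{equ:e-ell-upper}: the $r$-th injected term $2B_{\max}^{(r)}+\delta^{(r)}$ is amplified by $\prod_{t=r+1}^{\ell}\beta_{{\rm E},\max}^{(t)}$, with the empty product equal to one.

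I expect the main obstacle to be the first term, specifically how to combine its two parts. The Lipschitz part and the gate/set-mismatch part must be arranged so that the bound is genuinely additive, namely $\beta e^{(\ell-1)} + 2B_{\max}$, without double counting. I would try inserting the intermediate vector $\sum_i \tilde g_i\,{\rm FFN}_{\varphi(i)}(z^{(\ell)})$, which uses the perturbed gates with the unperturbed expert inputs. I also need to be careful that the $\delta^{(\ell)}$ in the statement is evaluated at the Top-$K$ input $z^{(\ell)}$ rather than at $\tilde z^{(\ell)}$. If the split turns out more natural the other way round, I would evaluate Proposition~\ref{prop:layer-expect-perturbation} at $\tilde z^{(\ell)}$ instead, since that proposition holds for any input.
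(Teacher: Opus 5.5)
Your proposal is correct and follows essentially the same route as the paper. Both use the same triangle-inequality split into a propagated term plus $\delta^{(\ell)}$ evaluated at the Top-$K$ input $y_{\rm Top}^{(\ell-1)}$, the same gate-variation bound $\|\Delta g^{(\ell)}\|_1 B_{\max}^{(\ell)}\le 2B_{\max}^{(\ell)}$ together with the Lipschitz bound $\beta_{{\rm E},\max}^{(\ell)} e^{(\ell-1)}$, and the same unrolled recursion. The only cosmetic difference is the intermediate vector: the paper uses unperturbed gates with perturbed expert inputs, $\sum_i g_i^{(\ell)}(z)\,{\rm FFN}_i^{(\ell)}(\tilde z)$, whereas you use the mirror choice, and either yields the additive recursion without any double-counting issue.
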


\begin{proof}
Please see Appendix \ref{proof:prop-final_deviation_perlayer}.
\end{proof}

Theorem \ref{prop:final_deviation_perlayer} provides several insights into how layer-wise expert selection errors propagate across MoE layers. First, the final output deviation is determined by the cumulative effect of deviations introduced at all preceding layers, indicating that expert-selection errors are inherently cross-layer coupled. Second, the product term $\prod_{t=r+1}^{\ell} \beta_{{\rm E},\max}^{(t)}$ shows that deviations introduced at earlier layers may be amplified by subsequent layers, making shallow-layer errors generally more critical than those in deeper layers. Third, by separating the layer-wise selection error $\delta^{(r)}$ from the model-dependent constants $\beta_{{\rm E},\max}^{(\ell)}$ and $B_{\max}^{(\ell)}$, the bound reveals that different layers can tolerate different levels of deviation depending on their error amplification effects. Therefore, this result establishes a direct link between local expert-selection deviations and the final output deviation, enabling global accuracy requirements to be translated into tractable layer-wise constraints for the subsequent expert selection design.

Based on Theorem \ref{prop:final_deviation_perlayer}, we next characterize how the expected layer-wise deviation induced by expert selection should be controlled in order to guarantee a target expected output accuracy.

\begin{corollary}[Expected per-layer deviation budget]\label{corollary_budget}
Assuming that the expected intermediate deviation satisfies
$\mathbb{E}[e^{(\ell)}]\le \kappa^{(\ell)}$ for $\ell=1,\ldots,L-1$, and the final deviation satisfies $\mathbb{E}\!\left[e^{(L)}\right]\le \theta$ with a nonnegative allocation $\{\theta^{(\ell)}\}_{\ell=1}^L$ such that $\sum_{\ell=1}^{L}\theta^{(\ell)}=\theta$. 
Then, the expected deviation induced by expert selection at MoE layer $\ell$ must satisfy \label{acc_budget}
\begin{equation}
\mathbb{E}\left[\delta^{(\ell)}\right] \leq \eta^{(\ell)},
\end{equation}
where $\eta^{(\ell)} = \min\left \{\kappa^{(\ell)}- 2 B_{\max}^{(\ell)}-\sum_{r=1}^{\ell-1} f(r,\ell)   ,  \right.$ $\left.\frac{\theta^{(\ell)}}{\prod_{t=\ell+1}^{L} \beta_{{\rm E},\max}^{(t)}}- 2 B_{\max}^{(\ell)}\right \}$ and $f(r,\ell) =\left(2 B_{\max}^{(r)}+\delta^{(r)}\right)\prod_{t=r+1}^{\ell} \beta_{{\rm E},\max}^{(t)}$.
\end{corollary}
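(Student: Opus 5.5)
The plan is to take expectations in the recursive form of Theorem~\ref{prop:final_deviation_perlayer} and then read off two separate sufficient conditions on $\mathbb{E}[\delta^{(\ell)}]$. One condition comes from the intermediate budget $\kappa^{(\ell)}$, the other from the share $\theta^{(\ell)}$ of the final budget. Taking the minimum of the two gives $\eta^{(\ell)}$.

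\textbf{Step 1: expected version of the bound.} First, isolate the $\ell$-th term of \eqref{equ:e-ell-upper}. Since the empty product $\prod_{t=\ell+1}^{\ell}\beta_{{\rm E},\max}^{(t)}$ equals $1$, we have
\[
e^{(\ell)} \le 2B_{\max}^{(\ell)}+\delta^{(\ell)}+\sum_{r=1}^{\ell-1} f(r,\ell).
\]
Because the $B_{\max}^{(r)}$ and $\beta_{{\rm E},\max}^{(t)}$ are deterministic constants, taking expectations and using linearity gives
\[
\mathbb{E}[e^{(\ell)}]\le 2B_{\max}^{(\ell)}+\mathbb{E}[\delta^{(\ell)}]+\sum_{r=1}^{\ell-1}\mathbb{E}[f(r,\ell)].
\]

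\textbf{Step 2: intermediate constraint.} To guarantee $\mathbb{E}[e^{(\ell)}]\le\kappa^{(\ell)}$, it suffices that the right-hand side above is at most $\kappa^{(\ell)}$. Rearranging gives
\[
\mathbb{E}[\delta^{(\ell)}]\le \kappa^{(\ell)}-2B_{\max}^{(\ell)}-\sum_{r=1}^{\ell-1}f(r,\ell),
\]
which is the first argument of the minimum. Here I follow the statement's notation: the earlier-layer terms $f(r,\ell)$ are treated as already fixed when layer $\ell$ is decided, or equivalently as their expectations. This reflects the sequential layer-by-layer selection.

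\textbf{Step 3: final constraint via budget allocation.} Apply \eqref{equ:e-ell-upper} with $\ell=L$ and take expectations. This gives $\mathbb{E}[e^{(L)}]\le\sum_{\ell=1}^{L}\left(2B_{\max}^{(\ell)}+\mathbb{E}[\delta^{(\ell)}]\right)\prod_{t=\ell+1}^{L}\beta_{{\rm E},\max}^{(t)}$. Suppose each summand is at most $\theta^{(\ell)}$. Then, since $\sum_{\ell}\theta^{(\ell)}=\theta$, the total is at most $\theta$. The per-layer condition is
\[
\left(2B_{\max}^{(\ell)}+\mathbb{E}[\delta^{(\ell)}]\right)\prod_{t=\ell+1}^{L}\beta_{{\rm E},\max}^{(t)}\le\theta^{(\ell)}.
\]
Dividing by the positive product gives the second argument of the minimum. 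Requiring both conditions simultaneously yields $\mathbb{E}[\delta^{(\ell)}]\le\eta^{(\ell)}$.

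\textbf{Main obstacle.} The delicate point is the logical direction. The conditions derived above are sufficient for the stated budgets; they are not literally necessary. The statement's ``must satisfy'' should therefore be justified as the per-layer requirement under the chosen allocation $\{\theta^{(\ell)}\}$ and the bound of Theorem~\ref{prop:final_deviation_perlayer}. I would state explicitly that the budgets are enforced through this upper bound. I would also note that $\beta_{{\rm E},\max}^{(t)}>0$ is needed for the division in Step 3 to be legitimate.
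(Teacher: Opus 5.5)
Your proposal is correct and follows the derivation the paper intends. The paper gives no separate proof of Corollary~\ref{corollary_budget}; it presents it as an immediate consequence of Theorem~\ref{prop:final_deviation_perlayer}, obtained by taking expectations, isolating the $r=\ell$ term to get the $\kappa^{(\ell)}$ condition, and matching each term of the final bound against the allocation $\{\theta^{(\ell)}\}$. Your caveats are apt and are left implicit in the paper: the conditions are sufficient rather than necessary, the $\delta^{(r)}$ inside $f(r,\ell)$ must be read as realized (or expected) earlier-layer deviations, and $\beta_{{\rm E},\max}^{(t)}>0$ is needed for the division.
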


The allowable per-layer deviation budgets in Corollary \ref{corollary_budget} are obtained according to the target end-to-end (E2E) accuracy requirement and the model-dependent constants in Assumptions \ref{assumption:bounded_output} and \ref{assumption:Lip_expert}, which can be estimated offline on a calibration dataset using lightweight finite-difference tests~\cite{11314800,wei2025optimizing}. Once these budgets are determined, the online expert-selection procedure only checks whether the estimated layer-wise deviation satisfies the corresponding precomputed budget.

\section{Expert Selection Under Slow Fading}\label{sec:slow_fading}
In this section, we formulate an optimization problem for energy-efficient expert selection under slow-fading channels, leveraging the allowable per-layer deviations derived in Section \ref{sec:error_analysis} to enforce accuracy constraints. Under slow-fading channels, the small-scale fading coefficients may vary across different MoE layers but remain constant within each layer. We first study the decoding phase, where a single token is processed and the optimal solution admits a simple structure. Then, we extend the analysis to the multi-token prefilling phase. The fast-fading scenario will be discussed in Section \ref{sec:fast_fading}.

\subsection{Problem Formulation}
Our objective is to minimize the total energy consumption across all MoE layers by optimizing the expert selection decisions during inference.
Let $\mathbf{X}^{(\ell)}=\left \{x_{m,v}^{(\ell)}\right \}_{m \in \mathcal{M}, v \in \mathcal{V}}$ 
denote the binary expert selection variables at MoE layer $\ell$. Due to the sequential nature of MoE inference, the gating scores and expert activation information are revealed layer by layer, preventing us from solving the problem for the whole MoE model. Moreover, the energy consumption, as well as the latency and accuracy constraints, can be layer-wise separable if we introduce an accuracy budget for each layer according to Corollary \ref{acc_budget}. Therefore, to minimize the total energy, we focus on minimizing energy for each layer:
\begin{subequations}
	\begin{equation}
\mathcal{P}1: \quad		\min_{\mathbf{X}^{(\ell)}} \; E^{(\ell)} = E_{\rm{UE}}^{(\ell)}+ \sum_{n \in \mathcal{N}}E_n^{(\ell)}
	\end{equation}
\begin{equation}\label{constraint:P1_latency}
		{\rm{s.t.}} \quad T_{v}^{(\ell)} \leq T, \; \forall v \in \mathcal{V},  
	\end{equation}
\begin{equation}\label{constraint:P1_accuracy_per_layer}
		\mathbb{E} \left [ \left \| \delta^{(\ell)} \left ( z_m  \right )\right \|_2   \right ]  \leq \eta^{(\ell)} ,    \forall   m\in \mathcal{M},  
	\end{equation}
\begin{equation}\label{constraint:P1_num_active_expert}
		1 \leq \sum_{v \in \mathcal{V}}x_{m,v}^{(\ell)} \leq K, \forall m \in \mathcal{M},
	\end{equation}
\begin{equation}\label{constraint:P1_binary}
	x_{m,v}^{(\ell)} \in \left \{ 0,1 \right \}, \forall m \in \mathcal{M}, v \in \mathcal{V}, 
	\end{equation}
\end{subequations}
where Constraint (\ref{constraint:P1_latency}) ensures that the latency of each layer does not exceed a threshold $T$. 
Constraint (\ref{constraint:P1_accuracy_per_layer}) limits the inference accuracy degradation caused by expert selection at each MoE layer. Constraint (\ref{constraint:P1_num_active_expert}) restricts the number of experts activated by each token to at most $K$, ensuring fairness and comparability with the conventional Top-$K$ routing scheme.
Constraint (\ref{constraint:P1_binary}) denotes that the expert selection variable is a binary one.

\subsection{Decoding Phase with a Single Token}
During the decoding phase, only a single token is processed at each step of autoregressive inference. In this scenario, the token index $m$ can be omitted for simplicity. The resulting optimization problem admits a simplified structure and a more tractable solution.
Accordingly, the energy consumption of local computing can be expressed as
\begin{equation}
    {\bar E}_{\rm{UE}}^{(\ell)} = x_{{\rm{UE}}}^{(\ell)} \left(P_{\rm{UE,load}}T_{\rm{UE,load}}+ \frac{P_{\rm{UE,cp}}\phi}{C_{\rm{UE}}} \right).
\end{equation}

For helper-side processing, we notice that $E_n^{(\ell)}$ is a non-increasing function related to $T_{{\mathrm{UE}},n}^{(\ell)}$.
Thus, the uplink transmission time  $T_{{\mathrm{UE}},n}^{(\ell)}$ should be maximized if the expert at helper $n$ is activated, and
the optimal total latency at helper $n$ is $T_{n}^{(\ell) \ast}   = x_n^{(\ell)}T$. 
By jointly considering constraints (\ref{constraint:P1_latency}) and (\ref{constraint:P1_accuracy_per_layer}), we define the feasible set $\Omega^{(\ell)}$ as the collection of edge-node subsets, where each subset consists of at most $K$ edge nodes hosting experts that satisfy the latency and accuracy constraints.
Then, 
the optimal uplink transmission energy from the user to helper $n \in \Omega^{(\ell)}$ is given by
\begin{equation}
   {\bar E}_n^{(\ell)}  	 =  x_n^{(\ell)}
	\frac{\left ( 2^{\frac{b}{B_n {\bar T}_{{\rm{UE}},n}^{(\ell)} }} -1 \right )N_0 B_n {\bar T}_{{\rm{UE}},n}^{(\ell)}}{d_n^{-\alpha}G_{\mathrm{UE},n}h_{n}^{(\ell)}}.
\end{equation}
where ${\bar T}_{{\rm{UE}},n}^{(\ell)}=T - \left ( \frac{\phi}{C_n} + \frac{b}{R_{n,{\mathrm{UE}}}^{(\ell)}} \right )$ denotes the remaining time budget for uplink transmission.

Based on the above analysis, the following theorem characterizes the optimal solution to problem $\mathcal{P}1$ under the single-token decoding case.

\begin{theorem}
  Under the single-token decoding scenario, the optimal solution to problem $\mathcal{P}1$ at layer $\ell$ is expressed as
  \begin{equation}
   x_v^{(\ell)\ast} =
\begin{cases}
1, & v \in \mathcal J^{(\ell)\ast},\\
0, & \text{otherwise}.
\end{cases} 
\end{equation}
where 
  \begin{equation}
  \mathcal J^{(\ell)\ast} = \operatorname*{arg\,min}_{\mathcal{J}^{(\ell)} \in \Omega^{(\ell)}}
\sum_{v \in \mathcal{J}^{(\ell)}} f_v^{(\ell)},
\end{equation}
with $f_v^{(\ell)}$ being the energy cost at layer $l$ associated with selecting the expert hosted at edge node $v$, given by
\begin{equation}
    f_v^{(\ell)}=\begin{cases}
    P_{v,\rm{load}}T_{v,\rm{load}}+ \frac{P_{v,\rm{cp}}\phi}{C_v},  & \text{ if } v= \mathrm{UE}, \\
\frac{\left ( 2^{\frac{b}{B_v {\bar T}_{{\rm{UE}},v}^{(\ell)} }} -1 \right )N_0 B_v {\bar T}_{{\rm{UE}},v}^{(\ell)}}{d_v^{-\alpha}G_{\mathrm{UE},v}h_{v}^{(\ell)}},  & \text{ if } v \in \mathcal{N}.
\end{cases}
\end{equation}
\end{theorem}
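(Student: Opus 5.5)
The plan is to reduce $\mathcal{P}1$, in the single-token case, to a combinatorial selection over edge nodes whose objective is additively separable, and then read off the optimizer. The reduction has three steps.

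First, eliminate the continuous timing quantities by optimizing them inside each fixed selection. Fix a selection vector $\{x_v^{(\ell)}\}_{v\in\mathcal V}$. With $M=1$ we have $D_v^{(\ell)}=x_v^{(\ell)}\in\{0,1\}$, so the local energy (\ref{equ:energy_local}) becomes $x_{\rm UE}^{(\ell)}f_{\rm UE}^{(\ell)}$ with $f_{\rm UE}^{(\ell)}$ as in the statement; this uses $I_{\rm UE}^{(\ell)}=x_{\rm UE}^{(\ell)}$.

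For a helper $n$ with $x_n^{(\ell)}=1$, the energy (\ref{equ:energy-consumption-UE-n}) has the form $g(t)=c\,t\,(2^{a/t}-1)$ with $a=b/B_n>0$ and $t=T_{{\rm UE},n}^{(\ell)}$. I would show that $g$ is non-increasing in $t>0$. Writing $u=a/t$, we get $g=c\,a\,(2^u-1)/u$, and $(2^u-1)/u$ is increasing in $u$ because $2^u$ is convex with value $1$ at $u=0$. So the energy is minimized by making the uplink time as large as the latency constraint (\ref{constraint:P1_latency}) allows.

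The only remaining subtlety is the $\max\{T_{{\rm UE},n}^{(\ell)},T_{n,\rm load}\}$ term. Feasibility requires $\max\{\cdot\}\le T-\phi/C_n-b/R_{n,{\rm UE}}^{(\ell)}=\bar T_{{\rm UE},n}^{(\ell)}$, where the downlink rate is fixed because $P_{n,\rm UE}$ is given. The largest admissible uplink time is therefore exactly $\bar T_{{\rm UE},n}^{(\ell)}$, and this choice is feasible if and only if $T_{n,\rm load}\le \bar T_{{\rm UE},n}^{(\ell)}$ and $\bar T_{{\rm UE},n}^{(\ell)}>0$. These are precisely the latency conditions encoded in membership of $\Omega^{(\ell)}$. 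At this choice the helper energy equals $f_n^{(\ell)}$, and helpers with $x_n^{(\ell)}=0$ contribute zero.

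Second, characterize the feasible selections. Since $\mathcal{A}_{\rm Top}^{(\ell)}$ and the gates are known online, the accuracy constraint (\ref{constraint:P1_accuracy_per_layer}) with the surrogate (\ref{equ:offline_online_layerwise}) depends only on which set of nodes is selected. Together with (\ref{constraint:P1_num_active_expert}), the feasible binary vectors are therefore in bijection with the subsets $\mathcal J^{(\ell)}\in\Omega^{(\ell)}$ via $x_v^{(\ell)}=\mathbb I(v\in\mathcal J^{(\ell)})$. One point must be checked here: the way substitutes $\varphi(i)$ are assigned within a given selected set must not affect energy. It does not, because energy depends only on which nodes are active. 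Hence any convention the definition of $\Omega^{(\ell)}$ adopts for $\varphi$ is harmless.

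Third, conclude. After the inner minimization, the objective equals $\sum_{v\in\mathcal J^{(\ell)}}f_v^{(\ell)}$, so $\mathcal P1$ is equivalent to $\min_{\mathcal J^{(\ell)}\in\Omega^{(\ell)}}\sum_{v\in\mathcal J^{(\ell)}}f_v^{(\ell)}$. The optimal $x^{(\ell)\ast}$ is then the indicator of the minimizer $\mathcal J^{(\ell)\ast}$; $\Omega^{(\ell)}$ is finite and assumed nonempty, so a minimizer exists.

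I expect the main obstacle to be the first step. It requires a careful monotonicity argument for the exponential energy expression, and a rigorous justification that setting the uplink time to $\bar T_{{\rm UE},n}^{(\ell)}$ is simultaneously optimal and compatible with the loading-time term inside the max.
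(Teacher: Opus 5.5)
Your proposal is correct and takes the same route the paper sketches just before the theorem, whose formal proof it omits as straightforward. Since the uplink energy is non-increasing in $T_{{\rm UE},n}^{(\ell)}$, each activated helper uses the full residual budget $\bar T_{{\rm UE},n}^{(\ell)}$; this yields the additively separable per-node costs $f_v^{(\ell)}$, which are then minimized over $\Omega^{(\ell)}$. Your explicit monotonicity argument via $(2^u-1)/u$, your handling of the $\max\{T_{{\rm UE},n}^{(\ell)},T_{n,\rm load}\}$ term, and your remark that the substitute mapping $\varphi$ does not affect energy fill in details the paper leaves implicit.
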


\begin{proof}
    The proof is straightforward and is omitted due to space limitations.
\end{proof}

\begin{remark}
Consider the case $K=1$, where each token activates only one expert at each layer. In this case, the feasible set $\Omega^{(\ell)}$ reduces to a set of individual edge nodes whose hosted experts satisfy the latency and accuracy constraints. Consequently, $\mathcal{P}1$ admits a closed-form solution: the optimal expert is the one associated with the minimum per-expert energy cost $f_v^{(\ell)}$ among all feasible candidates. That is, the optimal decision is obtained by selecting
$v^\ast = \arg\min_{v \in \Omega^{(\ell)}} f_v^{(\ell)}$,
and setting $x_{v^\ast}^{(\ell)}=1$ and $x_v^{(\ell)}=0$ for all $v\neq v^\ast$.
\end{remark}

\subsection{Prefilling Phase with Multiple Tokens}
In this subsection, we consider the prefill phase where multiple input tokens are processed simultaneously. Unlike the decoding phase where only a single token is processed, tokens compete for the resources of the same helper. Therefore, expert selection decisions must be jointly optimized across tokens.

Each token may be offloaded to one or more helpers, which is governed by $K$. When $K=1$, each token is assigned to exactly one helper, and the resulting optimization problem reduces to a one-to-many assignment. 
When $K>1$, each token selects a subset of helpers, and the feasible decisions are defined over combinations of helpers rather than individual nodes. This introduces additional combinatorial coupling among helpers within each token’s decision.
Consequently, the structure of the optimization problem with $K>1$ is fundamentally different from that with $K=1$, and the algorithm designed for the $K=1$ case cannot be directly applied. 
We therefore distinguish these two settings and analyze them separately: we first consider the case with $K=1$, and then study the case with $K \geq 1$.

\subsubsection{When $K=1$}
Recall that $E_n^{(\ell)}$ is a non-increasing function of $T_{{\mathrm{UE}},n}^{(\ell)}$. Therefore, the uplink transmission time $T_{{\mathrm{UE}},n}^{(\ell)}$ should be maximized.
When $I_n^{(\ell)} > 0$, the optimal uplink transmission latency is given by
$T_{{\mathrm{UE}},n}^{(\ell) \ast} = T - D_n^{(\ell)} \left( \frac{\phi}{C_n} + \frac{b}{R_{n,{\mathrm{UE}}}^{(\ell)}}  \right) $. Otherwise, $T_{{\mathrm{UE}},n}^{(\ell) \ast} = 0$.
Therefore,  the optimal uplink energy consumption of transmitting hidden states to helper $n$ can be given by $E_n^{(\ell) \ast}  	 = 
	\frac{\left ( 2^{\frac{D_n^{(\ell)} b}{B_n T_{\mathrm{UE},n}^{(\ell) \ast} }} -1 \right )N_0 B_n T_{\mathrm{UE},n}^{(\ell) \ast}}{d_n^{-\alpha}G_{\mathrm{UE},n}h_{n}^{(\ell)}}$.
In particular, when $D_n^{(\ell)} = 0$, $E_n^{(\ell)\ast} = 0$ holds.

With the above optimal transmission strategy for any given load, we now analyze the properties of the energy cost function to facilitate the token-to-helper assignment.

\begin{proposition}\label{prop:discrete-convex}
$E_{\rm{UE}}^{(\ell)}$ is a monotonically increasing and discretely linear function of $D_{\rm{UE}}^{(\ell)}$, and
    $E_n^{(\ell)} $ is a monotonically increasing and discretely convex function of $D_n^{(\ell)}$.
\end{proposition}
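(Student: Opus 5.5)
For the local part, substitute $T_{\mathrm{UE,cp}}^{(\ell)}=D_{\mathrm{UE}}^{(\ell)}\phi/C_{\mathrm{UE}}$ into (\ref{equ:energy_local}). This gives
\begin{equation*}
E_{\mathrm{UE}}^{(\ell)}=I_{\mathrm{UE}}^{(\ell)}P_{\mathrm{UE,load}}T_{\mathrm{UE,load}}+\frac{P_{\mathrm{UE,cp}}\phi}{C_{\mathrm{UE}}}D_{\mathrm{UE}}^{(\ell)}.
\end{equation*}
For $D_{\mathrm{UE}}^{(\ell)}\ge 1$ the indicator equals one, so the function is affine with positive slope. Every forward difference $E(D+1)-E(D)$ is therefore the constant $P_{\mathrm{UE,cp}}\phi/C_{\mathrm{UE}}>0$, which gives both monotonicity and discrete linearity (zero second difference). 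The only special step is $D=0\to1$, where the first difference additionally contains the loading term. That jump is positive and larger than the later differences, so monotonicity still holds there. Accordingly, I will state linearity on $D\ge1$, with an affine offset at the origin.

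For the helper part, write $D=D_n^{(\ell)}$, $a=\phi/C_n+b/R_{n,\mathrm{UE}}^{(\ell)}$ and $c=N_0B_n/(d_n^{-\alpha}G_{\mathrm{UE},n}h_n^{(\ell)})$. With the optimal uplink time $\tau(D)=T-Da$, we get $E(D)=c\,\tau(D)\bigl(2^{Db/(B_n\tau(D))}-1\bigr)$ on the feasible range $Da<T$. I will extend $D$ to a continuous variable $x\in[0,T/a)$ and prove that $E(x)$ is increasing and convex there. Restricting to the integers then gives nonnegative, nondecreasing forward differences, i.e., discrete convexity. Since $E(0)=0$ and $E(1)>0$, the point $D=0$ causes no difficulty.

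The key structural step is to view $E$ as a composition. Define the perspective function $g(s,\tau)=\tau(2^{s/\tau}-1)$, where $s=xb/B_n$. Because $u\mapsto 2^u-1$ is convex, $g$ is jointly convex on $\tau>0$. Here $(s,\tau)=(xb/B_n,\,T-xa)$ is an affine map of $x$, so $E(x)=c\,g(\text{affine}(x))$ is convex. For monotonicity, compute the partial derivatives. First, $\partial g/\partial s=\ln 2\cdot 2^{s/\tau}>0$. Second, writing $u=s/\tau$, $\partial g/\partial\tau=2^{u}-1-u\ln2\,2^{u}\le 0$; this sign follows from the inequality $e^{v}(1-v)\le 1$ with $v=u\ln2$. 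Along the path, $s$ increases while $\tau$ decreases, so both contributions to $dE/dx$ are nonnegative, and $E$ is increasing.

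The main obstacle is a rigorous check of the joint convexity of the perspective $g$. If I avoid citing the perspective result, I would instead verify directly that $\frac{d^2}{dx^2}E\ge0$. To do so, write $u(x)=xb/(B_n(T-xa))$, note that $u$ is convex and increasing in $x$ (it is linear-fractional with a positive pole beyond the feasible range), and expand the second derivative of $\tau(x)(2^{u(x)}-1)$. Grouping terms should reduce the expression to $c\,\tau(\ln2)^2 2^{u}(u')^2+\text{nonnegative terms}$. Carrying out this expansion and handling the boundary behavior as $Da\to T$ are the delicate parts of the argument.
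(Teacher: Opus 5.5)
Your argument is correct. The overall skeleton (relax the load to a continuous variable, prove it is increasing and convex, restrict to integers) is the paper's, but your convexity step takes a different route.

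\textbf{The paper's route.} It substitutes the remaining uplink time $u=T-\gamma_n^{(\ell)}D_n^{(\ell)}$ and writes the energy as a one-variable function $\psi(u)=[2^{\frac{b}{B_n\gamma_n^{(\ell)}}(T/u-1)}-1]\,u$. It then differentiates twice by hand: $\psi'<0$ via $e^{x}(1-x)<1$, and an explicit computation gives $\psi''>0$. The affine chain rule finishes. That inequality is the same one you use for $\partial g/\partial\tau\le 0$, so your monotonicity step is essentially the paper's.

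\textbf{Your route.} You recognize the energy as the perspective of the convex map $u\mapsto 2^{u}-1$, evaluated along an affine path. Convexity then follows from a standard lemma with no second derivative. This is more structural and carries over verbatim to any convex rate-to-energy function $f$ with $f(0)=0$; in that setting $\partial g/\partial\tau=f(u)-uf'(u)\le 0$ is just the supporting-line inequality at $0$.

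\textbf{Strictness and your fallback.} The paper's computation also gives strict convexity, i.e., strictly increasing marginal costs, which it asserts in its last step. Your fallback recovers this immediately and is less delicate than you expect. Since $s=\tau u$ is linear in $x$, you get $2\tau'u'+\tau u''=0$, which leaves exactly $E''=c\,\tau(\ln 2)^{2}2^{u}(u')^{2}>0$. No boundary analysis at $Da\to T$ is needed, because only loads in $[0,T/a)$ are feasible.

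\textbf{Local energy.} Your handling of $E_{\mathrm{UE}}^{(\ell)}$ is more careful than the paper's, which simply asserts linearity. When $P_{\mathrm{UE,load}}T_{\mathrm{UE,load}}>0$, the second difference at $D_{\mathrm{UE}}^{(\ell)}=1$ equals $-P_{\mathrm{UE,load}}T_{\mathrm{UE,load}}<0$. So discrete linearity holds only for $D_{\mathrm{UE}}^{(\ell)}\ge 1$, exactly as you state.
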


\begin{proof}
    Please see Appendix \ref{proof:prop-discrete-convex}.
\end{proof}

From Proposition~\ref{prop:discrete-convex}, the following analysis focuses on $E_n^{(\ell)} $, which exhibits a nonuniform marginal-cost structure.
The term $E_{\rm{UE}}^{(\ell)}$ corresponds to the constant marginal-cost case and can be treated as a special case.
Define the marginal cost of assigning the $d$-th token to helper $n$ as $\Delta E_n^{(\ell)}(d)\triangleq E_n^{(\ell)}(d)-E_n^{(\ell)}(d-1)$. Since $E_n^{(\ell)}$ is discretely convex in $D_n^{(\ell)}$, the marginal cost sequence $\{\Delta E_n^{(\ell)}(d)\}$ is nondecreasing in $d$.
Accordingly, the energy cost of transmitting to helper $n$ can be expressed as an accumulated sum of marginal costs, i.e., $E_n^{(\ell)}(D_n^{(\ell)})=\sum_{d=1}^{D_n^{(\ell)}}\Delta E_n^{(\ell)}(d)$. This decomposition admits an exact slot-based interpretation: assigning $D_n^{(\ell)}$ tokens to helper $n$ is equivalent to occupying the first $D_n^{(\ell)}$ unit-capacity slots, where occupying the $d$-th slot incurs a cost $\Delta E_n^{(\ell)}(d)$.
Due to the monotonicity of the marginal costs, any feasible assignment that occupies a higher-index slot while leaving a lower-index slot unused can be transformed into another feasible assignment with no higher cost by exchanging these slots. Repeating this exchange yields an optimal assignment whose occupied slots form a prefix $\left \{ 1,\ldots,D_n^{(\ell)} \right \}$ for each helper. Hence, the slot representation constitutes an exact reformulation of the original discrete-convex objective.

When $K=1$, each feasible combination in $\Omega_m^{(\ell)}$ contains a single helper. Under the exact slot-based reformulation, the token–expert matching problem can be cast as a linear min-cost flow problem and solved optimally using a successive shortest augmenting path (SSAP) algorithm~\cite{ahuja1993network}.  Algorithm~\ref{alg:dc_matching} implements this procedure without constructing all slot nodes. 
Specifically, we construct a residual network consisting of a source node $s$, token nodes $\{u_m\}_{m\in\mathcal{M}}$, expert nodes $\{q_v\}_{v\in\mathcal{V}}$, and a sink node $t$. Edges $s \to u_m$ represent the initiation of token assignments, while edges $q_v \to t$ capture the marginal energy cost incurred when assigning additional tokens to node $q_v$. All arcs have unit capacity unless otherwise specified.
Under the discrete-convex objective, any load-preserving reassignment corresponds to a zero-cost cycle in the residual network. Augmenting flow along such cycles enables the SSAP algorithm to revise earlier assignments without increasing the total cost. 

The optimality of Algorithm~\ref{alg:dc_matching} follows from standard results on minimum-cost flow, where the SSAP algorithm is guaranteed to find a globally optimal solution~\cite{ahuja1993network}.
Since the graph may contain negative weights, the Bellman-Ford algorithm can be adopted to find the shortest path. Following standard complexity results for the SSAP algorithm~\cite{ahuja1993network},  the computational complexity of Algorithm~\ref{alg:dc_matching} is $\mathcal{O} \left ( M \left ( M+N \right ) \left (M+N+ \sum_{m \in \mathcal{M} } \left | \Omega_m^{(\ell)} \right | \right )    \right ) $.

\begin{algorithm}[t]
\caption{Successive shortest augmenting path-based token-expert matching when $K=1$}
\label{alg:dc_matching}
\begin{algorithmic}[1]
\State \textbf{Input:} Token set $\mathcal{M}$, edge node set $\mathcal{V}$, computing
capacity limits $\{D_{v,\max}^{(\ell)}\}_{v\in\mathcal{V}}$;
feasible sets $\{\Omega_m^{(\ell)}\}_{m \in \mathcal{M}}$;
marginal energy costs $\{\Delta E_v(d)\}_{v\in\mathcal{V}, d\leq D_{v,\max}^{(\ell)}}$.

\State Initialize $D_v^{(\ell)} \gets 0$, $\forall v\in\mathcal{V}$, $x_{m,v}^{(\ell)} \gets 0$ for all admissible $(m,v)$, and total required flow $F \gets |\mathcal{M}|$.

\While{$F>0$}
    \State Construct a residual network $\mathcal{G}_{\mathrm{res}}$ with nodes $\{s\}\cup\{u_m\}_{m\in\mathcal{M}}\cup\{q_v\}_{v\in\mathcal{V}}\cup\{t\}$.

    \For{each token $m\in\mathcal{M}$}
        \If{$\sum_{v\in \Omega_m^{(\ell)}} x_{m,v}^{(\ell)} = 0$}
            \State Add  arc $(s\to u_m)$ with cost $0$.
        \Else
            \State Add arc $(u_m\to s)$ with cost $0$.
        \EndIf
    \EndFor

    \For{each admissible pair $(m,v)$ with $v\in\Omega_m^{(\ell)}$}
        \If{$x_{m,v}^{(\ell)}=0$}
            \State Add arc $(u_m\to q_v)$ with cost $0$.
        \Else
            \State Add arc $(q_v\to u_m)$ with cost $0$.
        \EndIf
    \EndFor

    \For{each edge node $v\in\mathcal{V}$}
        \If{$D_v^{(\ell)} < D_{v,\max}^{(\ell)}$}
            \State Add arc $(q_v\to t)$ with cost $\Delta E_v\!\left(D_v^{(\ell)}+1\right)$.
        \EndIf
        \If{$D_v^{(\ell)} > 0$}
            \State Add arc $(t\to q_v)$ with cost $-\Delta E_v\!\left(D_v^{(\ell)}\right)$.
        \EndIf
    \EndFor

    \State Find a shortest path $P$ from $s$ to $t$ in $\mathcal{G}_{\mathrm{res}}$ using Bellman-Ford algorithm.
 If no such path exists, terminate and declare the problem infeasible. Otherwise, augment one unit of flow along $P$.

    \For{each arc $(a\to b)$ on path $P$}
        \If{$(a\to b)=(u_m\to q_v)$} \State $x_{m,v}^{(\ell)} \gets 1$, $D_v^{(\ell)} \gets D_v^{(\ell)} +1$. \ElsIf{$(a\to b)=(q_v\to u_m)$} \State $x_{m,v}^{(\ell)} \gets 0$, $D_v^{(\ell)} \gets D_v^{(\ell)} -1$. \EndIf
    \EndFor

    \State $F \gets F-1$.
\EndWhile

\State \Return $\left \{ x_{m,v}^{(\ell)} \right \}$.
\end{algorithmic}
\end{algorithm}

\subsubsection{When $K \geq 1$}

When each token can be assigned to multiple edge nodes, i.e., $K>1$, the structure of $\mathcal{P}1$ fundamentally changes. 
The feasible decision for each token is no longer a single edge node but a combination of edge nodes, which introduces combinatorial coupling across selection variables. As a result, $\mathcal{P}1$ becomes NP-hard. Nevertheless, two structural properties enable efficient algorithm design. First, the marginal cost formulation derived for the $K=1$ case still applies. Second, the system evolves sequentially across tokens and can be characterized by the helper load vector, which naturally leads to a load-based system state.

These structural properties motivate a DP formulation. By representing the system state using the helper load vector, the problem can be modeled as a sequential decision process across tokens. Accordingly, we develop a DP-based algorithm that yields the globally optimal solution for moderate problem sizes, as shown in Algorithm \ref{alg:dp_opt}. 
Specifically, each token $m \in \mathcal{M}$ selects exactly one feasible edge node combination $J_m^{(\ell)} \in \Omega_m^{(\ell)}$, where each combination satisfies $1 \le |J_m^{(\ell)}| \le K$.
For token $m$, let $\mathrm{DP}_m^{(\ell)}(\mathbf{D})$ denote the minimum total energy incurred by the first $m$ tokens, given the current load vector over all edge nodes $\mathbf{D} = \left \{ D_v^{(\ell)} \right \}_{v \in \mathcal{V}} $. When token $m$ selects a combination $J_m^{(\ell)}$, the system transitions to a successor state $\tilde{\mathbf{D}} = \left \{\tilde{D}_v^{(\ell)} \right \}_{v \in \mathcal{V}} $, where the load of each edge node $v$ is incremented by one. By iterating over tokens and feasible combinations, the DP algorithm explores all admissible load evolutions.

The optimality of Algorithm \ref{alg:dp_opt} follows from Bellman’s principle of optimality, as the problem exhibits an optimal substructure and the DP formulation exhaustively explores all feasible state transitions~\cite{bellman1957dynamic}.
Following standard DP procedure~\cite{bellman1957dynamic}, it is easy to show that the upper bound on the computational complexity of Algorithm \ref{alg:dp_opt} is $
\mathcal{O}\left(
M \Omega_{\max}  K \prod_{n}(D_{n,\max}^{(\ell)}+1)
\right)$, where $\Omega_{\max} = \max_{m \in \mathcal{M}} \left | \Omega_m^{(\ell)} \right | $. 
Note that Algorithm \ref{alg:dp_opt} also applies to the case $K=1$. However, for $K=1$, the problem admits a simpler structure and can be solved more efficiently by Algorithm \ref{alg:dc_matching}.

 \begin{algorithm}[t]
\caption{DP-based token-expert matching when $K \geq 1$}
\label{alg:dp_opt}
\begin{algorithmic}[1]

\State \textbf{Input:}
Token set $\mathcal{M}$, edge node set $\mathcal{V}$, computing
capacity limits $\{D_{v,\max}^{(\ell)}\}_{v\in\mathcal{V}}$;
feasible sets $\{\Omega_m^{(\ell)}\}_{m \in \mathcal{M}}$;
marginal energy costs $\{\Delta E_v(d)\}_{v\in\mathcal{V}, d\leq D_{v,\max}^{(\ell)}}$.

\State Initialize DP table:
$\mathrm{DP}_0^{(\ell)}(\mathbf{D}) \gets +\infty$ for all states $\mathbf{D}$,
and $\mathrm{DP}_0^{(\ell)}(\mathbf{0}) \gets 0$,
where $\mathbf{D}=\left\{ D_v^{(\ell)}\right\}_{v \in \mathcal{V}}$.

\State Initialize backtracking pointers:
$\mathrm{ParState}_0^{(\ell)}(\mathbf{0}) \gets \varnothing$,
$\mathrm{ParComb}_0^{(\ell)}(\mathbf{0}) \gets \varnothing$.

\For{$m=1$ to $M$}
    \State Initialize $\mathrm{DP}_m^{(\ell)}(\mathbf{D}) \gets +\infty$ for all states $\mathbf{D}$.
    
    \For{each state $\mathbf{D}  $ with $\mathrm{DP}_{m-1}^{(\ell)}(\mathbf{D})<+\infty$}
        \For{each combination $J \in\Omega_m^{(\ell)}$}
            \If{$D_v^{(\ell)}+1\le D_{v,\max}^{(\ell)}$ for all $v \in J$}
                \For{each edge node $v\in\mathcal{V}$}
                    \If{$v\in J$}
                        \State $\tilde{D}_v^{(\ell)} \gets D_v^{(\ell)}+1$
                    \Else
                        \State $\tilde{D}_v^{(\ell)} \gets D_v^{(\ell)}$
                    \EndIf
                \EndFor
                
                \State 
                $\Delta \gets \sum \limits_{v\in J}\Delta E_v\left(D_v^{(\ell)}+1\right) $.
                
                \If{$\mathrm{DP}_{m-1}^{(\ell)}(\mathbf{D})+\Delta
                < \mathrm{DP}_m^{(\ell)}(\mathbf{\tilde{D}})$}
                    \State $\mathrm{DP}_m^{(\ell)}(\mathbf{\tilde{D}})
                    \gets \mathrm{DP}_{m-1}^{(\ell)}(\mathbf{D})+\Delta$;
$\mathrm{ParState}_m^{(\ell)}(\mathbf{\tilde{D}}) \gets \mathbf{D}$;
 $\mathrm{ParComb}_m^{(\ell)}(\mathbf{\tilde{D}}) \gets J$.
                \EndIf
            \EndIf
        \EndFor
    \EndFor
\EndFor

\State Select terminal state: $\mathbf{D}^{\ast}
\gets \arg\min_{\mathbf{D}} \mathrm{DP}_M^{(\ell)}(\mathbf{D})$.
 Backtracking to recover $\{J_m^{(\ell)\star}\}$: $\mathbf{D}_{\mathrm{cur}} \gets \mathbf{D}^{\ast}$.
\For{$m=M$ down to $1$}
    \State $J_m^{(\ell)\star}
    \gets \mathrm{ParComb}_m^{(\ell)}(\mathbf{D}_{\mathrm{cur}})$.
 $\mathbf{D}_{\mathrm{cur}}
    \gets \mathrm{ParState}_m^{(\ell)}(\mathbf{D}_{\mathrm{cur}})$.
\EndFor

\State \Return Optimal expert subset selection $\{J_m^{(\ell)\star}\}$.

\end{algorithmic}
\end{algorithm}

 \subsubsection{Design insights}
The following remarks provide insight into the structural reasons behind the optimality of the proposed algorithms for arbitrary $K$ and characterize the optimal transmission strategy.

\begin{remark}[Marginal-cost curvature over feasible edge loads]
    The optimal token–expert assignment is governed entirely by the growth rate of the marginal energy cost with respect to edge load.
    For helper nodes, the discretely convex uplink energy function induces a load-dependent marginal cost whose curvature determines the assignment behavior.
When the marginal cost increases steeply due to limited computing capability, unfavorable user–helper channel conditions, or tight transmission time constraints, assigning many tokens to the same helper becomes energy-inefficient, and tokens are preferentially assigned to other helpers with lower marginal costs.
Conversely, when the marginal cost grows slowly over the feasible load region, additional tokens tend to be assigned to that helper until its marginal cost becomes comparable to others.
The user (local computing) corresponds to a special linear case with constant marginal cost. 
\end{remark}

\begin{remark}[Cost symmetry and non-uniqueness]
The system cost depends only on the edge node's load vector $D_v^{(\ell)}$ and is invariant to the identities of individual tokens. This load-based symmetry implies that tokens with nested feasible edge node sets can be reassigned or exchanged across edge nodes without violating optimality, as long as the resulting load vector remains unchanged. 
\end{remark}

\begin{remark}[Uniform bit transmission under slow fading]\label{uniform}
Under slow fading, the uplink transmission energy is a convex function of the number of transmitted bits over the layer duration.
Therefore, for a given uplink transmission time budget, evenly distributing the required bits across time slots minimizes the total energy consumption.
\end{remark}

\section{Joint Expert Selection and adaptive transmission under fast fading}\label{sec:fast_fading}
In contrast to the slow-fading case discussed in Section \ref{sec:slow_fading}, fast fading introduces time-varying channel conditions within each MoE layer. Under such dynamics, uniformly distributing transmitted bits (in Remark \ref{uniform}) across time slots is no longer energy-optimal, and exploiting channel variations through adaptive slot-level bit allocation becomes necessary. In this section, we study the joint optimization of expert selection and per-slot bit allocation under fast fading to minimize expected energy consumption.

\subsection{Problem Reformulation}
To model fast fading within each MoE layer, we partition the uplink transmission phase into multiple time slots of equal duration $\tau$. 
The channel gain is assumed to be quasi-static, i.e., being constant within each slot but varying independently across slots.
Accordingly, the uplink transmission energy incurred by helper $n$ in time slot $t$ of MoE layer $\ell$ is given by
\begin{equation}
	E_{n,t}^{(\ell)}
	=
	\frac{\left ( 2^{\frac{\gamma_{n,t}^{(\ell)}}{B_n \tau }} -1 \right )N_0 B_n \tau }{d_n^{-\alpha}G_{\mathrm{UE},n} h_{n,t}^{(\ell)}},
\end{equation}
where $\gamma_{n,t}^{(\ell)}$ is the number of bits transmitted from the user to helper $n$ in slot $t$ of layer $\ell$, and $h_{n,t}^{(\ell)}$ is the corresponding fast-fading channel gain.

We now formulate the joint optimization of expert selection $\mathbf{X}^{(\ell)}$ and slot-level bit allocation $\bm{\gamma}^{(\ell)}$ under fast fading. Specifically, for each MoE layer $\ell$, the problem is given by
\begin{subequations}
	\begin{equation}\label{equ:obj_P2}
		\mathcal{P}2: \quad		\min_{\mathbf{X}^{(\ell)}, \bm{\gamma}^{(\ell)}} \;  E_{\rm{UE}}^{(\ell)}+
        \mathbb{E}_{h} \left[ \sum_{n \in \mathcal{N}}\sum_{t=1}^{Q_n^{(\ell)}}  E_{n,t}^{(\ell)}   \right]
	\end{equation}
	\begin{equation}
		{\rm{s.t.}} \quad  (\mathrm{\ref{constraint:P1_latency} }) - (\mathrm{\ref{constraint:P1_binary}}),
	\end{equation}
\begin{equation}\label{constraint:P3layer-sumbit}
    \sum_{t=1}^{Q_n^{(\ell)}} \gamma_{n,t}^{(\ell)} = b D_n^{(\ell)}, \forall n \in \mathcal{N}, \ell \in \mathcal{L},
    \end{equation}
\end{subequations}
where $Q_n^{(\ell)} = \left \lceil \frac{T_{\mathrm{UE},n}^{(\ell)}}{\tau} \right \rceil  $ denotes the number of uplink transmission slots allocated to helper $n$ in layer $\ell$.

Due to the separable sum-bit constraint across helpers, the inner optimization over $\bm{\gamma}^{(\ell)}$ can be decomposed. Then, the total energy consumption of helpers in (\ref{equ:obj_P2}) can be equivalently rewritten as 
\[\min_{\mathbf{X}^{(\ell)}, \bm{\gamma}^{(\ell)}} 
        \mathbb{E}_{h} \left[ \sum_{n \in \mathcal{N}}\sum_{t=1}^{Q_n^{(\ell)}}  E_{n,t}^{(\ell)}   \right]
=
\min_{\mathbf X^{(\ell)}}\sum_{n\in\mathcal N}
E_{n}^{(\ell)\ast} \left(D_n^{(\ell)}\right),\]
where $E_{n}^{(\ell)\ast}\left ( D_n^{(\ell)} \right )  \;\triangleq\; \min_{\pi_n} \mathbb E\!\left[ \sum_{t=1}^{Q_n^{(\ell)}} E_{n,t}^{(\ell)}\!\bigl(\gamma_{n,t}^{(\ell)},h_{n,t}^{(\ell)}\bigr) \right]$ is the minimum expected uplink energy required to transmit $bD_n^{(\ell)}$ bits to helper $n$ under fast fading, and $\pi_n$ represents a causal bit-allocation policy that adapts $\gamma_{n,t}^{(\ell)}$ to the instantaneous fading $h_{n,t}^{(\ell)}$. 

\subsection{Expert Selection Scheme}
Although exact, the value function $E^{(\ell)\ast}_{n}\left ( D_n^{(\ell)} \right ) $ generally admits no closed-form expression and is difficult to evaluate. To enable tractable expert selection, we introduce a deterministic surrogate by replacing the random fading term with its expectation. 
Specifically, we define $\tilde E_n^{(\ell)}(D) \triangleq \mathbb E\!\left[\frac{1}{h}\right] \frac{\left( 2^{\frac{bD}{B_nT_{\mathrm{UE},n}^{(\ell)}}}-1 \right) N_0 B_nT_{\mathrm{UE},n}^{(\ell)}} {d_n^{-\alpha}G_{\mathrm{UE},n}}$.
The surrogate $\tilde E_n^{(\ell)}(D)$ corresponds to minimizing the expected energy over a restricted class of non-adaptive bit-allocation policies that do not exploit instantaneous fading variations. Since $E_{n}^{(\ell)\ast}(D)$ minimizes over all causal policies, the surrogate provides an upper bound: $E_{n}^{(\ell)\ast}(D)\ \le\ \tilde E_n^{(\ell)}(D), \forall D$.
Using this surrogate, the outer problem reduces to $\min \limits_{\mathbf X} \; E_{\rm{UE}}^{(\ell)}+ \sum_{n\in\mathcal N} \tilde E_n^{(\ell)}\left ( D_n^{(\ell)}  \right )$, which has the same structural form as the slow-fading case and can therefore be solved using the proposed algorithms in Section \ref{sec:slow_fading}.

\subsection{Adaptive Transmission Strategy}
After obtaining the optimal helper loads $\left \{ D_n^{(\ell) \ast} \right \} $ from the outer problem, the optimal slot-level bit allocation $\left \{\gamma_{n,t}^{(\ell)} \right \}$ is recovered by solving the inner problem for helper $n$, which is given by
\begin{equation}
		\mathcal{P}3_n: \quad		\min_{\bm{\gamma}^{(\ell)}} \;   \mathbb E_{h} \left[ \sum_{t=1}^{Q_n^{(\ell)}} E_{n,t}^{(\ell)}\!\left(\gamma_{n,t}^{(\ell)}, h_{n,t}^{(\ell)}\right) \right] 
	\end{equation}
    with the constraint $\sum_{t=1}^{Q_n^{(\ell)}} \gamma_{n,t}^{(\ell)} = b  D_n^{(\ell)\ast}$.

Let $\beta_{n,t}^{(\ell)}$ denote the remaining number of bits to be transmitted to helper $n$ at the beginning of time slot $t$ during MoE layer $\ell$.
    Using the approach of DP, the objective function of $\mathcal{P}3_n$ can be rewritten as 
    \begin{equation}
    \begin{split}
    &\mathcal{P}4_n:	U_t\left(  \beta_{n,t}^{(\ell)},  h_{n,t}^{(\ell)}        \right) \\
    & = 
	\begin{cases}
		\min \limits_{\gamma_{n,t}^{(\ell)} \leq \beta_{n,t}^{(\ell)}} E_{n,t}^{(\ell)}+	{\bar{U}}_t\left(  \beta_{n,t}^{(\ell)} -\gamma_{n,t}^{(\ell)}   \right),  & \text{ if } 1\leq t <Q_n^{(\ell)}, \\
		\frac{\left ( 2^{a_n  \beta_{n,t}^{(\ell)}} -1 \right )c_n }{h_{n,t}^{(\ell)}}, & \text{ if } t=Q_n^{(\ell)},
	\end{cases}
    \end{split}
\end{equation}
where $a_{n} = \frac{1}{B_n \tau} $, $c_n = \frac{N_0 B_n \tau }{d_n^{-\alpha}G_{\mathrm{UE},n}}$.
The cost-to-go function
$ {\bar{U}}_t\left( \beta_{n,t}^{(\ell)} \right)  = \mathbb{E}_h \left [ U_t\left(  \beta_{n,t}^{(\ell)} ,  h_{n,t}^{(\ell)}      \right) \right ]  $ denotes the expected energy consumption for transmitting $\beta_{n,t}^{(\ell)}$ bits over the remaining slots.

\begin{theorem}[Optimal slot-level bit allocation under fast fading]\label{theorem:fastfading}
    Under fast-fading channels, considering a fixed helper $n$ and a given load $D_n^{(\ell)}$ at MoE layer $\ell$, the optimal bit allocation in slot $t$ is given by
    \begin{equation}
    \begin{split}
         \gamma_{n,t}^{(\ell)\ast} & = \frac{1}{\left ( Q_n^{(\ell)}-t+1 \right ) a_n} \\
 & \cdot \left[ a_n\beta_{n,t}^{(\ell)} +\sum_{j=0}^{Q_n^{(\ell)}-t-1}\log_2\left( h_{n,t}^{(\ell)} \mathbb{E}\!\left[\frac{1}{h_{n,Q-j}^{(\ell)}}\right] \right) \right].
    \end{split}
    \end{equation}

    The resulting minimum expected uplink energy consumption is given by
\begin{equation}
    \begin{split}
\tilde{E}_n^{\ast}\left( D_n^{(\ell)} \right) & = c_n\left[ Q_n^{(\ell)} 2^{\frac{a_n D_n^{(\ell)}}{Q_n^{(\ell)}}}\left(\frac{1}{h_{n,1}^{(\ell)}} \prod_{t=2}^{Q_n^{(\ell)}}\mathbb{E}\left [ \frac{1}{ h_{n,t}^{(\ell)}  } \right ]\right) \right. \\
        & \left. - \left(\frac{1}{h_{n,1}^{(\ell)}}+\sum_{t=2}^{Q_n^{(\ell)}}\mathbb{E}\left [ \frac{1}{ h_{n,t}^{(\ell)}  } \right ]	\right)
		\right].
    \end{split}
\end{equation}
\end{theorem}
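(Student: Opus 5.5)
Write $Q\triangleq Q_n^{(\ell)}$ and $\mu_t\triangleq \mathbb{E}\!\left[1/h_{n,t}^{(\ell)}\right]$. The plan is backward induction on the DP $\mathcal{P}4_n$, in the bit variable $x=a_n\beta$. The induction hypothesis is that the cost-to-go has the exponential-minus-constant form
\begin{equation*}
\bar U_{t}(\beta)=c_n\Big[(Q-t+1)\,2^{\frac{a_n\beta}{Q-t+1}}\Big(\prod_{j=t}^{Q}\mu_j\Big)^{\frac{1}{Q-t+1}}-\sum_{j=t}^{Q}\mu_j\Big].
\end{equation*}
The product inside the bracket is a geometric-mean factor. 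The statement as printed shows the product without the $1/(Q-t+1)$ exponent and writes $a_nD$ instead of $a_nbD$; I would carry out the induction in the form above and then reconcile it with the printed normalization.

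\textbf{Base case.} At $t=Q$ all remaining bits must be sent, so $U_Q=c_n(2^{a_n\beta}-1)/h_{n,Q}$. Taking expectation gives $\bar U_Q=c_n(\mu_Q 2^{a_n\beta}-\mu_Q)$, which matches the hypothesis with $Q-t+1=1$.

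\textbf{Inductive step.} Assume the form holds at $t+1$, and set $k=Q-t$ and $G=\big(\prod_{j>t}\mu_j\big)^{1/k}$. Then
\begin{equation*}
U_t=\min_{\gamma}\; c_n\Big[\frac{2^{a_n\gamma}-1}{h_{n,t}}+k\,G\,2^{\frac{a_n(\beta-\gamma)}{k}}-\sum_{j>t}\mu_j\Big].
\end{equation*}
This objective is strictly convex in $\gamma$, since it is a sum of exponentials in $\gamma$. Setting the derivative to zero gives
\begin{equation*}
\frac{2^{a_n\gamma}}{h_{n,t}}=G\,2^{\frac{a_n(\beta-\gamma)}{k}},
\end{equation*}
so that
\begin{equation*}
(k+1)\,a_n\gamma=a_n\beta+\sum_{j>t}\log_2\!\big(h_{n,t}\,\mu_j\big).
\end{equation*}
This is exactly the stated $\gamma^{\ast}_{n,t}$, since the index set $j=0,\dots,Q-t-1$ enumerates slots $Q,\dots,t+1$. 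I will assume the constraint $0\le\gamma\le\beta$ is inactive, which is the interior solution. The paper's formula implicitly allows this, because with real-valued bits a negative $\gamma$ would be needed for clipping, and I would note that as a caveat. Substituting the optimum back, both exponential terms equal a common value $2^{a_n\gamma^\ast}/h_{n,t}$. Hence
\begin{equation*}
U_t=c_n\Big[(k+1)\,2^{\frac{a_n\beta}{k+1}}\Big(h_{n,t}^{-1}\,G^{k}\Big)^{\frac{1}{k+1}}-h_{n,t}^{-1}-\sum_{j>t}\mu_j\Big].
\end{equation*}

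\textbf{Main obstacle.} The hard step is taking the expectation to get $\bar U_t$: we need $\mathbb{E}\big[h^{-1/(k+1)}\big]$ and not $\mu_t^{1/(k+1)}$. By Jensen's inequality these are not equal, so the closed form only holds under the surrogate convention already used in the paper for $\tilde E_n^{(\ell)}$, namely replacing random $1/h$ factors by their means. I would state this as the approximation adopted and then close the induction.

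\textbf{Final value.} At $t=1$ we do not take an expectation, because $h_{n,1}$ is observed. This gives the stated energy expression, with $1/h_{n,1}$ in place of $\mu_1$ and $\beta_{n,1}=bD_n^{(\ell)}$.
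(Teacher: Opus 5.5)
Your proposal follows the paper's appendix proof essentially step for step. It uses the exact terminal cost $\bar U_Q$, the first-order condition equating the two marginal exponentials, substitution back into the cost, unrolling the recursion, and stopping at $t=1$ with $h_{n,1}$ observed, which gives the same $\gamma^{\ast}_{n,t}$. Your caveats are accurate and are not defects of your argument. The paper's ``calculating the above repeatedly'' silently replaces $\mathbb{E}[h^{-1/(i+1)}]$ by $(\mathbb{E}[1/h])^{1/(i+1)}$, so for non-degenerate fading the closed forms are exact only for $\gamma^{\ast}$ in the last two slots, and the energy only when $Q_n^{(\ell)}\le 2$. The appendix's own formula carries the $1/Q_n^{(\ell)}$ exponent that the theorem statement drops. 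The paper's argument likewise ignores the box constraint $0\le\gamma\le\beta$.
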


\begin{proof}
    Please see Appendix \ref{proof:theorem-fastfading}.
\end{proof}

\section{Experimental Results}\label{sec:experiment}
\subsection{Experimental Setup}
In our simulations, we consider three types of MoE models. The first is Switch-base-8~\cite{fedus2022switch}, a Switch Transformer-based MoE model with 8 experts per MoE layer, which is tested on the Extreme Summarization (XSum) dataset~\cite{narayan2018don}. Under the conventional score-based Top-$K$ routing mechanism, each token is routed to a single expert at each layer (i.e., $K=1$).
The second is Mixtral-8$\times$7B~\cite{jiang2024mixtral}, an MoE model with 8 experts per MoE layer, which is tested on the CommonsenseQA dataset~\cite{talmor2019commonsenseqa}. Under the Top-$K$ routing mechanism, each token is routed to two experts at each layer (i.e., $K=2$).
The third is GPT-OSS-20B~\cite{agarwal2025gpt}, an MoE model with 32 experts per MoE layer, which is evaluated on the GSM8K dataset~\cite{cobbe2021trainingverifierssolvemath}. Under its Top-$K$ routing mechanism, each token is routed to four experts at each layer (i.e., $K=4$).
The expert loading latency and computation latency are measured on an NVIDIA GeForce RTX 4090 GPU. To improve the realism of the wireless evaluation, the user-helper distances are generated from the Microsoft GeoLife GPS trajectory dataset~\cite{zheng2011geolife}, where helpers are fixed and the user location follows real GPS traces mapped to a service area with a maximum distance of 150 m.
The wireless parameters are set as follows~\cite{11395617}: The transmit power of helpers is 38 dBm. The path-loss factor is $\alpha = 4$, the antenna-related factor is 1, and the noise power spectral density is -174 dBm/Hz. The small-scale channel fading $h$ is modeled as a Gamma-distributed fading variable with unit mean, where the shape parameter is set to 2~\cite{7880703}. The hyperparameters introduced in Assumptions \ref{assumption:bounded_output} and \ref{assumption:Lip_expert} can be empirically estimated using finite-difference tests following the lightweight approach in~\cite{11314800,wei2025optimizing}.

To demonstrate the effectiveness of our proposed SiftMoE during WIDE inference, we compare SiftMoE with two benchmark schemes:
\begin{itemize}
    \item \textbf{Ideal Top-K}: This benchmark follows the conventional score-based Top-$K$ expert routing, where the outputs of all selected experts are always successfully received by the user within the given latency constraint. It therefore serves as an upper bound on the achievable inference accuracy.

    \item  \textbf{(Practical) Top-K}: This scheme also adopts the conventional score-based Top-$K$ routing but accounts for realistic wireless channel conditions. In this case, some expert outputs may fail to be delivered to the user within the latency budget, leading to degraded inference performance.

 \item \textbf{WDMoE}~\cite{11083676}: This benchmark adopts a gating-score-based adaptive expert activation strategy. It sorts experts according to their gating scores and sequentially activates them until the cumulative gating score exceeds a predefined threshold, thereby dynamically determining the number of activated experts for each token.

    \item \textbf{AdaptMoE}~\cite{huang-etal-2024-harder}: This benchmark adopts a wireless-aware adaptive expert selection strategy for WIDE inference. It jointly considers expert gating weights and the latency of the corresponding experts to dynamically drop low-weight experts, while further coupling expert selection with wireless bandwidth allocation.

\end{itemize}

\subsection{Effectiveness of Proposed Offline-online Combined Estimation Method}

To validate the practical effectiveness of the proposed offline-online combined estimation method, we compare three quantities in Fig. \ref{fig:layer_deviation}: the measured layer-output deviation, the Proposition-1 error bound, and the proposed offline-online combined estimation. Specifically, the measured layer-output deviation refers to the actual average output deviation at each MoE layer caused by SiftMoE. The Proposition-1 error bound is computed using the right-hand side of the inequality in (\ref{equ:prop1_layerwise}), where the input-dependent expert-output mismatch is obtained by executing both the original and substitute experts. The proposed offline-online combined estimation corresponds to (\ref{equ:offline_online_layerwise}), where the expert scores are obtained online from the gating network, while the expert mismatches are estimated offline using a calibration dataset.

Fig. \ref{fig:prop1_deviation_vs_tau_switch} shows the results for the Switch-base-8 model. The analytical upper bound overlaps with the measured output deviation over the whole range of maximum tolerable errors. This is because only one expert is involved under Top-1 routing, and the triangle inequality used in the proof of Proposition \ref{prop:layer-expect-perturbation} becomes tight. Meanwhile, the proposed offline-online combined estimation closely follows the measured deviation with only a small conservative gap, indicating that the offline-calibrated mismatch can accurately approximate the practical layer-wise deviation.

Fig. \ref{fig:prop1_deviation_vs_tau_mixtral} presents the results for the Mixtral-8$\times$7B model. In this case, multiple activated experts jointly contribute to the layer output, making the deviation characterization more challenging. Nevertheless, the Proposition-1 error bound consistently remains above the measured output deviation and stays close to it, verifying both the validity and tightness of the proposed analytical bound. The proposed offline-online combined estimation also closely tracks the measured deviation, demonstrating that the calibrated expert-pair mismatch provides an accurate and lightweight surrogate for online expert selection. A similar trend can be observed for the GPT-OSS-20B model in Fig. \ref{fig:prop1_deviation_vs_tau_gpt}, further confirming that the proposed bound and estimation method remain effective when more experts are activated simultaneously.

Overall, these results show that the proposed offline-online combined estimation provides an accurate and efficient way to enforce the layer-wise deviation constraint in SiftMoE. The computationally expensive estimation of expert-pair mismatches is performed offline using a calibration dataset, while the online stage only combines the calibrated mismatches with the routing scores of the current Top-$K$ experts. The close match among the measured layer-output deviation, the Proposition-1 error bound, and the proposed offline-online combined estimation further indicates that the resulting expert-selection policy is not overly conservative, providing an effective basis for controlling the accuracy-energy tradeoff.

\begin{figure*}[t]
	\centering
	\subfigure[Switch-base-8]{\includegraphics[width =0.25\textwidth]{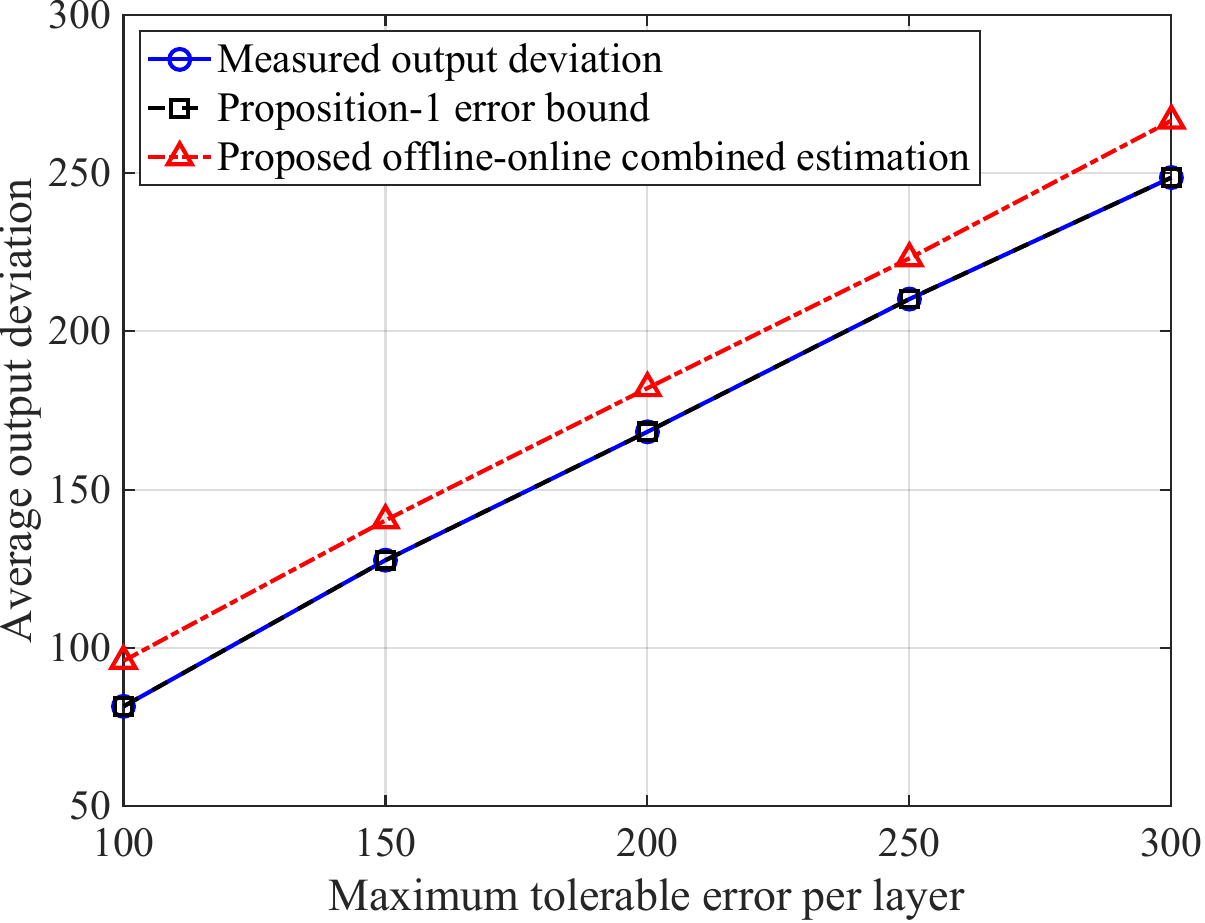}\label{fig:prop1_deviation_vs_tau_switch}}
    \quad 
	\subfigure[Mixtral-8$\times$7B]{\includegraphics[width =0.25\textwidth]{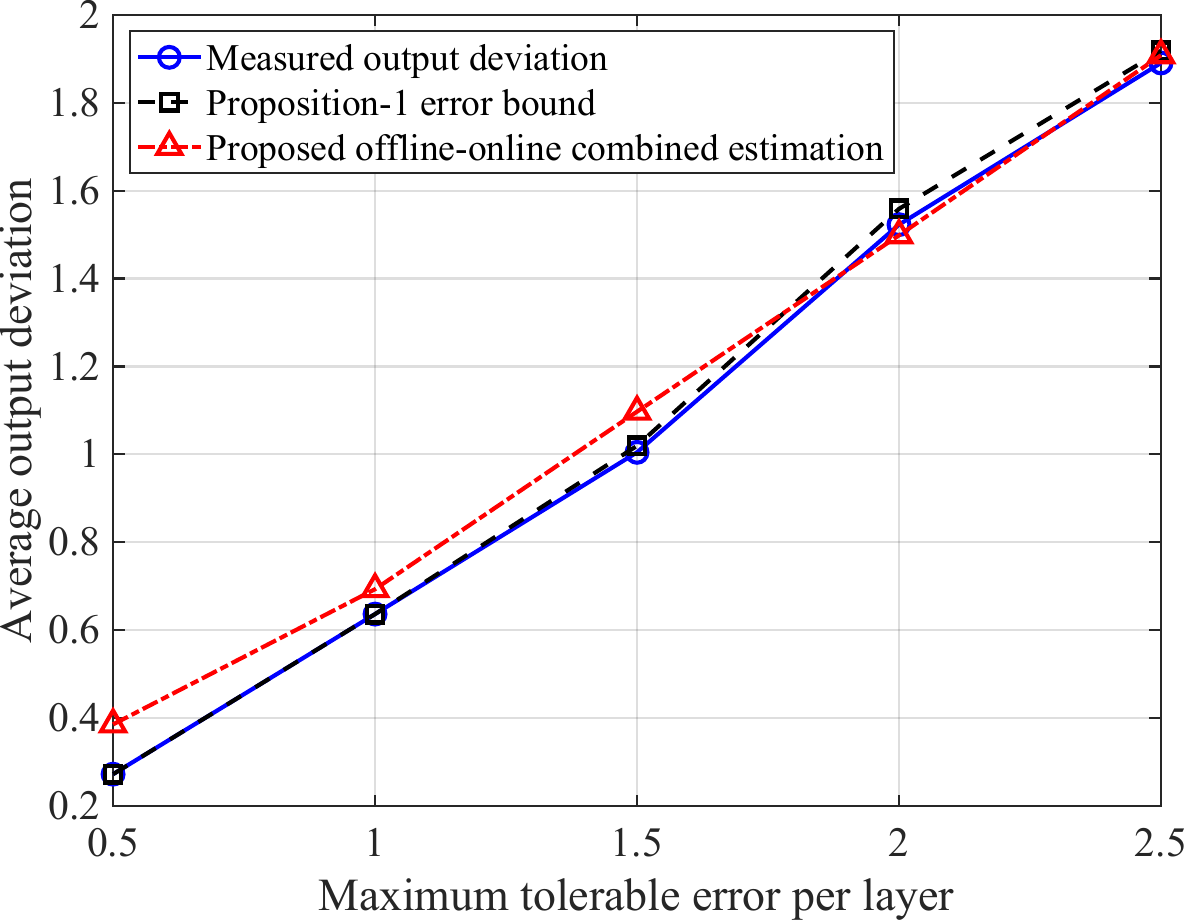}\label{fig:prop1_deviation_vs_tau_mixtral}}
    \quad 
    \subfigure[GPT-OSS-20B]{\includegraphics[width =0.25\textwidth]{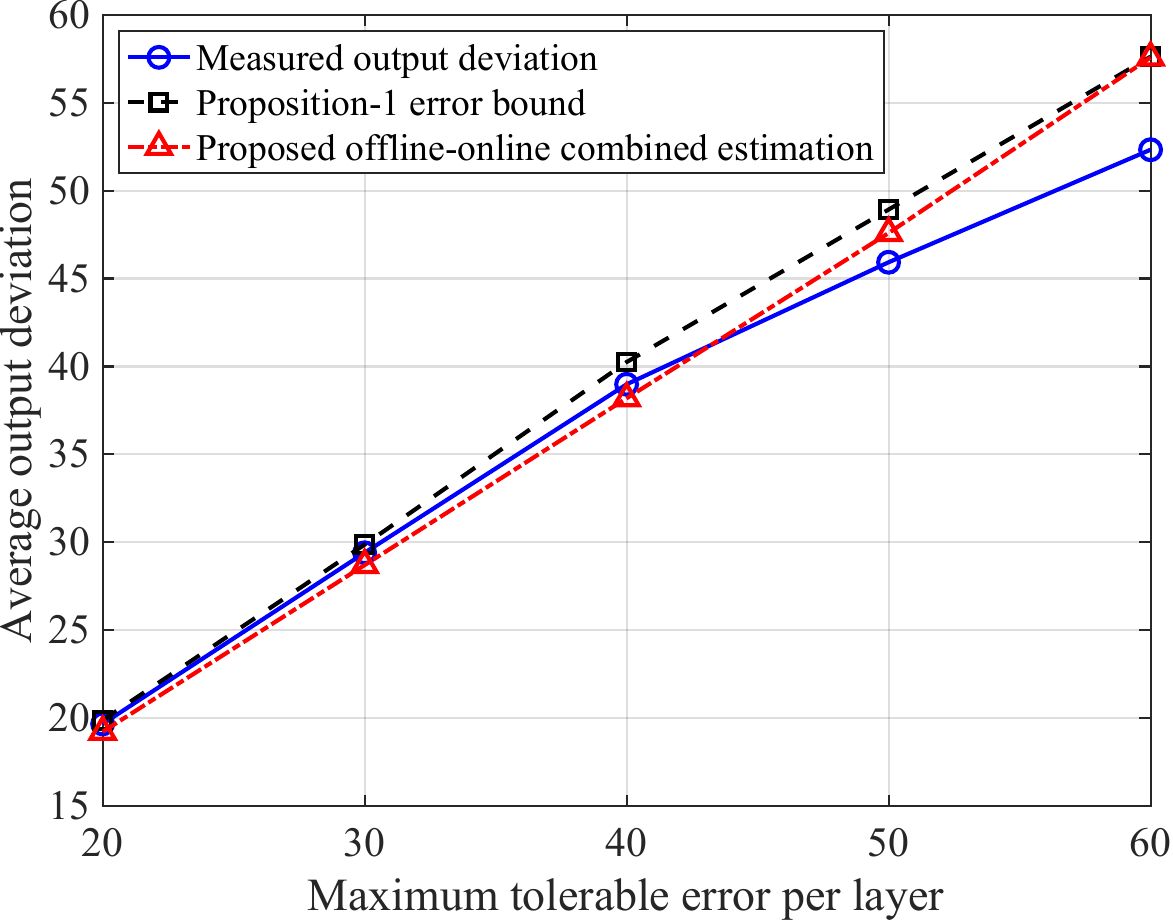}\label{fig:prop1_deviation_vs_tau_gpt}}
	\caption{Validation of the analytical upper bound and the proposed offline-online combined estimation.}
\label{fig:layer_deviation} 
\end{figure*}

\subsection{Effectiveness of SiftMoE and Accuracy-energy Tradeoff}\label{subsec:accuracy_energy}
To validate the effectiveness of expert replacement and skipping in the proposed SiftMoE, we compare the inference accuracy and energy consumption of SiftMoE with those of the benchmark schemes under slow-fading channels, where the impact of expert selection can be clearly observed.
Fig. \ref{fig:delta_switch_slow} illustrates the performance comparison on the Switch-base-8 model over XSum under different maximum tolerable errors per layer. As shown in Fig. \ref{fig:fig_AccDelta_switch_slow}, when the maximum tolerable error is small, the accuracy achieved by SiftMoE is very close to that of the ideal Top-1 benchmark, indicating that the proposed SiftMoE introduces only limited accuracy degradation. 
When the maximum tolerable error increases, the accuracy of SiftMoE gradually decreases due to accumulated layer-wise deviations. However, within a moderate range of tolerant errors, SiftMoE consistently outperforms the practical Top-1 scheme that accounts for realistic channel conditions. Meanwhile, Fig. \ref{fig:fig_EnergyDelta_switch_slow} shows that as the maximum tolerable error increases, the average energy consumption per token under SiftMoE decreases monotonically and remains significantly lower than that of Top-1 benchmark.
Fig. \ref{fig:fig_EnergyAccuracy_switch} further presents the resulting accuracy-energy tradeoff. Compared with Top-1 routing, WDMoE, and AdaptMoE, SiftMoE achieves a more favorable tradeoff between inference accuracy and communication energy consumption. This is because SiftMoE not only relies on expert scores or wireless latency for expert dropping, but also exploits functional expert similarity to enable accuracy-aware expert selection.

Fig. \ref{fig:delta_mixtral_slow} and Fig. \ref{fig:delta_gpt_slow} present the corresponding results for the Mixtral-8$\times$7B model on CommonsenseQA and GPT-OSS-20B on GSM8K, respectively. Similar performance trends can be observed in both cases: SiftMoE reduces the average energy consumption per token as the maximum tolerable error increases, while maintaining controllable accuracy degradation. Moreover, the accuracy-energy tradeoff curves show that SiftMoE consistently outperforms the corresponding Top-K, WDMoE, and AdaptMoE baselines. These results demonstrate that by appropriately tuning the maximum tolerable error, SiftMoE enables a flexible accuracy-energy tradeoff and remains effective across different MoE architectures and datasets.

These results demonstrate that by appropriately tuning the maximum tolerable error, SiftMoE enables a flexible accuracy–energy trade-off, where substantial energy savings can be achieved at the cost of controlled and predictable accuracy degradation.

\begin{figure*}[!t]
	\centering
	\subfigure[Accuracy.]{\includegraphics[width =0.25\textwidth]{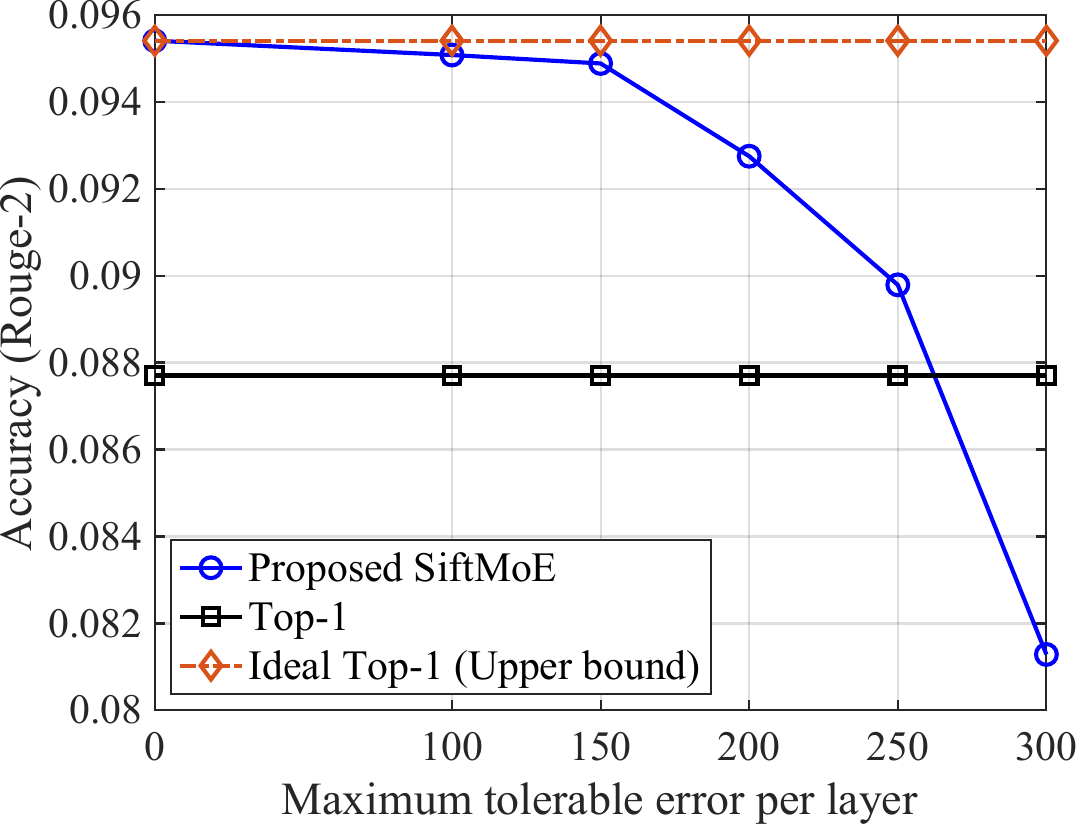}\label{fig:fig_AccDelta_switch_slow}}
    \quad 
	\subfigure[Energy consumption per token.]{\includegraphics[width =0.25\textwidth]{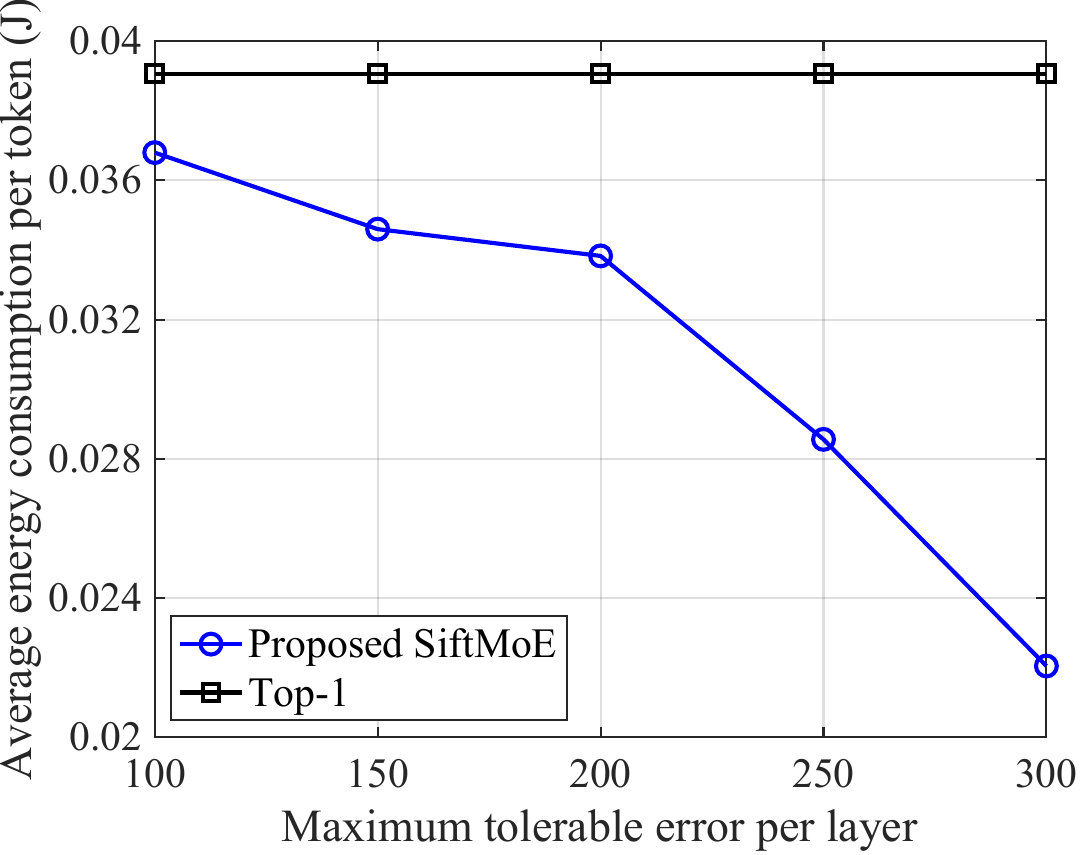}\label{fig:fig_EnergyDelta_switch_slow}}
    \quad 
    \subfigure[Accuracy-energy tradeoff.]{\includegraphics[width =0.25\textwidth]{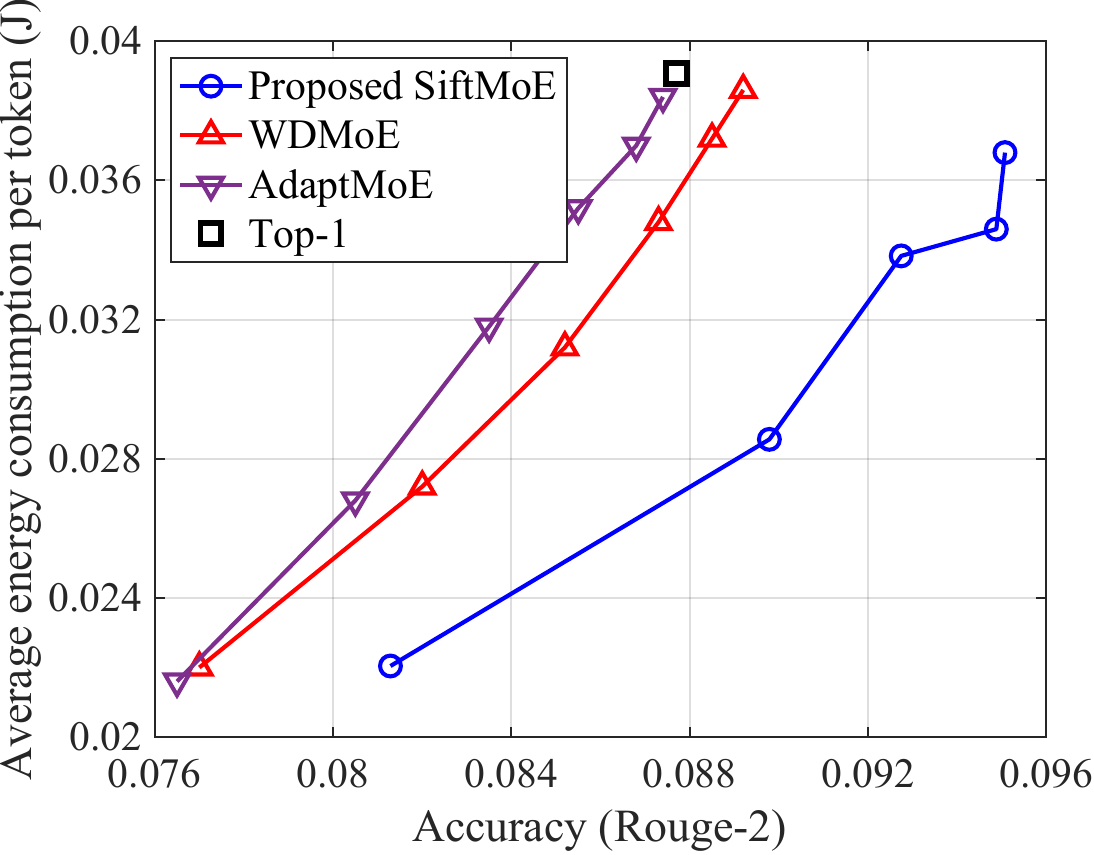}\label{fig:fig_EnergyAccuracy_switch}}
\caption{Performance comparison of different schemes under slow-fading channels for different maximum tolerable errors per layer on Switch-base-8 over XSum, where $B_n = 1$ MHz and $T = 0.7$ s.} 
\label{fig:delta_switch_slow} 
\end{figure*}

\begin{figure*}[!t]
	\centering
	\subfigure[Accuracy.]{\includegraphics[width =0.24\textwidth]{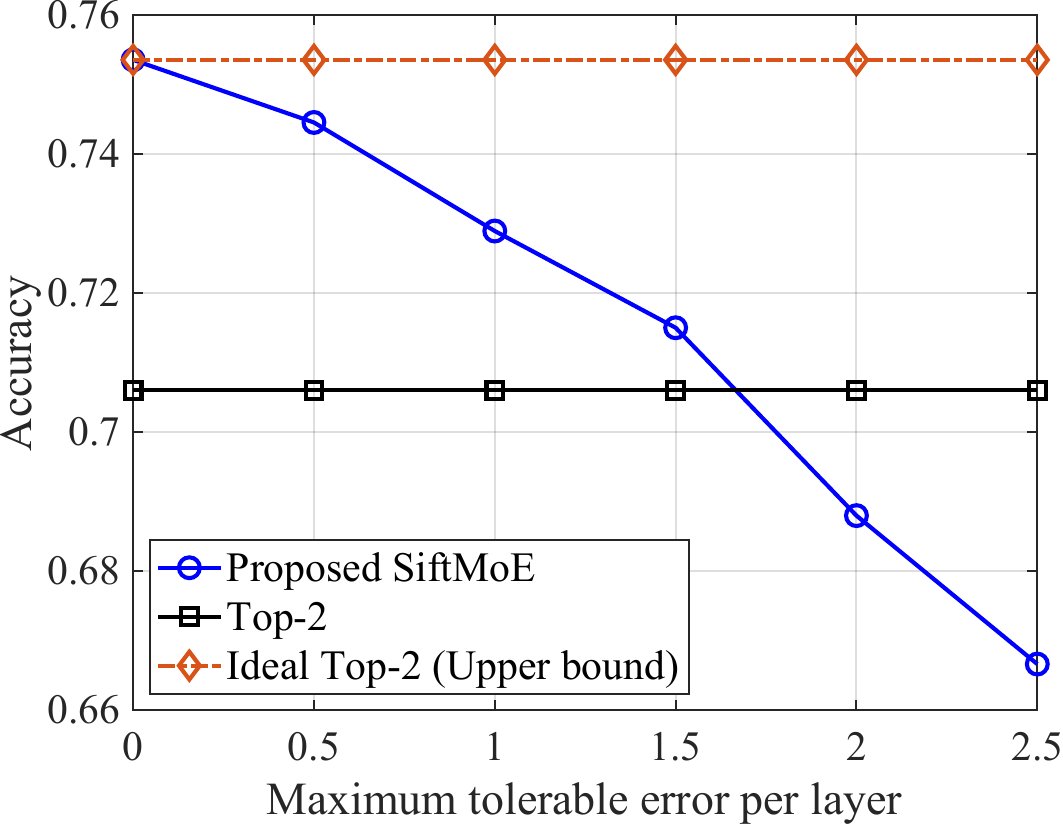}\label{fig:fig_AccDelta_mixtral_slow}}
    \quad 
	\subfigure[Energy consumption per token.]{\includegraphics[width =0.24\textwidth]{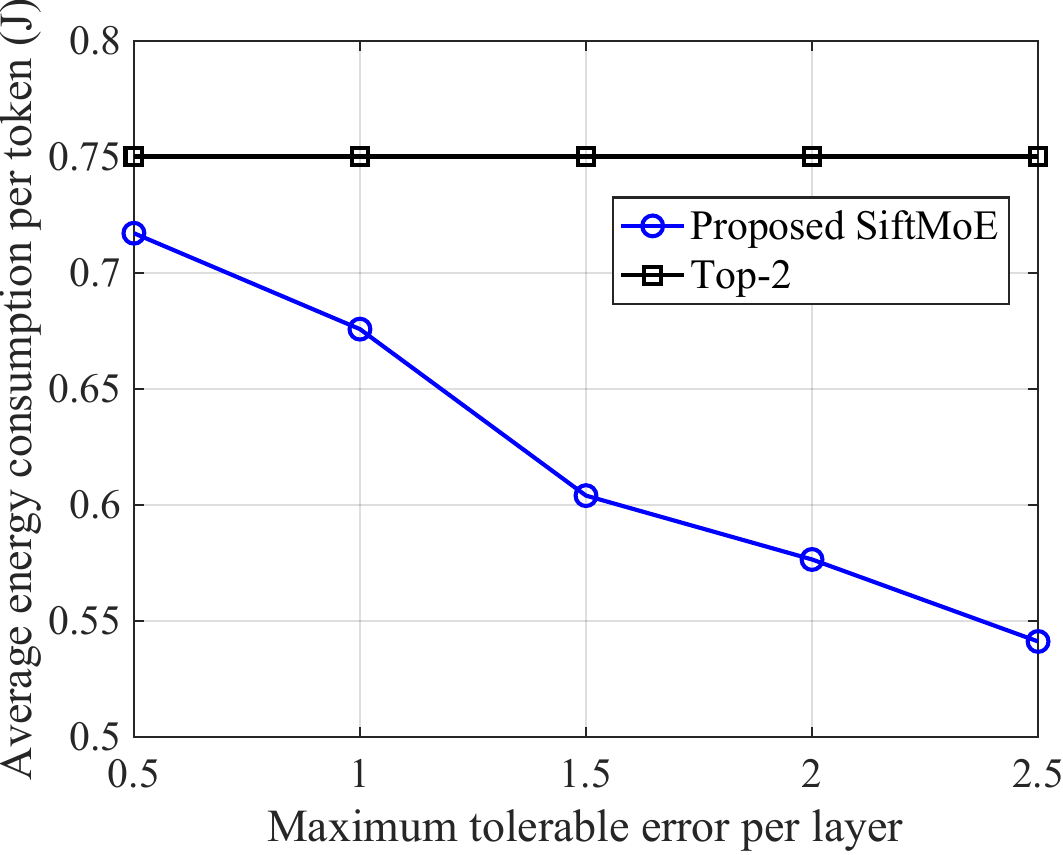}\label{fig:fig_EnergyDelta_mixtral_slow}}
    \quad 
    \subfigure[Accuracy-energy tradeoff.]{\includegraphics[width =0.25\textwidth]{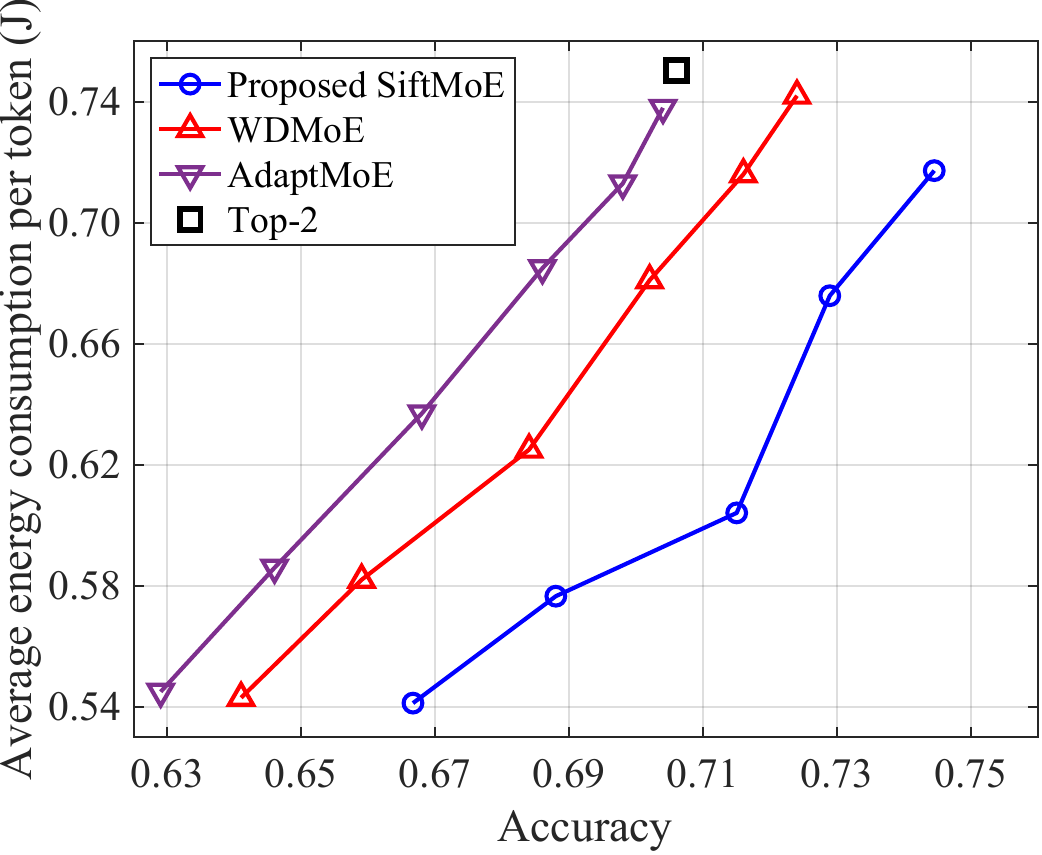}\label{fig:fig_EnergyAccuracy_mixtral}}
\caption{Performance comparison of different schemes under slow-fading channels for different maximum tolerable errors per layer on Mixtral-8$\times$7B over CommonsenseQA, where $B_n = 2$ MHz and $T = 0.074$ s.} 
\label{fig:delta_mixtral_slow}
\end{figure*}

\begin{figure*}[!t]
	\centering
	\subfigure[Accuracy.]{\includegraphics[width =0.25\textwidth]{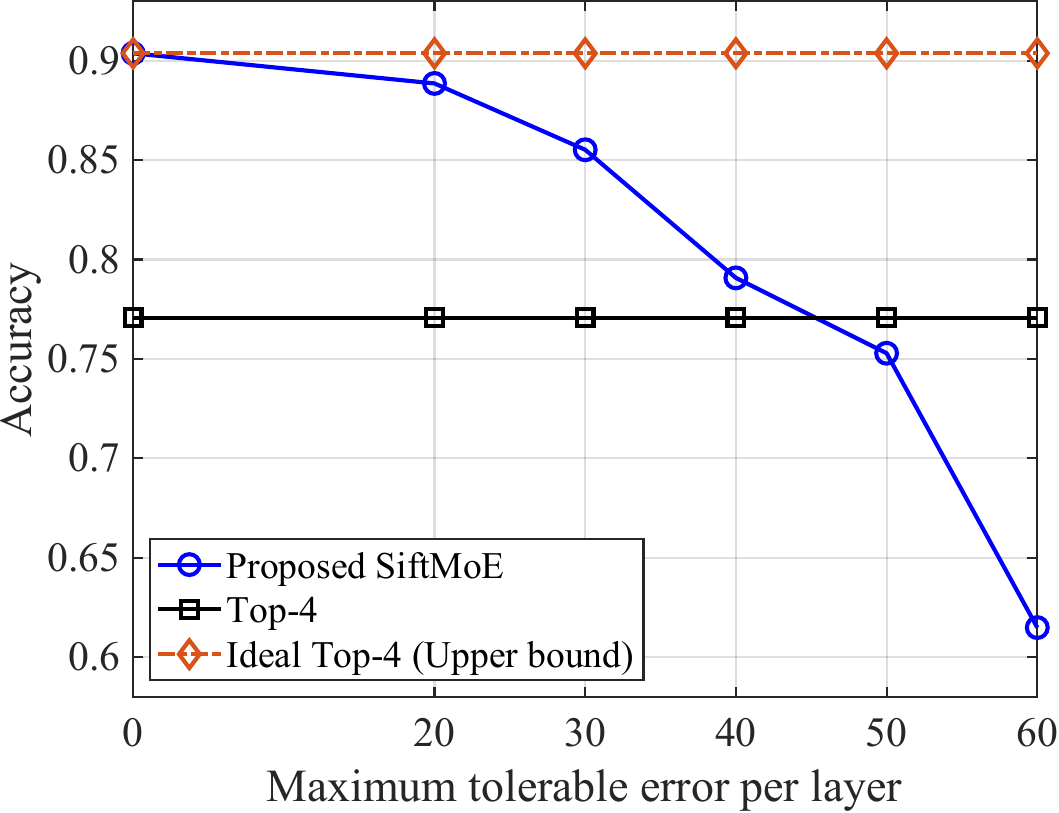}\label{fig:fig_AccDelta_gpt_slow}}
    \quad 
	\subfigure[Energy consumption per token.]{\includegraphics[width =0.25\textwidth]{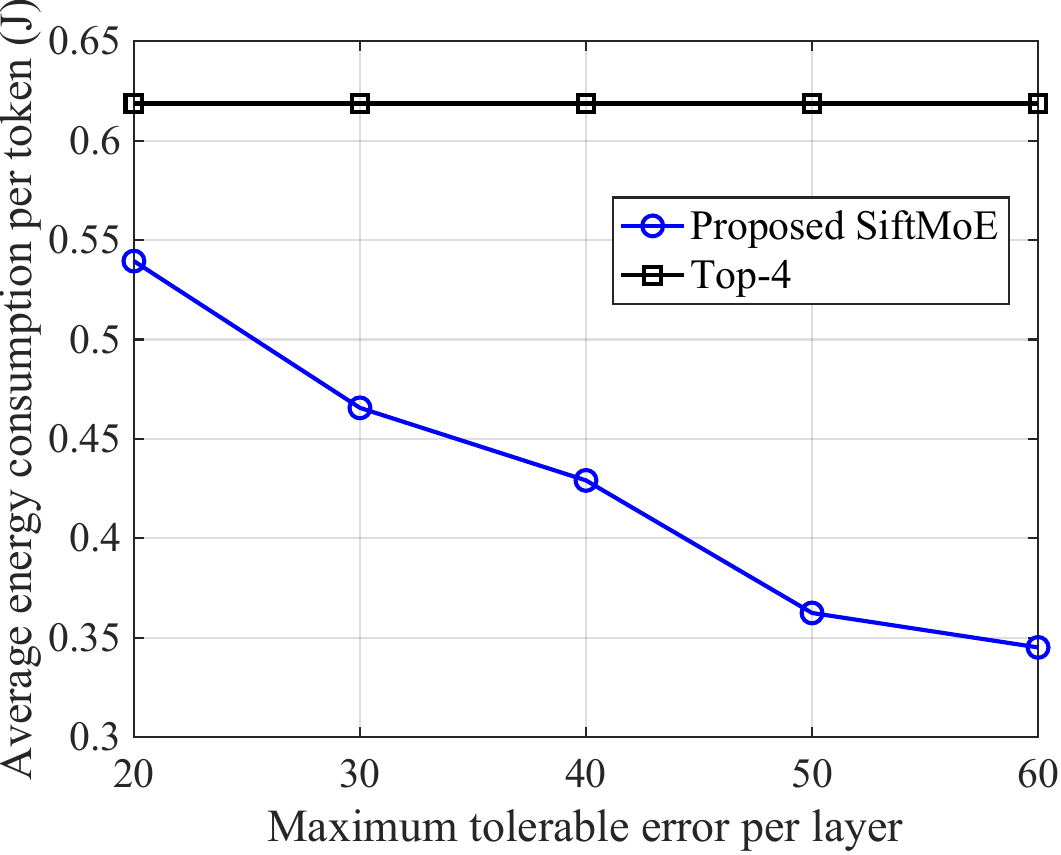}\label{fig:fig_EnergyDelta_gpt_slow}}
    \quad 
    \subfigure[Accuracy-energy tradeoff.]{\includegraphics[width =0.25\textwidth]{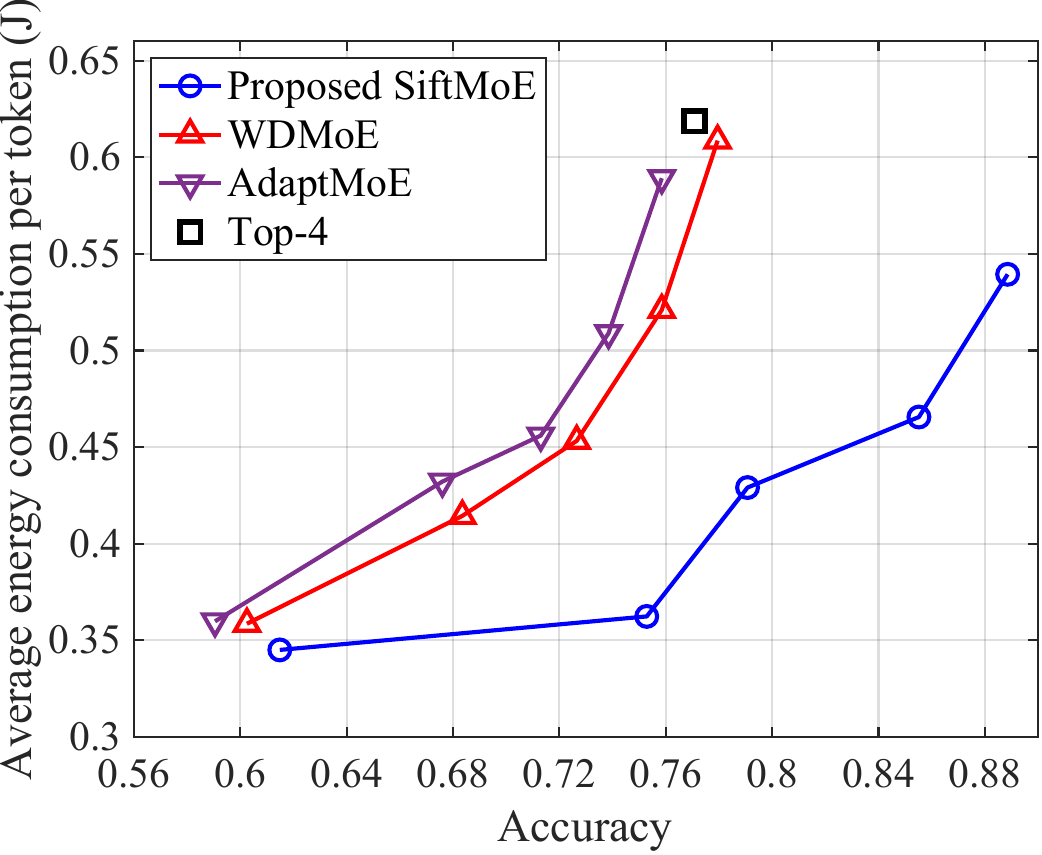}\label{fig:fig_EnergyAccuracy_gpt}}
\caption{Performance comparison of different schemes under slow-fading channels for different maximum tolerable errors per layer on GPT-OSS-20B over GSM8K, where $B_n = 0.8$ MHz and $T = 0.35$ s.} 
\label{fig:delta_gpt_slow} 
\end{figure*}

\subsection{Effects of System Parameters Under Slow Fading}
Following the accuracy-energy simulation results in Section \ref{subsec:accuracy_energy}, the maximum tolerable per-layer error is set to 200 for Switch-base-8, 1.5 for Mixtral-8$\times$7B, and 30 for GPT-OSS-20B in the following simulations, which provide favorable tradeoffs between accuracy preservation and energy reduction. In practice, this parameter can be selected on a validation dataset by evaluating a set of candidate values and choosing the operating point that best matches the target accuracy and energy requirements.

Fig. \ref{fig:energy_comm_switch_slow} illustrates the effect of system constraints on the average energy consumption per token for the Switch-base-8 model over XSum under slow fading. As shown in Fig. \ref{fig:fig_EnergyBand_switch_slow}, when the user-helper bandwidth increases, the average energy consumption per token decreases for all schemes, owing to lower uplink transmission energy. Notably, SiftMoE consistently consumes significantly less energy than the Top-1, WDMoE, and AdaptMoE benchmarks across the entire bandwidth range. This gap is particularly pronounced under limited bandwidth, where the fixed token-helper assignment in the conventional Top-1 scheme and the less flexible expert-dropping strategies in WDMoE and AdaptMoE lead to higher energy consumption.
Fig. \ref{fig:fig_EnergyTime_switch_slow} shows the effect of the per-layer time limit. As the time limit increases, the energy consumption of all schemes decreases, since a looser latency constraint allows the user to transmit hidden states over a longer duration with lower transmission power. SiftMoE still achieves the lowest energy consumption under different latency requirements, demonstrating its robustness to system-level constraints.

Fig.~\ref{fig:energy_comm_mixtral_slow} and Fig.~\ref{fig:energy_comm_gpt_slow} present similar trends for the Mixtral-8$\times$7B model on the CommonsenseQA dataset and the GPT-OSS-20B model on the GSM8K dataset, respectively. For both $K=2$ and $K=4$ cases, SiftMoE consistently achieves lower energy consumption than the corresponding Top-$K$, WDMoE, and AdaptMoE baselines under different bandwidth and latency constraints. These results show that the proposed similarity-aware replacement and skipping strategy can effectively reduce communication energy under different numbers of activated experts in each MoE layer. Overall, these results confirm that SiftMoE remains energy-efficient across different MoE architectures, datasets, routing configurations, and slow-fading system settings.

\begin{figure}[!t]
	\centering
	\subfigure[Effect of bandwidth.]{\includegraphics[width =0.24\textwidth]{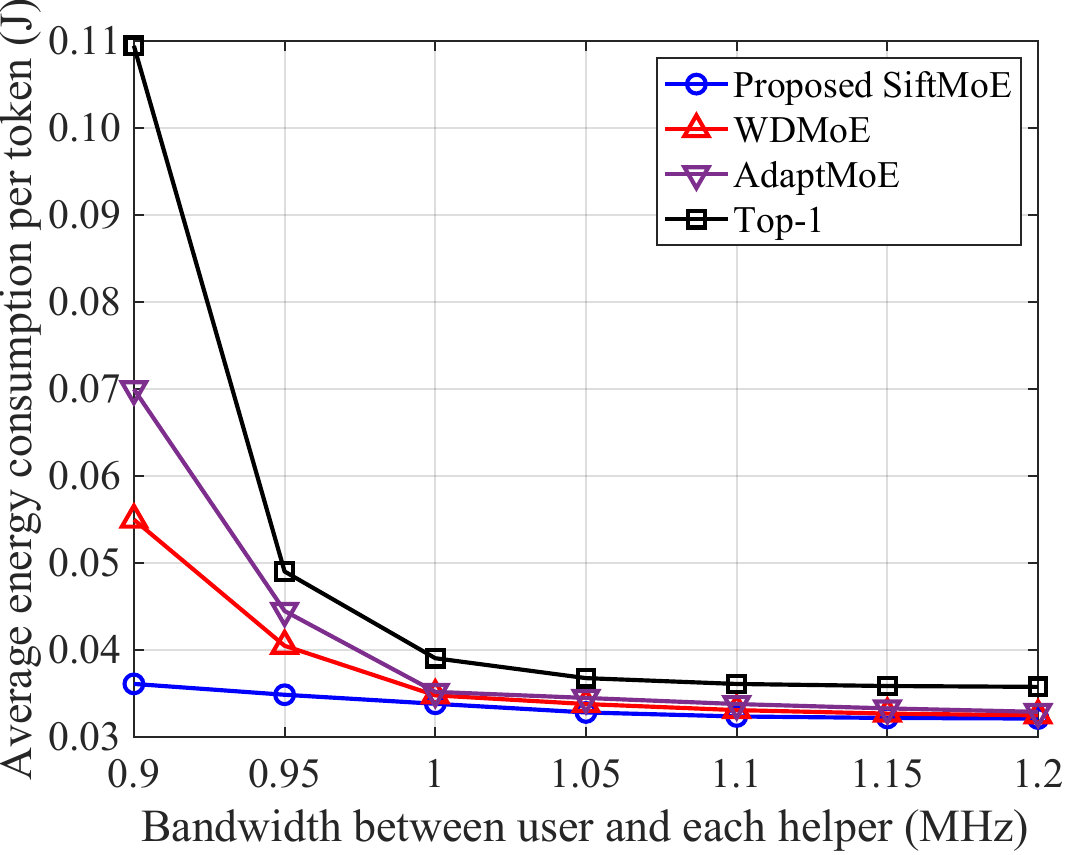}\label{fig:fig_EnergyBand_switch_slow}}
	\subfigure[Effect of time limit per layer.]{\includegraphics[width =0.24\textwidth]{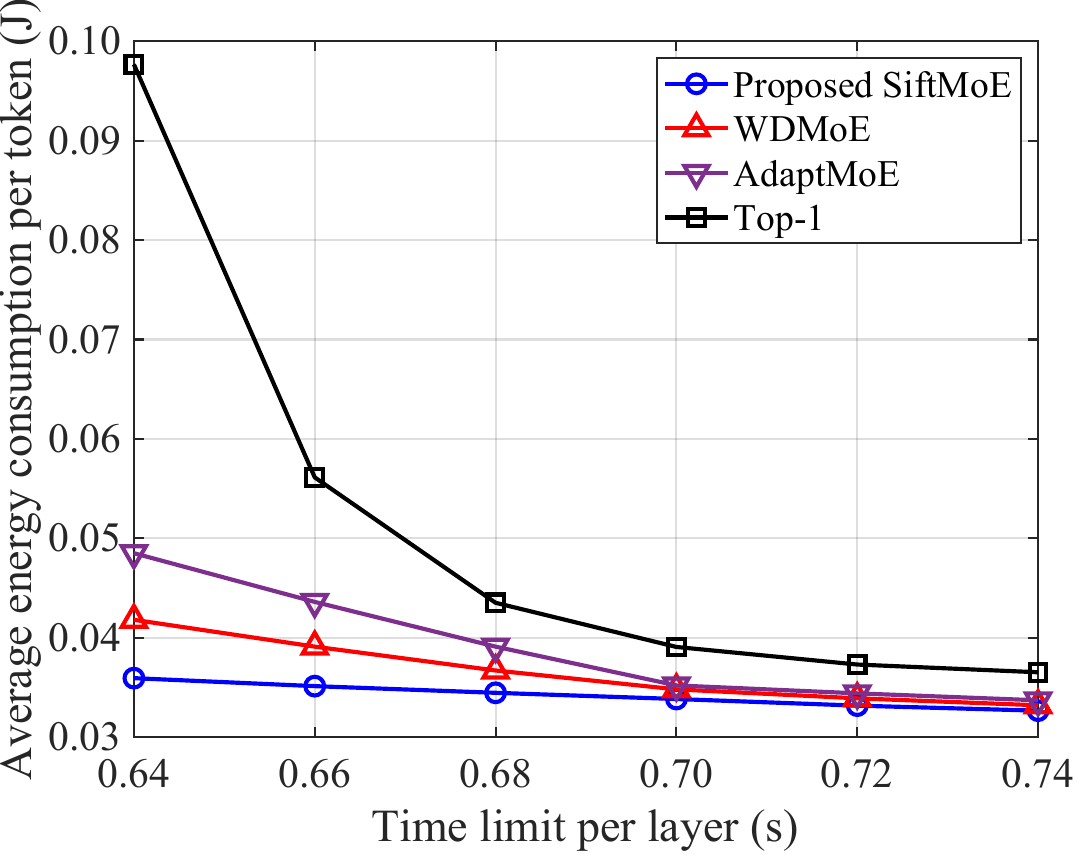}\label{fig:fig_EnergyTime_switch_slow}}
	\caption{Effect of bandwidth and latency requirements on average energy consumption for Switch-base-8 over XSum under slow fading. The default values of $B_n$ and $T$ are set to 1 MHz and 0.7 s.} 
\label{fig:energy_comm_switch_slow} 
\end{figure}

\begin{figure}[!t]
	\centering
	\subfigure[Effect of bandwidth.]{\includegraphics[width =0.235\textwidth]{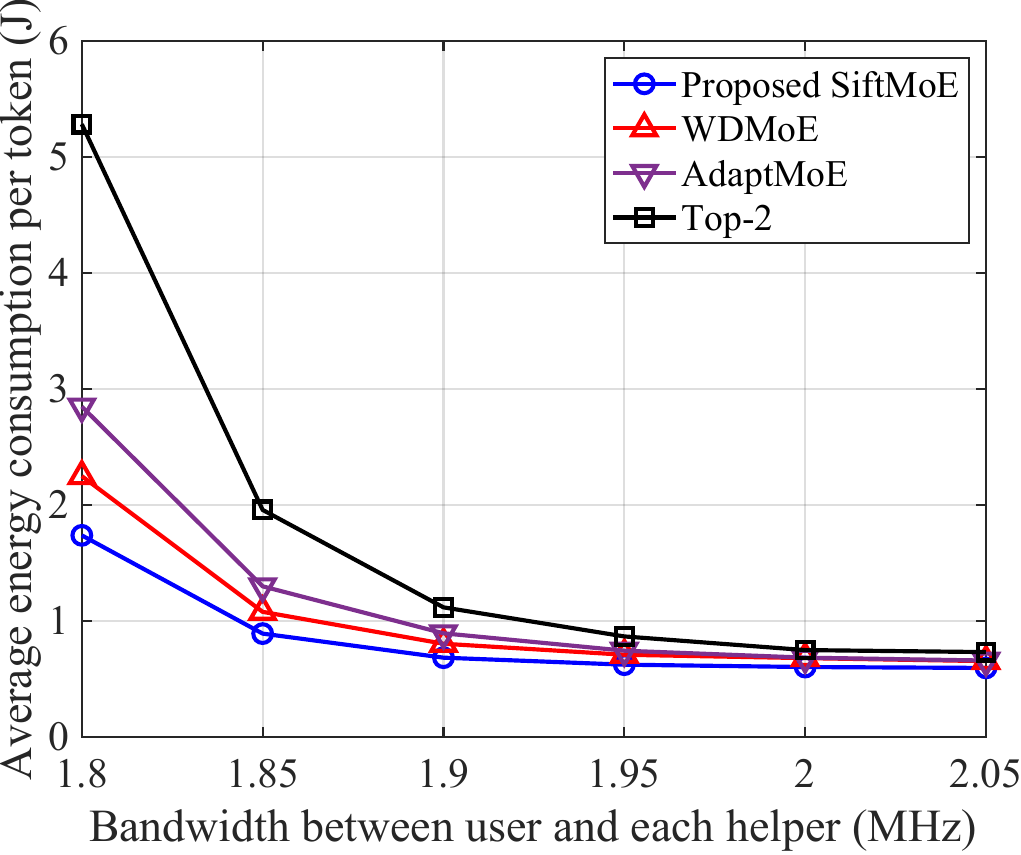}\label{fig:fig_EnergyBand_mixtral_slow}}
	\subfigure[Effect of time limit per layer.]{\includegraphics[width =0.245\textwidth]{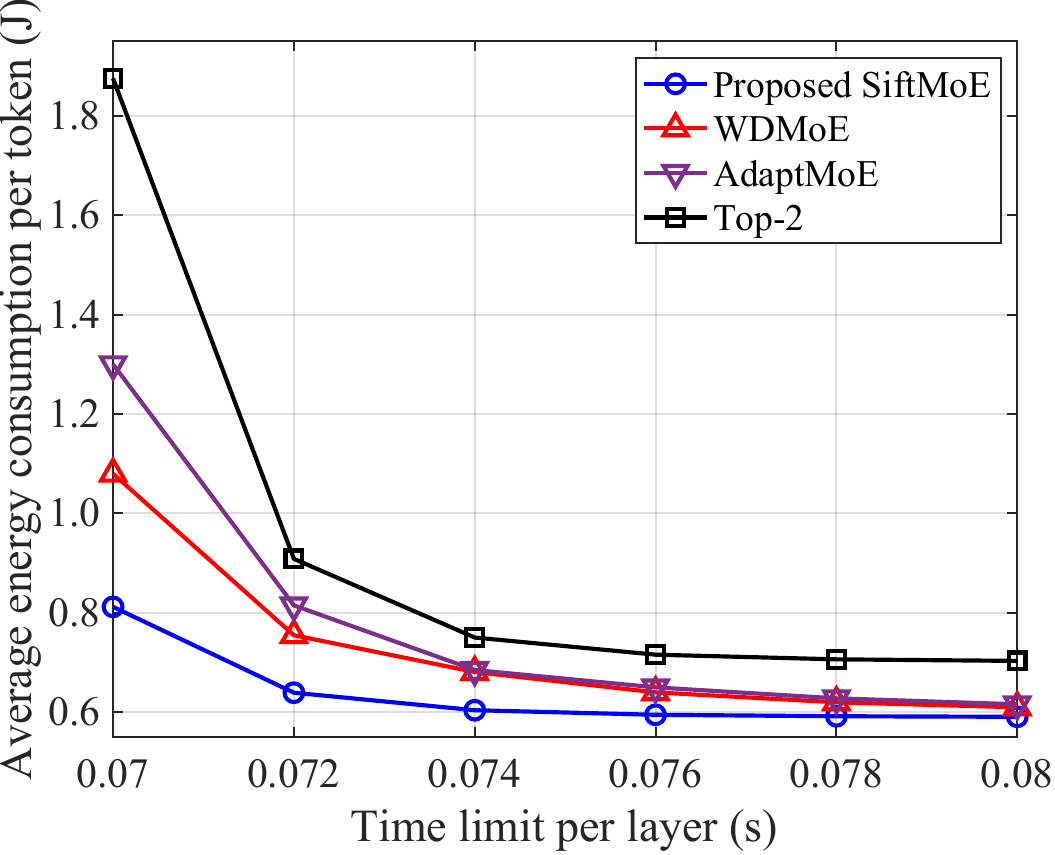}\label{fig:fig_EnergyTime_mixtral_slow}}
	\caption{Effect of bandwidth and latency requirements on average energy consumption for Mixtral-8$\times$7B over CommonsenseQA under slow fading. The default values of $B_n$ and $T$ are set to 2 MHz and 0.074 s.} 
\label{fig:energy_comm_mixtral_slow} 
\end{figure}

\begin{figure}[!t]
	\centering
	\subfigure[Effect of bandwidth.]{\includegraphics[width =0.23\textwidth]{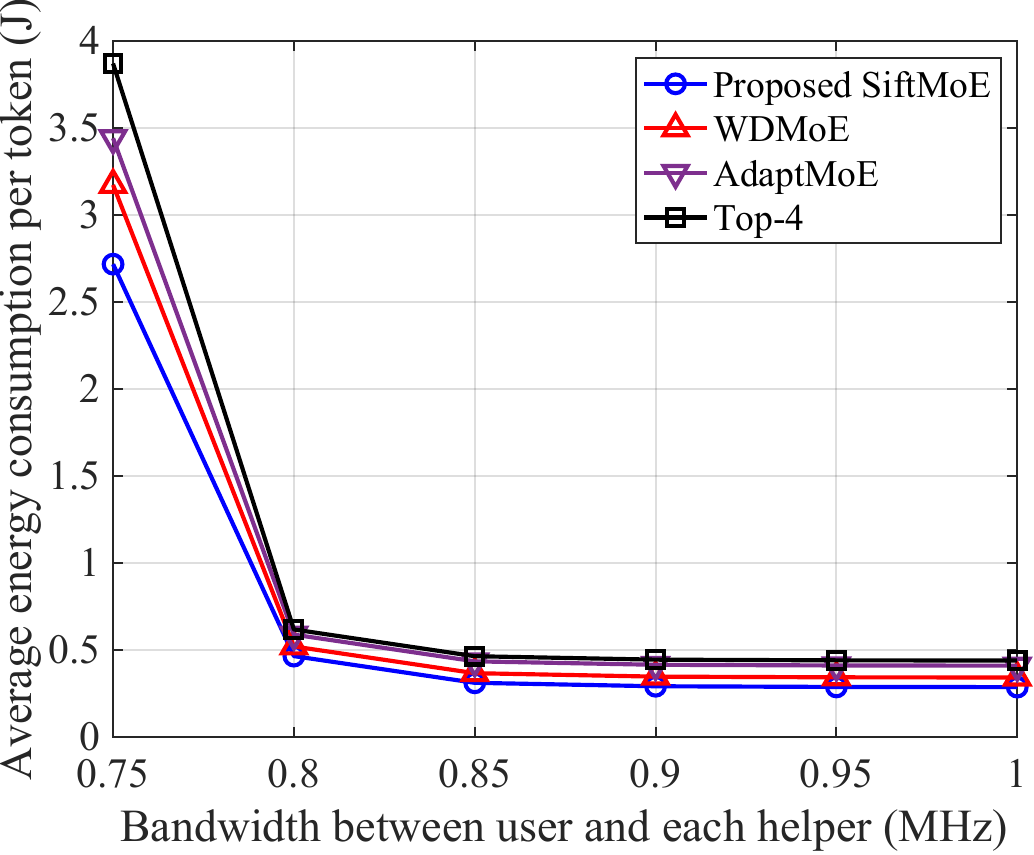}\label{fig:fig_EnergyBand_gpt_slow}}
\subfigure[Effect of time limit per layer.]{\includegraphics[width =0.24\textwidth]{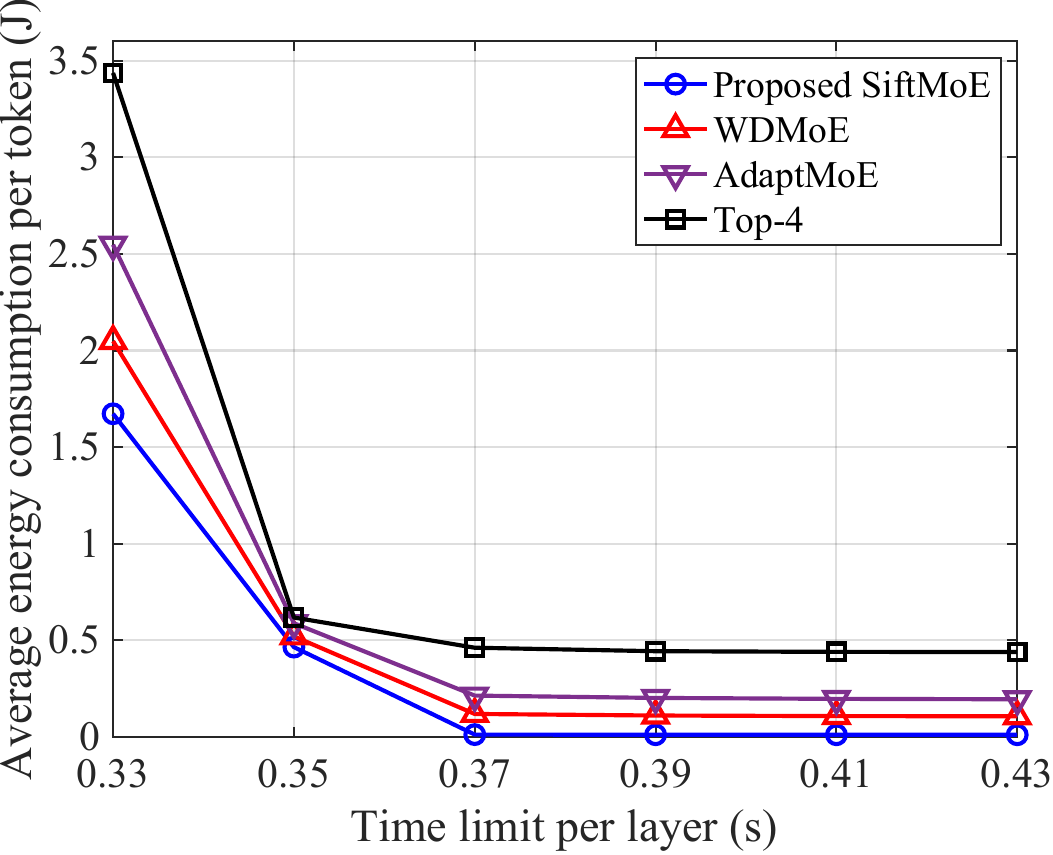}\label{fig:fig_EnergyTime_gpt_slow}}
	\caption{Effect of bandwidth and latency requirements on average energy consumption for GPT-OSS-20B over GSM8K under slow fading. The default values of $B_n$ and $T$ are set to 0.8 MHz and 0.35 s.} 
\label{fig:energy_comm_gpt_slow} 
\end{figure}

\subsection{Effects of System Parameters Under Fast Fading}
We next evaluate the energy performance of SiftMoE under fast fading channels. 
Fig. \ref{fig:energy_comm_switch_fast} shows the effect of system constraints on the average energy consumption for the Switch-base-8 model over XSum under fast fading. 
Here, ``Proposed adaptive” denotes the proposed adaptive transmission strategy, while ``Uniform” denotes the baseline transmission strategy that equally allocates the transmitted data across all time slots.
As shown in Fig. \ref{fig:fig_EnergyBand_switch_fast}, increasing the user-helper bandwidth reduces the energy consumption for all schemes. Compared with the uniform transmission strategy, the proposed adaptive scheme consistently achieves lower energy consumption by exploiting favorable channel realizations to transmit data with lower energy. Moreover, SiftMoE with the proposed adaptive transmission strategy significantly outperforms other baselines across the entire bandwidth range, with greater gains in limited-bandwidth scenarios. 
Similar trends can be observed in Fig. \ref{fig:fig_EnergyTime_switch_fast} with respect to the per-layer time limit, where relaxing the latency constraint further reduces energy consumption, and the proposed dynamic strategy consistently outperforms uniform transmission under fast fading.

Fig. \ref{fig:energy_comm_mixtral_fast} and Fig. \ref{fig:energy_comm_gpt_fast} present the corresponding results for the Mixtral-8$\times$7B model on CommonsenseQA and the GPT-OSS-20B model on GSM8K, respectively. Similar performance trends can be observed in both cases. Overall, these results confirm that under fast fading, combining expert-selection flexibility with channel-aware adaptive transmission is key to achieving energy-efficient WIDE inference.

\begin{figure}[!t]
	\centering
	\subfigure[Effect of bandwidth.]{\includegraphics[width =0.235\textwidth]{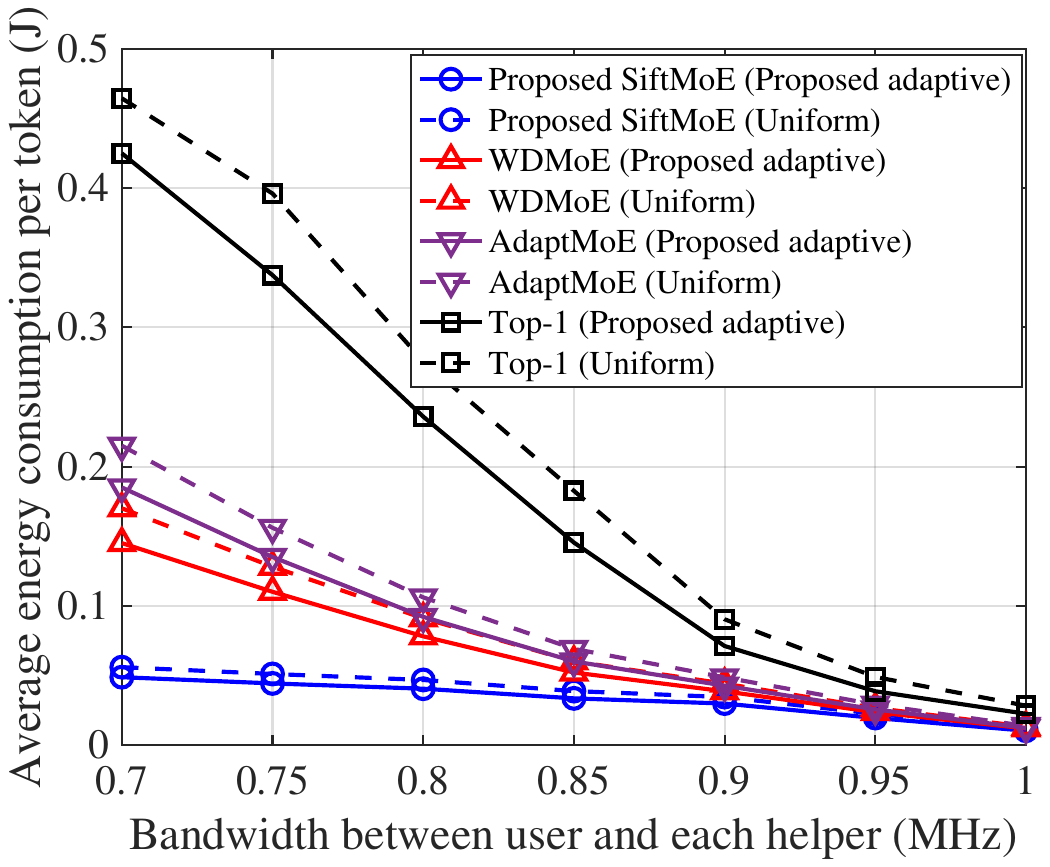}\label{fig:fig_EnergyBand_switch_fast}}
	\subfigure[Effect of time limit per layer.]{\includegraphics[width =0.245\textwidth]{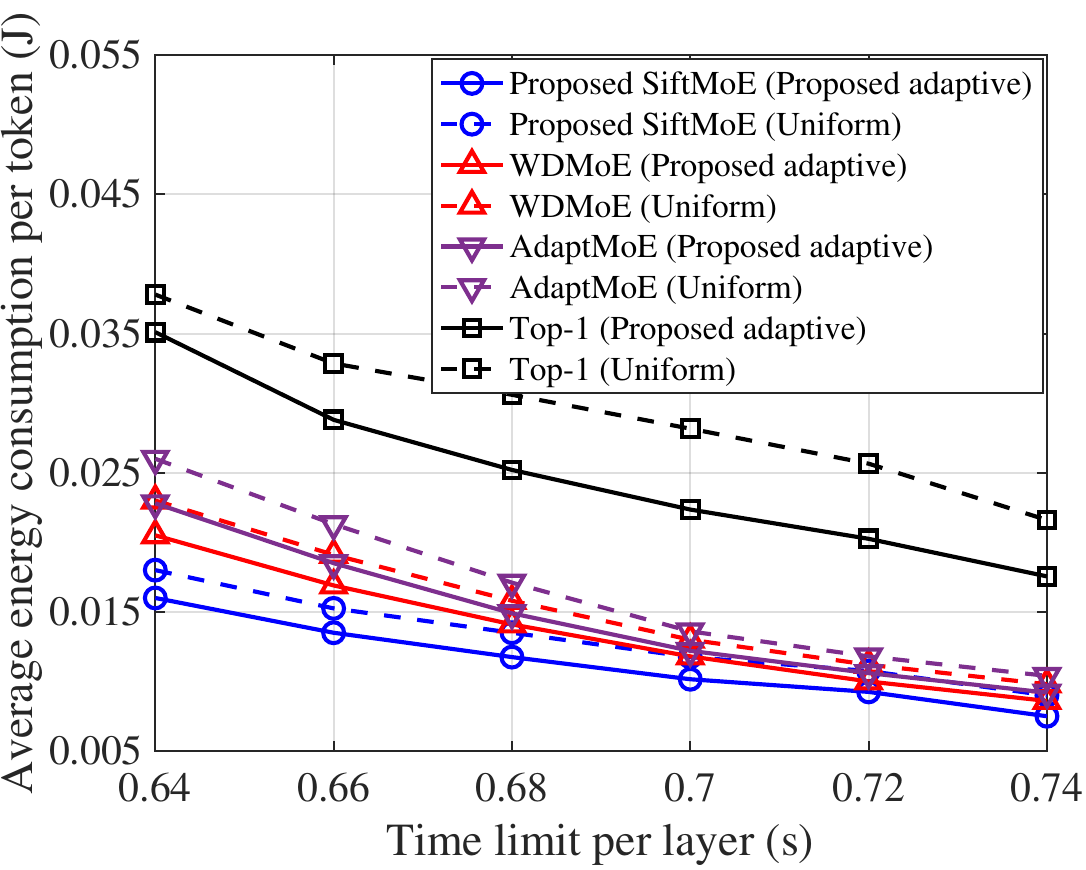}\label{fig:fig_EnergyTime_switch_fast}}
	\caption{Effect of bandwidth and latency requirements on average energy consumption for Switch-base-8 over XSum under fast fading. The default values of $B_n$ and $T$ are set to 1 MHz and 0.7 s.} 
\label{fig:energy_comm_switch_fast} 
\end{figure}

\begin{figure}[!t]
	\centering
	\subfigure[Effect of bandwidth.]{\includegraphics[width =0.235\textwidth]{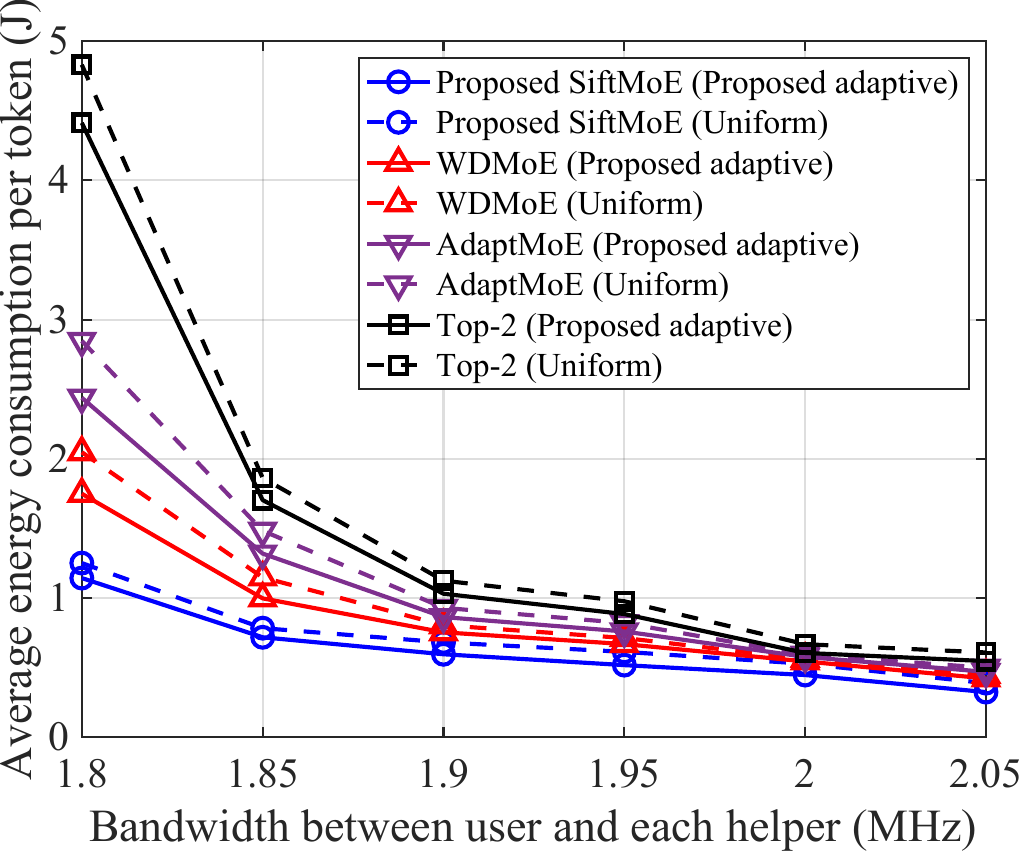}\label{fig:fig_EnergyBand_mixtral_fast}}
	\subfigure[Effect of time limit per layer.]{\includegraphics[width =0.245\textwidth]{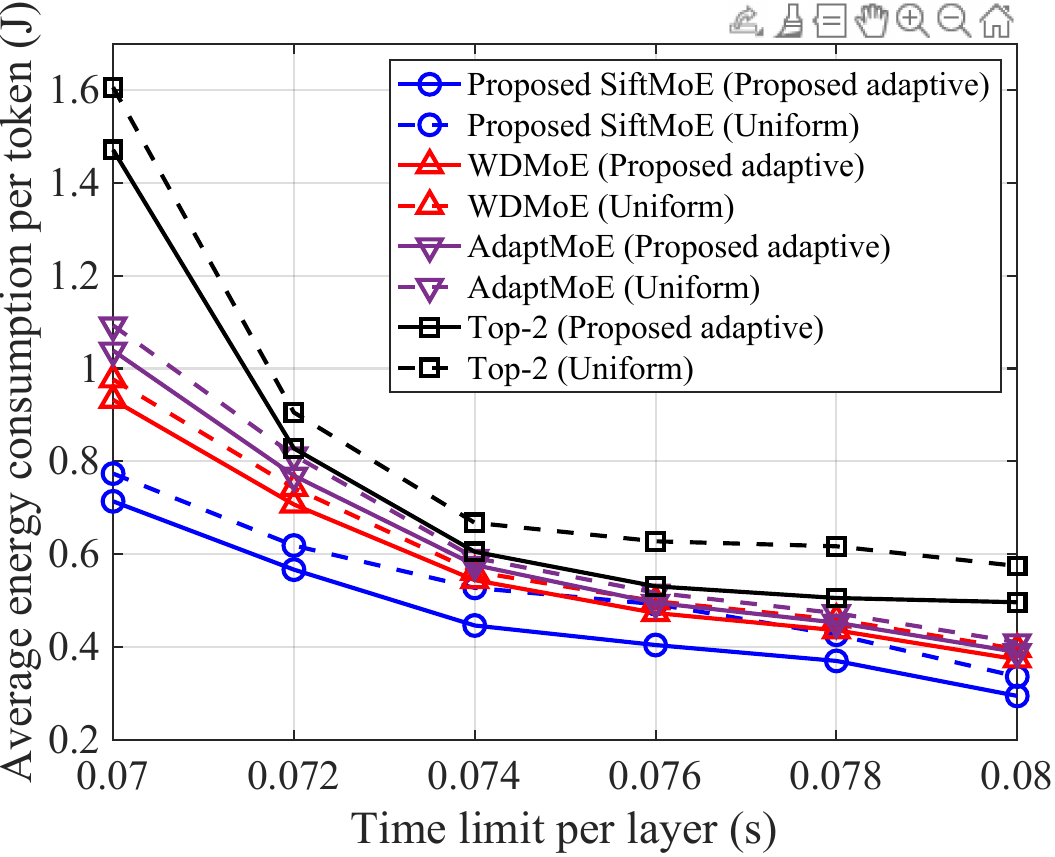}\label{fig:fig_EnergyTime_mixtral_fast}}
	\caption{Effect of bandwidth and latency requirements on average energy consumption for Mixtral-8$\times$7B over CommonsenseQA under fast fading. The default values of $B_n$ and $T$ are set to 2 MHz and 0.074 s.} 
\label{fig:energy_comm_mixtral_fast} 
\end{figure}

\begin{figure}[!t]
	\centering
	\subfigure[Effect of bandwidth.]{\includegraphics[width =0.24\textwidth]{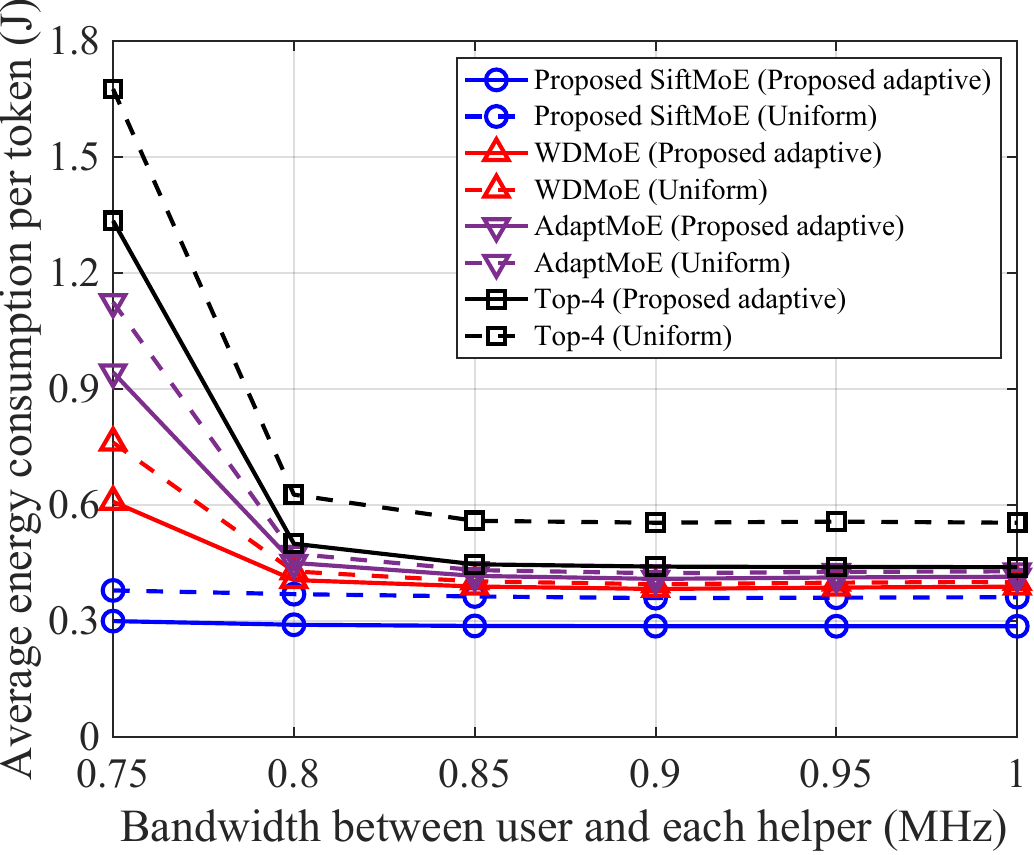}\label{fig:fig_EnergyBand_gpt_fast}}
	\subfigure[Effect of time limit per layer.]{\includegraphics[width =0.24\textwidth]{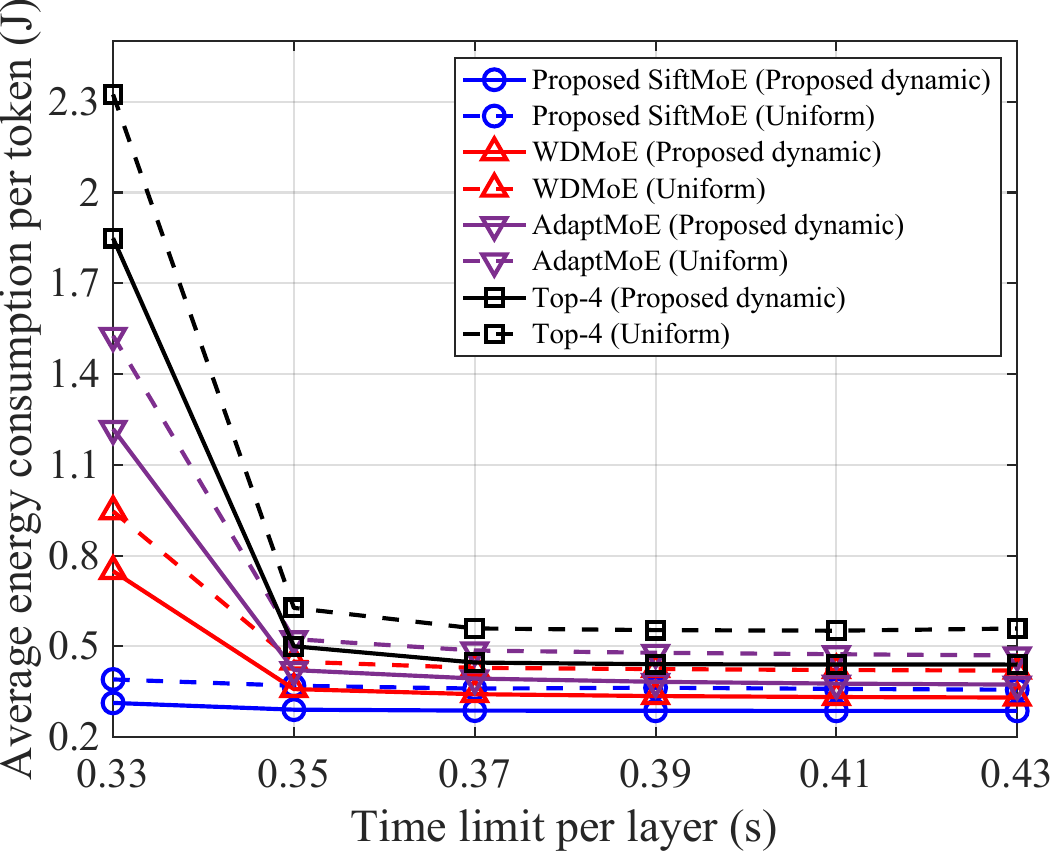}\label{fig:fig_EnergyTime_gpt_fast}}
	\caption{Effect of bandwidth and latency requirements on average energy consumption for GPT-OSS-20B over GSM8K under fast fading. The default values of $B_n$ and $T$ are set to 0.8 MHz and 0.35 s.} 
\label{fig:energy_comm_gpt_fast} 
\end{figure}

	\section{Conclusion Remarks}\label{sec:conclusion}
	In this paper, we revisited distributed MoE inference over wireless edge networks from an energy-efficiency perspective. Rather than strictly adhering to the conventional Top-$K$ routing paradigm, we showed that allowing similarity-aware expert replacement and skipping opens a new research direction for communication-efficient WIDE inference. Our analysis reveals that the impact of expert selection on inference accuracy can be theoretically characterized by layer-wise deviation bounds, thereby enabling control of the accuracy–energy tradeoff. Building on this insight, we developed the SiftMoE framework that integrates expert selection with wireless resource considerations. The results demonstrate that substantial energy savings can be achieved while maintaining controllable accuracy degradation, highlighting the potential of communication-aware expert selection as a practical mechanism for deploying MoE models in wireless edge networks.
    
Several directions deserve further investigation. First, we can extend the current framework to multi-user scenarios, which would introduce additional competition not only among tokens within a user but also across different users sharing communication and computing resources. Second, while this work focuses on expert replacement and skipping within each MoE layer, exploring whether entire MoE layers can be selectively skipped may further reduce communication overhead. Finally, integrating hidden-state transmission with advanced physical-layer techniques such as over-the-air computation may further improve the efficiency of WIDE inference.

\begin{appendices}

\section{Proof of Proposition \ref{prop:layer-expect-perturbation}}\label{proof:prop-layer-expect-perturbation}
By definition, the deviation induced by expert substitution at layer $\ell$ is given by $\tilde{y}^{(\ell)}\left ( z  \right ) - y_{\rm{Top}}^{(\ell)}\left ( z  \right )=
		\sum_{i \in {\mathcal{A}}_{\rm{Top}}^{(\ell)}\left ( z  \right )} g_i^{(\ell)}\left ( z  \right ) \left ( {\rm{FFN}}_{\varphi(i)}^{(\ell)}\left ( z   \right )-{\rm{FFN}}_{i}^{(\ell)}\left ( z   \right ) \right )$.

Taking the $l_2$-norm and applying the triangle inequality, we have
\begin{equation}
    \begin{split}
         \delta^{(\ell)}\left ( z  \right ) & = \left \| \tilde{y}^{(\ell)}\left ( z  \right ) - y_{\rm{Top}}^{(\ell)}\left ( z  \right ) \right \|_2 \\
         &\leq \sum_{i \in {\mathcal{A}}_{\rm{Top}}^{(\ell)}\left ( z  \right )} \left \|g_i^{(\ell)}\left ( z  \right ) \left ( {\rm{FFN}}_{\varphi(i)}^{(\ell)}\left ( z   \right )-{\rm{FFN}}_{i}^{(\ell)}\left ( z   \right ) \right ) \right \|_2 \\
        & = \sum_{i \in {\mathcal{A}}_{\rm{Top}}^{(\ell)}\left ( z  \right )} g_i^{(\ell)}\left ( z  \right ) \left \|  {\rm{FFN}}_{\varphi(i)}^{(\ell)}\left ( z   \right )-{\rm{FFN}}_{i}^{(\ell)}\left ( z   \right ) \right \|_2,
    \end{split}
\end{equation}
where the last equality follows from $g_i^{(\ell)}\left ( z  \right ) \geq 0$.

Then, the deviation magnitude at layer $\ell$ satisfies $\delta^{(\ell)}\left ( z  \right ) \leq \sum_{i \in {\mathcal{A}}_{\rm{Top}}^{(\ell)}\left ( z  \right )} g_i^{(\ell)}\left ( z  \right ) F_{i,\varphi(i)}^{(\ell)}\left ( z   \right ) $.

\section{Proof of Theorem \ref{prop:final_deviation_perlayer}}\label{proof:prop-final_deviation_perlayer}

At layer $\ell$, the discrepancy between the substituted output and the original output can be decomposed into (A) an amplification of the input deviation and (B) the intrinsic error introduced by expert selection at the current layer.
	\begin{equation}\label{equ:el}
		\begin{split}
		e^{(\ell)} &=	 \left \| \tilde{y}^{(\ell)}\left ( \tilde{y}^{(\ell-1)} \right ) -y_{\rm{Top}}^{(\ell)}\left ( y_{\rm{Top}}^{(\ell-1)} \right )  \right \|_2\\
		& = \left\| \underbrace{
\tilde{y}^{(\ell)}\!\left({\tilde{y}}^{(\ell-1)}\right)
-
{\tilde{y}}^{(\ell)}\left(y_{\rm{Top}}^{(\ell-1)}\right)
}_{\text{(A)}} \right. \\
& + \left. \underbrace{
{\tilde{y}}^{(\ell)}\left(y_{\rm{Top}}^{(\ell-1)}\right)
-
y_{\rm{Top}}^{(\ell)} \left(y_{\rm{Top}}^{(\ell-1)}\right)
}_{\text{(B)}} \right\| \\
& \le \left\| {\tilde{y}}^{(\ell)}(\tilde z)-{\tilde{y}}^{(\ell)}(z) \right\|_2 + \delta^{(\ell)},
		\end{split}
	\end{equation}
    where we have set $z\triangleq y_{\rm{Top}}^{(\ell-1)}$ and $\tilde z\triangleq {\tilde{y}}^{(\ell-1)}$.

We first analyze the propagated input deviation term. Using the MoE structure of the substituted layer, the difference ${\tilde{y}}^{(\ell)}(\tilde z) - {\tilde{y}}^{(\ell)}(z)$ can be expanded as the sum of two components: $\tilde{y}^{(\ell)}\left ( \tilde{z} \right ) -\tilde{y}^{(\ell)}\left ( z \right )= \sum_{i \in \mathcal{E}^{(\ell)} } g_i^{(\ell)}\left (  \tilde{z}  \right ) \mathrm{FFN}_i^{(\ell)}\left (  \tilde{z}   \right )-\sum_{i \in \mathcal{E}^{(\ell)} } g_i^{(\ell)}\left (  z  \right ) \mathrm{FFN}_i^{(\ell)}\left ( z   \right )= \sum_{i \in \mathcal{E}^{(\ell)} } \left(g_i^{(\ell)}\left (  \tilde{z}  \right ) - g_i^{(\ell)}\left (  z \right )  \right)\mathrm{FFN}_i^{(\ell)}\left (  \tilde{z}   \right ) + \sum_{i \in \mathcal{E}^{(\ell)} } g_i^{(\ell)}\left ( z  \right ) \left(  \mathrm{FFN}_i^{(\ell)}\left (  \tilde{z}   \right ) - \mathrm{FFN}_i^{(\ell)}\left (  z   \right )  \right)$.
Using the norm property for scalar and applying the triangle inequality, we obtain that
\begin{equation}\label{equ:l2norm_Ta+Tb}
	\begin{split}
		& \left \|  \tilde{y}^{(\ell)}\left ( \tilde{z} \right ) -\tilde{y}^{(\ell)}\left ( z \right )  \right \|_2  \\
        &\leq  
		\underbrace{\sum_{i \in \mathcal{E}^{(\ell)} } \left| g_i^{(\ell)}\left (  \tilde{z}  \right ) - g_i^{(\ell)}\left (  z \right )  \right|   \left\| \mathrm{FFN}_i^{(\ell)}\left (  \tilde{z}   \right )  \right\|_2}_{T_a^{(\ell)}} \\
        & +  \underbrace{\sum_{i \in \mathcal{E}^{(\ell)} } g_i^{(\ell)}\left ( z  \right ) \left \|  \mathrm{FFN}_i^{(\ell)}\left (  \tilde{z}   \right ) - \mathrm{FFN}_i^{(\ell)}\left (  z   \right ) \right \|_2}_{T_b^{(\ell)}}.
	\end{split}
\end{equation}

We first focus on the term $T_a^{(\ell)}$ in (\ref{equ:l2norm_Ta+Tb}), which captures the error induced by the variation of the gating scores under input deviation. Applying the triangle inequality and using Assumption~\ref{assumption:bounded_output}, we factor out the maximum expert output norm and obtain, we obtain the following upper bound, given by
\begin{equation}\label{equ:T_a}
\begin{split}
     T_a^{(\ell)}& \leq \left ( \sum_{i \in \mathcal{E}^{(\ell)} } \left| g_i^{(\ell)}\left (  \tilde{z}  \right ) - g_i^{(\ell)}\left (  z \right )  \right|    \right )  \cdot \max_{i \in \mathcal{E^{(\ell)}} }  \left\| \mathrm{FFN}_i^{(\ell)}\left (  \tilde{z}   \right )  \right\|_2 \\
     & =\left \| \Delta g^{(\ell)}  \right \|_1 \cdot B_{\max}^{(\ell)},
\end{split}
\end{equation}
where $\Delta g^{(\ell)}$ denotes the vector of gating-weight differences with entries $\Delta g_i^{(\ell)} = g_i^{(\ell)}(\tilde z) - g_i^{(\ell)}(z)$. In addition, $\left \| \Delta g^{(\ell)}  \right \|_1$ has the following upper bound: 
$ \left \| \Delta g^{(\ell)}  \right \|_1 \leq \sum_{i \in \mathcal{E}^{(\ell)} } \left( g_i^{(\ell)}\left (  \tilde{z}  \right ) + g_i^{(\ell)}\left (  z \right )  \right)   =  \sum_{i\in \mathcal E^{(\ell)}} g_i^{(\ell)}(\tilde z) + \sum_{i\in \mathcal E^{(\ell)}} g_i^{(\ell)}(z) = 2$.
Therefore, $T_a \leq 2 B_{\max}^{(\ell)}$ holds.

Next, we bound the term $T_b^{(\ell)}$ in (\ref{equ:l2norm_Ta+Tb}), which captures the output deviation caused by passing a perturbed hidden representation through the same set of experts. Since the gating scores satisfy $\sum_{i \in \mathcal{E}^{(\ell)}} g_i^{(\ell)}(z)=1$, with Assumption \ref{assumption:Lip_expert}, $T_b^{(\ell)}$ has the following upper bound: 
\begin{equation}\label{equ:T_b}
    T_b^{(\ell)}  \leq \sum_{i \in \mathcal{E}^{(\ell)} } g_i^{(\ell)}\left ( z  \right ) \beta_{{\rm{E}},i}^{(\ell)}  \left \|  \tilde{z} -z \right \|_2 \leq  \beta_{{\rm E},\max}^{(\ell)}    \left \|  \tilde{z} -z \right \|_2.
\end{equation}

By first substituting (\ref{equ:T_a}) and (\ref{equ:T_b}) into (\ref{equ:l2norm_Ta+Tb}), and then substituting the result into (\ref{equ:el}), we obtain $e^{(\ell)} \leq \beta_{{\rm E},\max}^{(\ell)} e^{(\ell-1)} +2 B_{\max}^{(\ell)} +\delta^{(\ell)}$.
By setting $e^{(0)}=0$ and unrolling this recursion over the first $\ell$ layers, we obtain the following bound on the output deviation at layer $\ell$: $e^{(L)} \le \sum_{r=1}^{\ell}\left(2 B_{\max}^{(r)}+\delta^{(r)}\right) \prod_{t=r+1}^{\ell} \beta_{{\rm E},\max}^{(t)}$.

\section{Proof of Proposition \ref{prop:discrete-convex}}\label{proof:prop-discrete-convex}
From (\ref{equ:energy_local}), it is straightforward to observe that $E_{\rm{UE}}^{(\ell)}$ increases linearly with $D_{\rm{UE}}^{(\ell)}$. Hence, we focus on characterizing the relationship between the helper-side energy consumption $E_n^{(\ell)}$ and $D_n^{(\ell)}$.

Although $D_n^{(\ell)}$ is an integer-valued decision variable in the considered subproblem, we first relax it to a continuous variable $D_n^{(\ell)} \in \mathbb{R}_+$ and study the convexity of $E_n^{(\ell)}\left ( D_n^{(\ell)} \right ) $ over the continuous domain. This relaxation is introduced solely for analytical purposes. 

For notation simplicity, we define $\gamma_n^{(\ell)} = 
\frac{\phi}{C_n} + \frac{b}{R_{n,{\mathrm{UE}}}^{(\ell)}}$. 
Let $u_n^{(\ell)} = T-\gamma_n^{(\ell)} D_n^{(\ell)}$, then the objective function can be rewritten as a composite function $E_n^{(\ell)}(D_n^{(\ell)}) = \frac{N_0 B_n }{d_n^{-\alpha}G_{\mathrm{UE},n}h_{n}^{(\ell)}}\psi(u_n^{(\ell)})$, where
$\psi(u_n^{(\ell)}) = \left [ 2^{\frac{b}{B_n\gamma_n^{(\ell)}}\left ( \frac{T}{u_n^{(\ell)}}-1  \right )  } -1\right ] u_n^{(\ell)} = 2^{-\frac{b}{B_n\gamma_n^{(\ell)}}} e^{\frac{bT \ln 2}{B_n\gamma_n^{(\ell)} u_n^{(\ell)}}} u_n^{(\ell)}-u_n^{(\ell)}$.

We first analyze the properties of $\psi(u_n^{(\ell)})$. The first-order derivative of $\psi(u_n^{(\ell)})$ is given by $\psi'(u_n^{(\ell)}) = 2^{-\frac{b}{B_n\gamma_n^{(\ell)}}}e^{\frac{bT \ln 2}{B_n\gamma_n^{(\ell)} u_n^{(\ell)}}}\left ( 1-\frac{bT \ln 2}{B_n\gamma_n^{(\ell)} u_n^{(\ell)}} \right )-1 $. To characterize its sign, we define an auxiliary function $f(x) = e^x (1-x)$. Its derivative satisfies $f'(x) = -x e^x < 0$ for all $x>0$, implying that $f(x)$ is strictly decreasing when $x>0$ and hence $f(x)<f(0)=1$. Consequently, $e^{\frac{bT \ln 2}{B_n\gamma_n^{(\ell)} u_n^{(\ell)}}}\left ( 1-\frac{bT \ln 2}{B_n\gamma_n^{(\ell)} u_n^{(\ell)}} \right ) <1$. Since $2^{-\frac{b}{B_n\gamma_n^{(\ell)}}} <1$, it follows that $\psi'(u_n^{(\ell)}) <0$ for all $u_n^{(\ell)}>0$.

Moreover, the second-order derivative of $\psi(u_n^{(\ell)})$ is $\psi''(u_n^{(\ell)}) = 2^{-\frac{b}{B_n\gamma_n^{(\ell)}}}e^{\frac{bT \ln 2}{B_n\gamma_n^{(\ell)} u_n^{(\ell)}}}\left ( \frac{bT \ln 2}{B_n\gamma_n^{(\ell)} } \right )^2 \frac{1}{\left ( u_n^{(\ell)} \right ) ^3}  $, which is strictly positive for all $u_n^{(\ell)}>0$. Therefore, $\psi(u_n^{(\ell)})$ is a decreasing and convex function with respect to $u_n^{(\ell)}$.

Next, we examine the monotonicity and convexity of the composite function $\psi(u(D_n^{(\ell)}))$, i.e., $E(D_n^{(\ell)})$. Note that $u'\left ( D_n^{(\ell)} \right ) = -\gamma_n^{(\ell)} <0$ and $u''\left ( D_n^{(\ell)} \right ) =0$. By the chain rule, the first-order derivative of $E_n^{(\ell)}(D_n^{(\ell)})$ is $\frac{\mathrm{d} E_n^{(\ell)}\left ( D_n^{(\ell)} \right ) }{\mathrm{d} D_n^{(\ell)}} = \psi'\left ( u\left ( D_n^{(\ell)} \right )  \right ) \cdot u'\left ( D_n^{(\ell)} \right ) > 0$. Furthermore, the second-order derivative satisfies $\frac{\mathrm{d}^2 E_n^{(\ell)}\left ( D_n^{(\ell)} \right ) }{\mathrm{d} \left ( D_n^{(\ell)} \right ) ^2} = \psi''\left ( u\left ( D_n^{(\ell)} \right )  \right ) \cdot \left ( u'\left ( D_n^{(\ell)} \right )  \right )^2 + \psi'\left ( u\left ( D_n^{(\ell)} \right )  \right ) \cdot u''\left ( D_n^{(\ell)} \right ) >0 $.  Hence, $E(D_n^{(\ell)})$ is an increasing and convex function of $D_n^{(\ell)}$ over the continuous domain $D_n^{(\ell)}\in\mathbb{R}_+$.

Finally, when $D_n^{(\ell)}$ is restricted to integer values with unit spacing, the discrete sequence $\{E_n^{(\ell)}(D_n^{(\ell)})\}_{D_n^{(\ell)}\in\mathbb{Z}_+}$ preserves these properties. Specifically, the monotonicity of $E_n^{(\ell)}(D_n^{(\ell)})$ implies $E_n^{(\ell)}(D_n^{(\ell)}+1)-E_n^{(\ell)}(D_n^{(\ell)}) > 0$, and the convexity of $E_n^{(\ell)}(D_n^{(\ell)})$ over the continuous domain further guarantees that the second-order finite difference is nonnegative, i.e., $E_n^{(\ell)}(D_n^{(\ell)}+1)-E_n^{(\ell)}(D_n^{(\ell)}) > E_n^{(\ell)}(D_n^{(\ell)}) - E_n^{(\ell)}(D_n^{(\ell)}-1)$ for any $D_n^{(\ell)} \geq 1$. Therefore,  $E_n^{(\ell)}$ is a non-decreasing discrete convex function with respect to $D_n^{(\ell)}$.

\section{Proof of Theorem \ref{theorem:fastfading}}\label{proof:theorem-fastfading}
	 We take the transmission strategy to a specific helper $n$ and an MoE layer $\ell$ as an example. Thus, we omit the index of helper $n$ and layer $\ell$ and only leave the index of time slot $t$.
Then, we have $ {\bar{U}}_Q \left( \beta_Q \right)  = \mathbb{E}_h \left [ U_{Q} \left(  \beta_Q , h_Q      \right) \right ]  =   c\left ( 2^{a\beta_Q} -1 \right ) \mathbb{E} \left [ \frac{1}{ h_Q  }\right ]  $.

The optimal transmitted bits in slot $(Q-1)$, i.e., $\gamma_{Q- 1}^{\ast}$, is given by $\gamma_{Q- 1}^{\ast} = \arg \min \limits_{\gamma_{Q- 1}}  \frac{c\left ( 2^{a\gamma_{Q- 1}} -1 \right )}{h_{Q-1} }+ c\left [ 2^{a\left ( \beta_{Q-1}-\gamma_{Q-1} \right ) } -1 \right ] \mathbb{E}\left [ \frac{1}{ h_Q  } \right ]= \frac{1}{2a}\left [ a \beta_{Q-1} + \log_2\left ( h_{Q-1} \mathbb{E}\left [ \frac{1}{ h_Q  } \right ] \right )  \right ]$, and the cost-to-go function is $U_{Q-1}\left(\beta_{Q-1},h_{Q-1}  \right)= c\left [ 2\cdot 2^{\frac{a \beta_{Q-1}}{2} } \left ( \frac{1}{h_{Q-1}  } \mathbb{E}\left [ \frac{1}{ h_Q  } \right ]   \right )^{\frac{1}{2} } -\left ( \frac{1}{h_{Q-1}}+  \mathbb{E}\left [ \frac{1}{ h_Q  } \right ]\right )  \right ] $.

    	Calculating the above repeatedly, we can obtain the optimal transmitted bits as $\gamma_{Q-i}^{\ast} = \frac{1}{\left ( i+1 \right )a } \left [ a \beta_{Q-i}+\sum_{j=0}^{i-1} \log_2\left ( h_{Q-i} \mathbb{E}\left [ \frac{1}{ h_{Q-j}  } \right ] \right )   \right ] $. Then, $U_{Q-i}^{\ast}= c \left[ \left ( i+1 \right )2^{\frac{a \beta_{Q-i}}{i+1} }
		\left ( \frac{1}{h_{Q-i}}\prod_{j=0}^{i-1 } \mathbb{E}\left [ \frac{1}{ h_{Q-j}  } \right ] \right )^{\frac{1}{i+1} } \right.$ $\left. -\left ( \frac{1}{h_{Q-i}} + \sum_{j=0}^{i-1 } \mathbb{E}\left [ \frac{1}{ h_{Q-j}  } \right ] \right ) \right]$ holds.
    Let $Q-i=1$ and $\beta_{Q-i}=D$, the proof is completed.
    
\end{appendices}	
    
    	\bibliographystyle{IEEEtran}
    \bibliography{IEEEabrv,body/reference.bib}
	
	\ifCLASSOPTIONcaptionsoff
	\newpage
	\fi

\end{document}